\newcommand{\set}[1]{\{#1\}}
\newcommand{\kord}[1]{\mathtt{#1}}
\newcommand{\kop}[1]{\;\mathtt{#1}\;}
\newcommand{\kpre}[1]{\mathtt{#1}\;}
\newcommand{\type}[1]{\mathtt{#1}}
\newcommand{\hto}{\Rightarrow}
\newcommand{\dirt}{\Delta}
\newcommand{\langName}[1]{\textsc{#1}\xspace}
\newcommand{\eff}{\langName{Eff}}
\newcommand{\ocaml}{\langName{OCaml}}
\newcommand{\coreEff}{\langName{CoreEff}}
\newcommand{\tyCtx}{\Gamma}
\newcommand{\ctx}{\Xi}
\newcommand{\ctxuse}{\ctx_{\mathrm{use}}}
\newcommand{\coercion}{\gamma}
\newcommand{\coercionN}[1]{\coercion_{#1}}
\newcommand{\coerVal}{\coercionN{\mathtt{v}}}
\newcommand{\coerComp}{\coercionN{\mathtt{c}}}
\newcommand{\coerDirt}{\coercionN{\mathtt{d}}}
\newcommand{\coerSubst}{Y}
\newcommand{\tyvar}{\alpha}
\newcommand{\dirtvar}{\delta}
\newcommand{\tycoervar}{\omega}
\newcommand{\dirtcoervar}{\varpi}
\newcommand{\op}[1][op]{{#1}}
\newcommand{\ops}{\mathcal{O}}
\newcommand{\cast}[2]{{#1} \vartriangleright {#2}}
\newcommand{\operation}[4][op]{\op[#1]~{#2}~({#3}. {#4})}
\newcommand{\fun}[1]{\kpre{fun} #1 \mapsto}
\newcommand{\keyreturn}{\kord{return}}
\newcommand{\keydo}{\kord{do}}
\newcommand{\return}[1]{\keyreturn~{#1}}                              
\newcommand{\doin}[3]{\keydo~{#1} \leftarrow {#2} ;{#3}}
\newcommand{\cupDirtBoth}{\mathrel{\cup}} 
\newcommand{\cupDirtSingle}{\mathrel{\cup\!^+}} 
\renewcommand{\setminus}{\smallsetminus}
\newcommand{\tuple}[1]{\left(#1\right)}
\newcommand{\skel}{\mathit{S}}
\newcommand{\skelvar}{\varsigma}
\newcommand{\refl}[1]{\langle {#1} \rangle}
\newcommand{\trgUnitRefl}{\refl{\tyUnit}}
\newcommand{\tyUnit}{\type{Unit}}
\newcommand{\tyBool}{\type{Bool}}
\newcommand{\tyInt}{\type{Int}}
\newcommand{\tyFloat}{\type{Float}}
\newcommand{\dirtyTo}[1]{\stackrel{#1}{\to}}
\newcommand{\longDirtyTo}[1]{\xrightarrow{#1}}
\newcommand{\tmUnit}{\kord{unit}}
\newcommand{\vty}{\vtya}                   
\newcommand{\vtya}{\mathit{A}}             
\newcommand{\cty}{\underline{\mathit{C}}}  
\newcommand{\dirty}[2]{#1~!~#2}
\newcommand{\vdashNamedD}[1]{\vdash_{{\scriptscriptstyle \hspace{-1mm}\mathtt{#1}}}}
\newcommand{\tCoerty}[3]{{#1} \vdash  {#2}: {#3}}
\newcommand{\tCoerVal}[3]{{#1} \vdash {#2} : {#3}}
\newcommand{\tCoerComp}[3]{{#1} \vdash {#2} : {#3}}
\newcommand{\tCoerDirt}[3]{{#1} \vdash {#2} : {#3}}
\newcommand{\tCoerSub}[3]{{#1} \vdash {#2} : {#3}}
\newcommand{\tVty}[3]{{#1} \vdash  {#2} : {#3}}
\newcommand{\tCty}[3]{{#1} \vdash  {#2} : {#3}}
\newcommand{\tSty}[2]{{#1} \vdash  {#2} \mathtt{~skel}}
\newcommand{\tDty}[2]{{#1} \vdash  {#2} \mathtt{~dirt}}
\newcommand{\tTyCtx}[2]{{#1} \vdash {#2} \mathtt{~tyCtx}}
\newcommand{\tCtx}[1]{\vdash {#1} \mathtt{~ctx}}
\newcommand{\tCtxn}[3][]{#1 \vdash_{{#2}} {#3} \mathtt{~ctx}}
\newcommand{\mtVal}[3]{{#1} \vdash {#2} : {#3}}
\newcommand{\mtComp}[3]{{#1} \vdash {#2} : {#3}}
\newcommand{\tVal}[4]{\mtVal{{#1}; {#2}}{#3}{#4}}
\newcommand{\tComp}[4]{\mtComp{{#1}; {#2}}{#3}{#4}}
\newcommand{\intp}{\eta}
\newcommand{\sem}[2][\xi]{\llbracket{#2} \rrbracket_{#1}}
\newcommand{\ssem}[2][\xi]{\llparenthesis{#2} \rrparenthesis_{#1}}
\newcommand{\env}{\vec{a}}
\newcommand{\monadsig}{\mathcal{S}}
\newcommand{\inject}[2][\xi]{\iota_{\sem[#1]{#2}}}
\newcommand{\inval}{\text{in}_{\keyreturn}}
\newcommand{\inop}[1][op]{\text{in}_{\mathit{#1}}}
\newcommand{\fp}[1]{\mathtt{fp}(#1)}
\newcommand{\FP}{F}
\newcommand{\validSub}[3]{\vdashNamedD{} #1 : #2 \Rightarrow #3}
\newcommand{\emptysub}{\emptyset}
\newcommand{\emptyctx}{\varepsilon}
\newcommand{\prog}[1]{\texttt{#1}}
\newcommand{\phase}{\Phi}
\newcommand{\subPhase}{\varphi}
\lstdefinestyle{eff}{%
  language=Caml,
  moredelim=*[is][\itshape]{/@}{@/},
  numbers=none,mathescape=true,showstringspaces=false,
  morekeywords={handle,handler,with,return,perform},
  keywordstyle=\bfseries,
xleftmargin=1em,basicstyle=\ttfamily\small}
\lstdefinestyle{ocaml}{%
  language=Caml,
  moredelim=*[is][\itshape]{/@}{@/},
  numbers=none,mathescape=true,showstringspaces=false,
  keywordstyle=\bfseries,
xleftmargin=1em,basicstyle=\ttfamily\small}
\newcommand{\effcode}[1]{\textcolor{purple}{\lstinline{#1}}}
\newcommand{\ocamlcode}[1]{\textcolor{teal}{\lstinline[style=ocaml]{#1}}}
\newcommand{\hilite}[1]{\tikz[baseline=(X.base)] \node[rectangle, fill=gray!50, rounded corners, inner sep=1mm] (X) {\ensuremath{#1}};}
\newcommand\hilitecast[2]{\cast{#1}{\hilite{\vphantom{!}#2}}}
\newcommand{\todo}[2][]{\relax}
\begin{document}

\title[Simplifying explicit subtyping coercions]{Simplifying explicit subtyping coercions \texorpdfstring{\\}{} in a polymorphic calculus with effects}
\thanks{This material is based upon work supported by the Air Force Office of Scientific Research under awards number FA9550-17-1-0326 and FA9550-21-1-0024.}

\author[F.~Koprivec]{Filip Koprivec\lmcsorcid{0000-0001-9624-9369}}[a,b]
\author[M.~Pretnar]{Matija Pretnar\lmcsorcid{0000-0001-7755-2303}}[a,b]

\address{University of Ljubljana, Faculty of Mathematics and Physics, Jadranska 19, SI-1000 Ljubljana, Slovenia}
\address{Institute of Mathematics, Physics and Mechanics, Jadranska 19, SI-1000 Ljubljana, Slovenia}

\email{filip.koprivec@fmf.uni-lj.si}
\email{matija.pretnar@fmf.uni-lj.si}

\begin{abstract}
  \noindent
  Algebraic effect handlers are becoming an increasingly popular way of structuring effectful computations, and their performance is often a concern.
  One of the proposed approaches towards efficient compilation is tracking effect information through explicit subtyping coercions.
  However, in the presence of polymorphism, these coercions are compiled into additional arguments of compiled functions, incurring significant overhead.

  In this paper, we present a polymorphic effectful calculus, identify simplification phases needed to reduce the number of unnecessary constraints, and prove that they preserve semantics. In addition, we implement the simplification algorithm in the \eff language and evaluate its performance on a number of benchmarks. Though we do not prove the optimality of the presented simplifications, the results show that the algorithm eliminates all coercions, resulting in code as efficient as manually monomorphised one.
\end{abstract}

\keywords{Computational effects, Optimizing compilation, Algebraic effects, Polymorphic compilation, Subtyping, Denotational semantics}


\maketitle

\section*{Introduction}
\label{sec:introduction}
Recent years have seen an increase in the number of programming languages that support algebraic effect handlers~\cite{DBLP:journals/acs/PlotkinP03,DBLP:journals/corr/PlotkinP13}.
With a widespread usage, the need for performance is becoming ever more important.
And there are two main ways of achieving it:
an efficient low-level runtime~\cite{DBLP:conf/pldi/Sivaramakrishnan21,DBLP:journals/pacmpl/MaGLZ24,10.1145/3747529,10.1145/3763177}
or an optimising compiler to a high-level language~\cite{DBLP:journals/pacmpl/SchusterBO20,DBLP:journals/pacmpl/XieL21,DBLP:journals/pacmpl/KarachaliasKPS21}, which we focus on in this paper.

Our recent work~\cite{DBLP:journals/pacmpl/KarachaliasKPS21} has shown how an optimising compiler can take code written using the full flexibility of handlers, infer precise information about which parts of it use effects and which are pure, and produce code that matches conventional handcrafted one.
The approach tracks effect information through explicit subtyping coercions~\cite{DBLP:journals/jfp/KarachaliasPSVS20}, inserted around almost every sub-term, similar to where type constraints are introduced in Hindley-Milner type inference.
In monomorphic code, most of these coercions become trivial and disappear, but for polymorphic functions, they need to be passed around as additional parameters.
As a result, the number of such parameters grows linearly with the size of the function body, rendering performance practically unusable.

In this paper, we propose an algorithm that soundly reduces (and often completely eliminates) redundant coercion parameters, leading to a performance comparable to that of monomorphic code.
We start with an overview of the approach (Section~\ref{sec:overview}) and continue with a specification of our working language (Section~\ref{sec:language}).
Afterwards, we turn to our contributions, which are:
\begin{itemize}
  \item Identifying requirements for a simplifying substitution to be correct with respect to typing (Sections~\ref{sec:substitutions} and~\ref{sec:polarity}).
  \item A number of progressive phases for simplification of constraints (Section~\ref{sec:simplifications}).
  \item A proof that all the presented simplifications preserve the denotational semantics (Section~\ref{sec:denotationalSemantics}).
  \item An implementation of the algorithm in a prototype language \eff and an evaluation of the impact it has on the code size and runtime (Section~\ref{sec:implementation}).
\end{itemize}
We conclude by discussing related and future work (Section~\ref{sec:conclusion}).
We assume readers to be familiar with algebraic effects and handlers, and refer them to a tutorial~\cite{DBLP:journals/entcs/Pretnar15} in case they are not.
Unless explicitly noted, all proofs proceed by a straightforward structural induction.
More details can be found in the first author's PhD~thesis~\cite{Koprivec_2024}.

\section{Overview}
\label{sec:overview}
\subsection{Explicit effect subtyping}

To see an example of the compilation pipeline in action, assume an operation \effcode{Random : unit -> bool} that produces random boolean values each time when triggered, and \effcode{Print : string -> unit} that prints a given string.
Then, a function that randomly prints a given string is written in \eff as:
\begin{efflisting}
let print_randomly s =
  if (perform (Random ())) then (perform (Print s)) else ()
\end{efflisting}
where the \effcode{perform} keyword triggers one of the operations.

The above function is first translated to \eff's core language \coreEff, which is a fine-grain call-by-value language~\cite{DBLP:journals/iandc/LevyPT03}, meaning it distinguishes values and computations.
For example, since \effcode{perform (Random ())} is a computation, and a conditional statement expects a boolean value, we must explicitly sequence the two computations and get:
\begin{align*}
  & \fun{(s : \type{String})}                \\
  & \quad \doin{b}{\op[Random] \, \tmUnit}{} \\
  & \quad \kpre{if} b \kop{then}             \\
  & \quad\quad \op[Print] \, s               \\
  & \quad \kop{else}                         \\
  & \quad\quad \return{\tmUnit}
\end{align*}
As effects play a key role in the language, \coreEff computations are assigned types of the form $\dirty{\vty}{\dirt}$, where $\vty$ is the type of returned values, while a \emph{dirt} $\dirt = \set{\op_1, \dots, \op_n }$ captures the set of operations that the computation may perform.

In the above example, $\op[Random] \, \tmUnit$ is assigned the type $\dirty{\tyBool}{\set{\op[Random]}}$, while $\return{\tmUnit}$ is assigned the type $\dirty{\tyUnit}{\emptyset}$, and the fully annotated type of the function is $\type{String} \to \dirty{\tyUnit}{\set{\op[Random], \op[Print]}}$, which we abbreviate as $\type{String} \longDirtyTo{\set{\op[Random], \op[Print]}} \tyUnit$.

To accommodate for the differences in assigned effects, e.g. in the two conditional branches above, \eff features subtyping, which is a more flexible approach than row-based effect systems, as the latter can suffer from the \emph{poisoning problem}~\cite{DBLP:conf/popl/WansbroughJ99}, as unification propagates effectful annotations even into pure parts of the program.

When inferring types, the inference algorithm also annotates certain terms with type coercions, which witness the compatibility of all types and effects.
A fully annotated example with coercions highlighted in \hilite{gray} is:
\begin{align*}
  & \fun{(s : \type{String})}                                                                    \\
  & \quad \doin{b}{\hilitecast{(\op[Random] \, \tmUnit)}{\dirty{\refl{\tyBool}}{\coerDirt^1}}}{} \\
  & \quad \big(\kpre{if} b \kop{then}                                                            \\
    & \quad\quad \op[Print] \, s                                                                   \\
    & \quad \kop{else}                                                                             \\
    & \quad\quad \hilitecast{(\return{\tmUnit})}{\dirty{\refl{\tyUnit}}{\coerDirt^2}}              \\
  & \quad \hilitecast{\big)}{\dirty{\refl{\tyUnit}}{\coerDirt^3}}
\end{align*}
Here $\dirty{\coerVal}{\coerDirt}$ is a computation coercion that witnesses subtyping $\dirty{\vty}{\dirt} \le \dirty{\vty'}{\dirt'}$ using $\coerVal : \vty \le \vty'$ for values and $\coerDirt : \dirt \le \dirt'$ for effects.
Note that even though effects are the only reason to include subtyping in \eff, we also need value coercions as higher-order functions need to accommodate arguments with differing effects.
In the example, $\refl{\tyBool} : \tyBool \le \tyBool$ and $\refl{\tyUnit} : \tyUnit \le \tyUnit$ are reflexive value coercions, while the dirt coercions are suitably constructed witnesses for:
\begin{align*}
  \coerDirt^1 : \set{\op[Random]} & \le \set{\op[Random], \op[Print]} \\
  \coerDirt^2 : \emptyset         & \le \set{\op[Print]}              \\
  \coerDirt^3 : \set{\op[Print]}  & \le \set{\op[Random], \op[Print]}
\end{align*}
So, $\coerDirt^2$ increases the effects of the otherwise pure second branch to the effect $\set{\op[Print]}$ of the first branch, while $\coerDirt^1$ and $\coerDirt^3$ increase the effects of both sequenced computations to their union $\set{\op[Random], \op[Print]}$.

\subsection{Compiling to a pure language}

The main advantage of using explicit witnesses for subtyping, rather than enforcing subtyping implicitly during inference, becomes evident at the compilation stage.

When targeting a functional language with no support for algebraic effects, effectful computations that yield a result of type $A$ can be represented using a user-defined type $\kpre{Comp} A$, which typically uses one of the known encodings, such as free monads, delimited control, or continuation-passing style.
While types $\dirty{\tyBool}{\set{\op[Random]}}$ or $\dirty{\tyBool}{\set{\op[Random], \op[Print]}}$ can both be compiled to $\kpre{Comp} \tyBool$, computations of type $\dirty{\tyBool}{\emptyset}$ are pure and do not require monadic encoding.
Instead, we want to compile them to (target language) computations of type $\tyBool$, gaining significant speed-up~\cite{DBLP:journals/pacmpl/KarachaliasKPS21}.

In \eff, subtyping coercions are used solely to ensure type compatibility and are discarded during evaluation.
More interestingly, though, they gain computational content during translation, as according to the above efficient translation, they are mapped to injections from $A$ to $\kpre{Comp} A$ at the appropriate points.
We do not go into the exact placement strategy here~\cite{DBLP:journals/jfp/KarachaliasPSVS20}, but it is worth noting that this process is subtle even for first-order computations, let alone higher-order ones, which is why having explicit witnesses is crucial.

For example the computation $\return{42} : \dirty{\tyInt}{\emptyset}$ should be translated simply to $42 : \tyInt$ in the target language.
On the other hand, a cast $\cast{(\return{42})}{\coerComp}$, where the coercion $\coerComp : \dirty{\tyInt}{\emptyset} \le \dirty{\tyInt}{\set{\op[Random]}}$ needs to end up in $\kpre{Comp} \tyInt$ and is thus translated as $return \, 42$.
Here, $return : A \to \kpre{Comp} A$ is the monadic injection in the target language (written as a variable to distinguish it from the keyword $\keyreturn$ in \eff).
To complicate things even further, a nested cast such as $\cast{(\cast{(\return{42})}{\coerComp})}{\coerComp'}$ where $\coerComp' : \dirty{\tyInt}{\set{\op[Random]}} \le \dirty{\tyInt}{\set{\op[Random], \op[Print]}}$ is still translated as $return \, 42$ and not as $return \, (return \, 42)$, which naive replacement of casts with $return$ would imply.

In implementation and in the following examples, we target \ocaml, as its syntactic similarity to \eff makes the translation particularly straightforward.
Even though version~5 of \ocaml introduced native support for handlers, our compilation approach remains relevant: \ocaml restricts handlers to continuations that may be resumed at most once, and many other target languages still lack native handler support.
When translating \eff computations to \ocaml, we target a monadic type \ocamlcode{'a comp}, which features functions such as \ocamlcode{performRandom : unit -> bool comp} for all the operations.
The above example would thus be translated as
\begin{ocamllisting}
let print_randomly s =
  performRandom () >>= fun b ->
    if b then performPrint s else return ()
\end{ocamllisting}
where \ocamlcode{(>>=) : 'a comp -> ('a -> 'b comp) -> 'b comp} is the monadic bind, and \ocamlcode{return : 'a -> 'a comp} is the monadic unit.

\subsection{Explicit polymorphism}
\label{sub:explicit-polymorphism}

To see the issues that polymorphism brings to our pipeline, let us ignore effect annotations for a bit.
Consider a polymorphic function that applies a given function~\effcode{f} to its argument~\effcode{x} only if a given predicate~\effcode{p} is satisfied.
In \eff, one would write it as:
\begin{efflisting}
let apply_if p f x = if p x then f x else x
\end{efflisting}
When translating the above function to \coreEff, the inference algorithm again annotates the resulting term with types and coercions, each now featuring parameters due to the presence of polymorphism:
\begin{align*}
  & \mathit{applyIf} = \fun{(p : \tyvar_1 \to \tyBool)} \keyreturn~\big( \fun{(f : \tyvar_2 \to \tyvar_3)} \keyreturn~( \fun{(x : \tyvar_4)} \\
      & \quad \doin{b}{p \, (\hilitecast{x}{\tycoervar_1})}{}                                                                                    \\
      & \quad \kpre{if} b \kop{then}                                                                                                             \\
      & \quad\quad \hilitecast{(f \, (\hilitecast{x}{\tycoervar_2}))}{\tycoervar_3}                                                              \\
      & \quad \kop{else}                                                                                                                         \\
      & \quad\quad \return(\hilitecast{x}{\tycoervar_4})                                                                                         \\
  & ) \big)
\end{align*}
In general, the type $\tyvar_4$ of $x$ does not have to be the same as the argument type $\tyvar_1$ of $p$, it only needs to be its subtype, thus it needs to be cast by some type coercion $\tycoervar_1 : \tyvar_4 \le \tyvar_1$.
A similar situation occurs with $f$ and $\tycoervar_2 : \tyvar_4 \le \tyvar_2$. Finally, the result of $f$ needs to be coerced to a supertype it shares with the other branch.
As before, we express this by introducing a new type parameter $\tyvar_5$ for the output and coercions $\tycoervar_3 : \tyvar_3 \le \tyvar_5$ and $\tycoervar_4 : \tyvar_4 \le \tyvar_5$.

The inferred type of the function is then:
\[
  \mathit{applyIf} :
  (\tyvar_1 \to \tyBool)
  \to
  (\tyvar_2 \to \tyvar_3)
  \to
  (\tyvar_4 \to \tyvar_5)
\]
under the constraints described above.
These are most easily expressed with a graph where nodes correspond to type parameters, while edges correspond to coercions between them:
\[
  \begin{tikzpicture}
    \node [shape=circle, draw] (t4) at (0.75, 0) {$\tyvar_4$};
    \node [shape=circle, draw] (t1) at (0, 2) {$\tyvar_1$};
    \node [shape=circle, draw] (t2) at (1.5, 2) {$\tyvar_2$};
    \node [shape=circle, draw] (t3) at (3, 0) {$\tyvar_3$};
    \node [shape=circle, draw] (t5) at (3, 2) {$\tyvar_5$};
    \draw[->] (t4) -- (t1) node[midway, left] {$\tycoervar_1$};
    \draw[->] (t4) -- (t2) node[midway, left] {$\tycoervar_2$};
    \draw[->] (t3) -- (t5) node[midway, left] {$\tycoervar_3$};
    \draw[->] (t4) -- (t5) node[midway, left] {$\tycoervar_4$};
  \end{tikzpicture}
\]

When translating into OCaml, we now also need to take coercion parameters into account.
Since their exact instantiations are known only at call-sites, we need to represent each parameter with an additional functional argument, so the (type-annotated) translation is:
\begin{ocamllisting}
let apply_if (w1 : 'a4->'a1) (w2 : 'a4->'a2) (w3 : 'a3->'a5)
  (w4 : 'a4->'a5) (p : 'a1->bool) (f : 'a2->'a3) (x : 'a4) : 'a5 =

  p (x |> w1) >>= fun b ->
    if b then (f (x |> w2)) |> coer_comp w3
    else return (x |> w4)
\end{ocamllisting}
where \ocamlcode{coer_comp : ('a -> 'b) -> ('a comp -> 'b comp)} lifts value coercions to computations,
and \ocamlcode{(|>) : 'a -> ('a -> 'b) -> 'b} is the reverse application operator.

Each additional argument not only increases the function size, but also significantly impacts the runtime, especially as it soon prevents the \ocaml compiler from passing function arguments in processor registers.
In fact, the situation is even worse, as the above example was simplified for the presentation, while the actual inference algorithm uses additional intermediate type parameters leading to further arguments.
A quick analysis of the \eff standard library shows that the number of explicit constraints produced by the current algorithm is overwhelming.
The standard \prog{unzip} function needs 23 explicit coercions, while a straightforward implementation of the quicksort algorithm produces \ocaml code with just under 200 explicit coercion parameters.

\subsection{Type constraint simplification}
\label{sec:overview-simplification}

Explicit type coercions are relevant during compilation, optimization and during execution, and therefore cannot be simply discarded, but it is obvious we want to reduce them to a minimum.
One simple case of a simplification is when the coercions form a cycle in the graph, which often happens with recursive functions.
For example, if we have three coercions:
\[
  \begin{tikzpicture}
    \node [shape=circle, draw] (t1) at (0, 0) {$\tyvar_1$};
    \node [shape=circle, draw] (t2) at (2, 0) {$\tyvar_2$};
    \node [shape=circle, draw] (t3) at (4, 0) {$\tyvar_3$};
    \draw[->] (t1) -- (t2);
    \draw[->] (t2) -- (t3);
    \draw[->] (t3) edge[bend right] (t1);
  \end{tikzpicture}
\]
all the parameters must be the same, and we can replace them with a single parameter $\tyvar$.
Similarly, we can replace all coercions with the reflexive coercion $\refl{\tyvar} : \tyvar \le \tyvar$, essentially removing any casts in which they appear.

In a more general case, we have to be more cautious, though we can achieve similar results.
To see how constraints impact the type of the program, we must first examine the positions at which parameters occur in types.
For the sake of exposition, assume that our calculus features types $\tyInt \le \tyFloat$ (we are going to omit writing the coercion if we shall be interested only in the subtyping relation between the types).

Recall that subtyping of function types is \emph{contravariant} in their domain.
We have e.g. $(\tyBool \to \tyInt) \le (\tyBool \to \tyFloat)$ because every function returning integers can also act as one returning floats.
But on the other hand, we have $(\tyFloat \to \tyBool) \le (\tyInt \to \tyBool)$ since every function accepting floats can also accept integers, and not the other way around.

Increasing an instantiation of a type parameter $\tyvar$ in a covariant position (e.g. in $\tyBool \to \tyvar$, but also in $(\tyvar \to \tyBool) \to \tyBool$ where contravariance acts twice) thus increases the resulting instantiated type, while increasing an instantiation of a parameter in a contravariant position does the exact opposite.
To track the impact of parameters on a type, we assign them a \emph{polarity}: \emph{positive} for those that appear covariantly in the type, and \emph{negative} for those that appear contravariantly.

The running example above annotated with polarities is:
\[
  \mathit{applyIf} :
  (\tyvar_1^+ \to \tyBool)
  \to
  (\tyvar_2^+ \to \tyvar_3^-)
  \to
  (\tyvar_4^- \to \tyvar_5^+)
\]
\[
  \begin{tikzpicture}
    \node [shape=circle, draw] (t4) at (0.75, 0) {$\tyvar_4^-$};
    \node [shape=circle, draw] (t1) at (0, 2) {$\tyvar_1^+$};
    \node [shape=circle, draw] (t2) at (1.5, 2) {$\tyvar_2^+$};
    \node [shape=circle, draw] (t3) at (3, 0) {$\tyvar_3^-$};
    \node [shape=circle, draw] (t5) at (3, 2) {$\tyvar_5^+$};
    \draw[->] (t4) -- (t1) node[midway, left] {$\tycoervar_1$};
    \draw[->] (t4) -- (t2) node[midway, left] {$\tycoervar_2$};
    \draw[->] (t3) -- (t5) node[midway, left] {$\tycoervar_3$};
    \draw[->] (t4) -- (t5) node[midway, left] {$\tycoervar_4$};
  \end{tikzpicture}
\]
As in this example, it is often the case that each parameter is either positive or negative and that constraints are of the form $X^- \le Y^+$, however that does not hold in general!

Once parameters are assigned polarities, we can show that we lose no generality if we replace all positive parameters with their unique lower bounds, when they exist (and conversely for negative parameters and unique upper bounds).
For example, we can safely (proving that is one of the main results of this paper) replace our function with
\[
  \mathit{applyIf}' :
  (\tyvar_4^\pm \to \tyBool)
  \to
  (\tyvar_2^+ \to \tyvar_3^-)
  \to
  (\tyvar_4^\pm \to \tyvar_5^+)
\]
obtained by coalescing the positive parameter $\tyvar_1^+$ with its unique lower bound $\tyvar_4^-$ (which now gains positive polarity) and replacing all occurrences of $\tycoervar_1$ in the body with $\refl{\tyvar_4^-}$.
On the surface, $\mathit{applyIf}'$ appears less flexible than $\mathit{applyIf}$, but this can be amended at the call site.

Indeed, any application $\mathit{applyIf} \, p \, f \, v$, where we instantiated all type parameters $\tyvar_i$ with some types $\vty_i$ and all coercion parameters $\tycoervar_j$ with some coercions $\coercion_j$, can be replaced by an application $(\cast{\mathit{applyIf}'}{\coercion'}) \, p \, f \, v$ of a suitably coerced $\mathit{applyIf}'$, where $\coercion'$ is a witness for
\[
  \big((\vty_4 \to \tyBool)
    \to
    (\vty_2 \to \vty_3)
    \to
  (\vty_4 \to \vty_5)\big)
  \le
  \big((\vty_1 \to \tyBool)
    \to
    (\vty_2 \to \vty_3)
    \to
  (\vty_4 \to \vty_5)\big)
\]
which exists since $\coercion_1$ witnesses $\vty_4 \le \vty_1$ and $\tyvar_1^+$ was positive.

One could argue that we have replaced one coercion with an even more involved one, but this can often be optimized away due to the extra context information available at the call site.
And even if we cannot completely remove $\coercion'$, we have reduced the number of coercion parameters, which has a much more significant impact on performance.

We can continue the simplification by removing $\tycoervar_2$ and $\tycoervar_3$. We pick a particular order, though any other would lead to similar results:
\[
  \begin{tikzpicture}[baseline={(current bounding box.center)}]
    \node [shape=circle, draw] (t4) at (0.75, 0) {$\tyvar_4^-$};
    \node [shape=circle, draw] (t1) at (0, 2) {$\tyvar_1^+$};
    \node [shape=circle, draw] (t2) at (1.5, 2) {$\tyvar_2^+$};
    \node [shape=circle, draw] (t3) at (3, 0) {$\tyvar_3^-$};
    \node [shape=circle, draw] (t5) at (3, 2) {$\tyvar_5^+$};
    \draw[->, dotted] (t4) -- (t1) node[midway, left] {$\tycoervar_1$};
    \draw[->] (t4) -- (t2) node[midway, left] {$\tycoervar_2$};
    \draw[->] (t3) -- (t5) node[midway, left] {$\tycoervar_3$};
    \draw[->] (t4) -- (t5) node[midway, left] {$\tycoervar_4$};
  \end{tikzpicture}
  \quad\scalebox{1.5}{$\leadsto$}\quad
  \begin{tikzpicture}[baseline={(current bounding box.center)}]
    \node [shape=circle, draw] (t4) at (1.5, 0) {$\tyvar_4^\pm$};
    \node [shape=circle, draw] (t2) at (1.5, 2) {$\tyvar_2^+$};
    \node [shape=circle, draw] (t3) at (3, 0) {$\tyvar_3^-$};
    \node [shape=circle, draw] (t5) at (3, 2) {$\tyvar_5^+$};
    \draw[->, dotted] (t4) -- (t2) node[midway, left] {$\tycoervar_2$};
    \draw[->] (t3) -- (t5) node[midway, left] {$\tycoervar_3$};
    \draw[->] (t4) -- (t5) node[midway, left] {$\tycoervar_4$};
  \end{tikzpicture}
  \quad\scalebox{1.5}{$\leadsto$}\quad
  \begin{tikzpicture}[baseline={(current bounding box.center)}]
    \node [shape=circle, draw] (t4) at (1.5, 0) {$\tyvar_4^\pm$};
    \node [shape=circle, draw] (t3) at (3, 0) {$\tyvar_3^-$};
    \node [shape=circle, draw] (t5) at (3, 2) {$\tyvar_5^+$};
    \draw[->, dotted] (t3) -- (t5) node[midway, left] {$\tycoervar_3$};
    \draw[->] (t4) -- (t5) node[midway, left] {$\tycoervar_4$};
  \end{tikzpicture}
  \quad\scalebox{1.5}{$\leadsto$}\quad
  \begin{tikzpicture}[baseline={(current bounding box.center)}]
    \node [shape=circle, draw] (t4) at (0, 0) {$\tyvar_4^\pm$};
    \node [shape=circle, draw] (t5) at (0, 2) {$\tyvar_5^\pm$};
    \draw[->] (t4) -- (t5) node[midway, right] {$\tycoervar_4$};
  \end{tikzpicture}
\]
ending up with the final type
\[
  (\tyvar_4^\pm \to \tyBool)
  \to
  (\tyvar_4^\pm \to \tyvar_5^\pm)
  \to
  (\tyvar_4^\pm \to \tyvar_5^\pm)
\]
and the compiled \ocaml function with a single remaining coercion parameter:
\begin{ocamllisting}
let apply_if w4 p f x =
  p x >>= fun b -> if b then f x else return (x |> w4)
\end{ocamllisting}
Note that we cannot simplify the type any further because $\alpha_4$ is positive and $\alpha_5$ is negative.

Again, even though the simplified type seems more restricted than the original one, polarity of parameters again ensures that for any instantiation $\{ (\tyvar_i \mapsto \vty_i)_i, (\tycoervar_j \mapsto \coercion_j)_j \}$, we have a witness
\[
  \big((\vty_4 \to \tyBool)
    \to
    (\vty_4 \to \vty_5)
    \to
  (\vty_4 \to \vty_5)\big)
  \le
  \big((\vty_1 \to \tyBool)
    \to
    (\vty_2 \to \vty_3)
    \to
  (\vty_4 \to \vty_5)\big)
\]
constructed using coercions $\coercion_1 : \vty_4 \le \vty_1$, $\coercion_2 : \vty_4 \le \vty_2$, and $\coercion_3 : \vty_3 \le \vty_5$.

In general, the order in which we apply the simplifications affects possible future optimisations, however this is in practice more an exception than a rule.
Our implementation, which chooses coercions in an arbitrary order, compiles the quicksort algorithm without any of the hundreds of additional parameters, and the same holds for the standard library of \eff, which contains around 50 of the most frequently used polymorphic functions (see~Section~\ref{sub:results}).

\subsection{Effect annotations}
\label{sub:effect-annotations}

Coming back to effect information, the annotated version of the running example becomes:
\begin{align*}
  & \fun{(p : \tyvar_1 \dirtyTo{\hilite{\dirtvar_1}} \tyBool)} \keyreturn~(                                             \\
    & \quad \fun{(f : \tyvar_2 \dirtyTo{\hilite{\dirtvar_2}} \tyvar_3)} \keyreturn~(                                      \\
      & \quad\quad \fun{(x : \tyvar_4)} (                                                                                   \\
        & \quad\quad\quad \doin{b}{\cast{(p \, (\cast{x}{\tycoervar_1}))}{\hilite{\dirty{\refl{\tyBool}}{\dirtcoervar_1}}}}{} \\
        & \quad\quad\quad \kpre{if} b \kop{then}                                                                              \\
        & \quad\quad\quad\quad \hilitecast{(f \, (\cast{x}{\tycoervar_2}))}{\dirty{\tycoervar_3}{\dirtcoervar_2}}             \\
        & \quad\quad\quad \kop{else}                                                                                          \\
        & \quad\quad\quad\quad \hilitecast{(\return{x})}{\dirty{\tycoervar_4}{\emptyset_{\dirtvar_3}}}                        \\
  & ) ) )
\end{align*}
Dirt parameters~$\dirtvar_1$ and $\dirtvar_2$ capture sets of effects performed by $p$ and $f$, respectively.
We do not need the effects of $p$ and $f$ to be the same, we just need the effect of the whole computation to cover both.
For that reason, we extend subtyping to dirt and capture the above by introducing a third dirt parameter $\dirtvar_3$ and two dirt coercion parameters $\dirtcoervar_1 : \dirtvar_1 \le \dirtvar_3$ and $\dirtcoervar_2 : \dirtvar_2 \le \dirtvar_3$.

In the example above, the first coercion casts the application of $p$ to $x$.
In it, we use $\dirtcoervar_1$ to cast the dirt, while for the type $\tyBool$ of boolean values, the only option is the reflexive coercion $\refl{\tyBool} : \tyBool \le \tyBool$.
Next, we cast the application of $f$ to $x$ by extending the value coercion $\tycoervar_3$ from before with the dirt coercion $\dirtcoervar_2$.
Finally, in the last cast, returned values cause no effects, so we need to cast the empty dirt to the common dirt $\dirtvar_3$, which we do using the empty coercion $\emptyset_{\dirtvar_3} : \emptyset \le \dirtvar_3$.

Even though dirt coercions only inform the monadic structure and get erased when translating to \ocaml, they are still present in the intermediate representation and other potential compilation targets, so we want to simplify them.
This proceeds similarly as for type coercions.
Incoming dirt parameters $\dirtvar_1$ and $\dirtvar_2$ receive negative polarity and outgoing dirt $\dirtvar_3$ receives the positive one.
Thus, the constraint graph is of the form:
\[
  \begin{tikzpicture}
    \node [shape=circle, draw] (d1) at (6, 0) {$\dirtvar_1^-$};
    \node [shape=circle, draw] (d2) at (8, 0) {$\dirtvar_2^-$};
    \node [shape=circle, draw] (d3) at (7, 2) {$\dirtvar_3^+$};
    \draw[->] (d1) -- (d3) node[midway, left] {$\dirtcoervar_1$};
    \draw[->] (d2) -- (d3) node[midway, right] {$\dirtcoervar_2$};
  \end{tikzpicture}
\]
which we can simplify by replacing both $\dirtvar_1$ and $\dirtvar_2$ with $\dirtvar_3$ and both $\dirtcoervar_1$ and $\dirtcoervar_2$ with the reflexive coercion $\refl{\dirtvar_3}$.

With some renaming and writing $A \dirtyTo{\emptyset} B$ as $A \to B$, the final simplified type of \effcode{apply_if} is thus:
\[
  (\alpha \dirtyTo{\dirtvar} \tyBool)
  \to
  (\alpha \dirtyTo{\dirtvar} \beta)
  \to
  \alpha \dirtyTo{\dirtvar} \beta
\]
under the constraint $\alpha \le \beta$.
It is worth reiterating the simplified type is equivalent to the original type
\[
  (\tyvar_1 \dirtyTo{\dirtvar_1} \tyBool)
  \to
  (\tyvar_2 \dirtyTo{\dirtvar_2} \tyvar_3)
  \to
  (\tyvar_4 \dirtyTo{\dirtvar_3} \tyvar_5)
\]
under the constraints $\tyvar_3 \le \tyvar_5, \tyvar_4 \le \tyvar_1, \tyvar_4 \le \tyvar_2, \tyvar_4 \le \tyvar_5, \dirtvar_1 \le \dirtvar_3, \dirtvar_2 \le \dirtvar_3$.

\subsection{Operations}

To account for both operations and dirt parameters, the general form of a dirt is $\ops \cup \dirtvar$, where $\dirtvar$ is a \emph{row} parameter capturing all operations not listed in $\ops$.
For example, here is a program that applies a given function to its argument with some probability:
\begin{efflisting}
let apply_randomly f x =
  if (perform Random) then f x else x
\end{efflisting}
Simplification of type constraints removes all of them and the type of \effcode{apply_randomly} is:
\[
  (\tyvar_1 \dirtyTo{\dirtvar_1} \tyvar_2) \to \tyvar_1 \dirtyTo{\dirtvar_2} \tyvar_2
\]
under the constraints $\dirtvar_1 \le \dirtvar_2$ and $\set{\op[Random]} \le \dirtvar_2$.
The latter constraint shows that $\dirtvar_2$ needs to contain $\op[Random]$, so we replace it with $\set{\op[Random]} \cup \dirtvar_3$ for some fresh dirt parameter $\dirtvar_3$, ending up with a single constraint of the form
\[
  \dirtvar_1 \le \set{\op[Random]} \cup \dirtvar_3
\]
Furthermore, $\dirtvar_1$ is a negative parameter and can be safely increased to $\set{\op[Random]} \cup \dirtvar_3$, leading to the final simplified type
\[
  (\alpha \longDirtyTo{\set{\op[Random]} \cup \dirtvar_3} \beta) \to \alpha \longDirtyTo{\set{\op[Random]} \cup \dirtvar_3} \beta
\]
One may worry about possible inefficiency, since the functional argument could have been pure if $\dirtvar_1$ were set to $\emptyset$.
But now, its expanded dirt always contains $\op[Random]$ and is thus impure.
In our current compiler, we treat all dirt parameters as potentially impure, so further increasing $\dirtvar_1$ has no effect. However, this is something to keep in mind when specialising effect-polymorphic functions into pure and impure variants (see Section~\ref{sub:future-work}).

While adding handlers provides additional flexibility to the programmer, it does not change the effect system significantly~\cite{DBLP:journals/corr/BauerP13}.
To incorporate first-class handlers, we introduce a new type constructor $\dirty{\vty_1}{\dirt_1} \hto \dirty{\vty_2}{\dirt_2}$, which handles a computation of type $\dirty{\vty_1}{\dirt_1}$ into one of type $\dirty{\vty_2}{\dirt_2}$.
For example, a handler that resolves all \effcode{Random} operations to \effcode{true} and counts the number of \effcode{Random} operations performed can be written as:
\begin{efflisting}
handler
| val x -> (x, 0)
| perform Random k -> let (res, cnt) = k true in (res, cnt + 1)
\end{efflisting}
The type of this handler is
\[
  \dirty{\tyvar}{(\set{\op[Random]} \cup \dirtvar)} \hto \dirty{(\tyvar \times \tyInt)}{\dirtvar}
\]
So, it takes a probabilistic computation potentially performing $\op[Random]$ and some other effects $\dirtvar$ and producing values of type $\tyvar$, and handles it into a computation that performs only $\dirtvar$, and returns an integer in addition to the original result.

All of the above is more or less orthogonal to the rest of the work, so we will not go into further details.

\section{Language}
\label{sec:language}
Let us now turn to a formal development of the above ideas.
As our working language, we take \coreEff, a simple fine-grain call-by-value~\cite{DBLP:journals/iandc/LevyPT03} calculus with effects and explicit coercions.
In contrast to our previous work~\cite{DBLP:journals/jfp/KarachaliasPSVS20}, which used impredicative polymorphism, we use predicative polymorphism as it is simpler and sufficient for our use, especially as \eff generalizes only top-level let bindings~\cite{DBLP:conf/tldi/VytiniotisJS10}.

\subsection{Types}

As we are working with a fine-grain call-by-value calculus, we distinguish between value and computation types, defined as:
\[
  \begin{array}{r@{\quad}r@{~}c@{~}l}
    \text{dirt}             & \dirt & ::= &
    \dirtvar                                                       \mid
    \emptyset                                                      \mid
    \set{\op} \cup \dirt                    \\\\
    \text{value type}       & A, B  & ::= &
    \tyvar                                                         \mid
    \tyUnit                                                         \mid
    \vty \to \cty                           \\

    \text{computation type} & \cty  & ::= &
    \dirty{\vty}{\dirt}                     \\\\
    \text{skeleton}         & \skel & ::= &
    \skelvar                                                \mid
    \tyUnit                                              \mid
    \skel_1 \to \skel_2
  \end{array}
\]
Dirt is, as before, a sequence of operations followed by either a row dirt parameter or the empty dirt.
Note that the same set of operations~$\ops$ can be represented in multiple ways, all of which differ only in the ordering of operations.
In this paper, we consider all such representations equivalent and treat dirt as being of the form $\ops \cup \dirtvar$ in case the row parameter is present, or $\ops$ (i.e. $\ops \cup \emptyset$) if it is not.
The prototype implementation follows this convention as well, and represents dirt as an unordered set of operations and an optional dirt parameter, rather than an inductive sequence as defined by the above grammar.

For simplicity, value types are limited to type parameters, the unit type, and function types, which consist of an argument (value) type and a return (computation) type.
Since functions take value arguments and perform computations, their type consists of a value type domain and a computation type codomain.

Computation types are a combination of a type of returned values with an associated dirt representing the operations that may be called.
Importantly, the dirt conservatively over-approximates the set of operations that may be called.

For the rest of the paper, we assume a global signature
\[
  \Sigma = \set{\op_1 : A_1 \to B_1, \op_2: A_2 \to B_2, \dots, \op_n : A_n \to B_n }
\]
which assigns monomorphic types $A_i$ and $B_i$ to each operation $\op_i$.
Note that the arrow~$\to$ in the operation signature is just the traditional syntax separating two value types, and should not be confused with a function type where the right-hand side is a computation type.
To obtain a simpler denotational semantics (see Section~\ref{sub:effectful-semantics}), we assume the type $B_i$ to be first-order, i.e. not a function type.
Given that a similar but more involved semantics can be given for arbitrary types of returned values~\cite{DBLP:journals/corr/BauerP13}, we see no reason why our results would not hold in general.

In addition to types, we also introduce \emph{skeletons}~\cite{DBLP:journals/jfp/KarachaliasPSVS20}.
Our subtyping is concerned only with effect information and its direct impact on values via function types, but not with other forms of value subsumption, such as class inheritance or width and depth subtyping of record types.
For that reason, one value can be a subtype of another only if their type constructors match, sometimes also dubbed \emph{structural subtyping}~\cite{DBLP:conf/icfp/Pottier96,DBLP:journals/iandc/Pottier01,DBLP:conf/aplas/Simonet03} (not to be confused with structural subtyping as the opposite of the nominal one).
Skeletons concretely capture this shared structure and they behave like value types with all effect information erased.
For example, both $\tyUnit \to \dirty{\tyUnit}{\set{\op}}$ and $\tyUnit \to \dirty{\tyUnit}{\set{\op, \op[op']}}$ share the same skeleton $\tyUnit \to \tyUnit$.

Skeletons are a technical tool which becomes particularly useful in the presence of type parameters as instead of additionally tracking equivalence classes $\tyvar_1 \approx \dots \approx \tyvar_n$ of comparable type parameters~\cite{DBLP:conf/aplas/Simonet03,DBLP:journals/corr/Pretnar13}, one can simply assign them all the same skeleton~$\skel$ in some type parameter context $\tyvar_1 : \skel, \dots, \tyvar_n : \skel$.
In general, type parameter contexts and their skeleton and dirt counterparts (which are just lists of parameters), are given by the following grammar:
\[
  \begin{array}{r@{\quad}r@{~}c@{~}l}
    \text{skeleton parameter context} & \ctx_s & ::= & \emptyctx \mid \ctx_s, \skelvar         \\
    \text{dirt parameter context}     & \ctx_d & ::= & \emptyctx \mid \ctx_d, \dirtvar         \\
    \text{type parameter context}     & \ctx_t & ::= & \emptyctx \mid \ctx_t, (\tyvar : \skel)
  \end{array}
\]
Well-formedness judgements of contexts, skeletons and dirts is routine and given by rules in Figure~\ref{fig:wellformedness-boring} in Appendix~\ref{sec:wellFormedness}, while the slightly more interesting rules for types are given in Figure~\ref{fig:wellformedness-types}.

\begin{figure}[t]
  \begin{mathpar}
    \hilite{\tVty{\ctx_s;\ctx_d;\ctx_t}{\vty}{\skel}}\ \textrm{assuming} \ \tCtxn{s}{\ctx_s} \ \textrm{and} \ \tSty{\ctx_s}{\skel}\hfill
    \\
    \infer{(\tyvar : \skel) \in \ctx_t}{\tVty{\ctx_s;\ctx_d;\ctx_t}{\alpha}{\skel}}

    \infer{\tVty{\ctx_s;\ctx_d;\ctx_t}{\vty}{\skel_1} \\ \tCty{\ctx_s;\ctx_d;\ctx_t}{\cty}{\skel_2}}{\tVty{\ctx_s;\ctx_d;\ctx_t}{\vty \to \cty}{\skel_1 \to \skel_2}}

    \infer{
    }{\tVty{\ctx_s;\ctx_d;\ctx_t}{\tyUnit}{\tyUnit}}
    \\ 
    \hilite{\tCty{\ctx_s;\ctx_d;\ctx_t}{\cty}{\skel}}\ \textrm{assuming} \ \tCtxn{s}{\ctx_s} \ \textrm{and} \ \tSty{\ctx_s}{\skel}\hfill \\
    \infer{
      \tVty{\ctx_s;\ctx_d;\ctx_t}{\vty}{\skel} \\ \tDty{\ctx_d}{\dirt}
    }{
      \tCty{\ctx_s;\ctx_d;\ctx_t}{\dirty{\vty}{\dirt}}{\skel}
    }
  \end{mathpar}
  \caption{Well-formedness rules for types}
  \label{fig:wellformedness-types}
\end{figure}

\subsection{Coercions}

Coercions, denoted by $\coercion{}$ are explicit witnesses for the subtyping relation.
Similarly to types, coercions are split into ones for dirt, values, and computations.
\[
  \begin{array}{r@{\quad}r@{~}c@{~}l}
    \text{dirt coercion}        & \coerDirt & ::= &
    \dirtcoervar
    \mid %
    \coerDirt^2 \circ \coerDirt^1
    \mid 
    \refl{\dirtvar}
    \mid 
    \refl{\emptyset}
    \mid 
    \emptyset_\dirtvar
    \mid 
    \set{\op} \cupDirtBoth \coerDirt
    \mid 
    \set{\op} \cupDirtSingle \coerDirt
    \\ 
    \text{value coercion}       & \coerVal  & ::= &
    \tycoervar
    \mid
    \coerVal^2 \circ \coerVal^1
    \mid 
    \refl{\tyvar}
    \mid 
    \trgUnitRefl
    \mid 
    \coerVal \to \coerComp
    \\ 
    \text{computation coercion} & \coerComp & ::= &
    \dirty{\coerVal}{\coerDirt}                      
  \end{array}
\]
Dirt coercions can be either parameters, compositions, reflexive coercions for an arbitrary dirt parameter, reflexive empty dirt set coercion, coercion asserting that empty set is below any dirt parameter, or two extensions of coercions with operations.
The first one asserts that the same operation can be added on both sides, while the second one asserts that the right hand side can be safely increased.
Similarly to dirt, dirt coercions can be given with differing order of operations, which we do not distinguish.

Next, value coercions are again parameters, compositions, reflexive coercions for type parameters and the unit type, and function coercions~$\coerVal \to \coerComp$, which use $\coerVal$ to cast the argument and $\coerComp$ to cast the result.
Computation coercions are just a combination of value coercion and dirt coercion.

Coercions require us to introduce two additional kinds of parameter contexts:
\[
  \begin{array}{r@{\quad}r@{~}c@{~}l}
    \text{dirt coercion parameter context} & \ctx_{dc} & ::= & \emptyctx \mid \ctx_{dc}, (\dirtcoervar : \dirt_1 \le \dirt_2) \\
    \text{type coercion parameter context} & \ctx_{tc} & ::= & \emptyctx \mid \ctx_{tc}, (\tycoervar : \vty_1 \le \vty_2)
  \end{array}
\]
which are well-formed according to rules in Figure~\ref{fig:wellformedness-coercion-contexts} in Appendix~\ref{sec:wellFormedness}.

For readability, we shall write the quintuple $(\ctx_s, \ctx_d, \ctx_t, \ctx_{dc}, \ctx_{tc})$ as a single, flat \emph{parameter context}~$\ctx$, formed as:
\[
  \infer{
    \tCtxn{s}{\ctx_s} \\
    \tCtxn{d}{\ctx_d} \\
    \tCtxn[\ctx_s]{t}{\ctx_t} \\
    \tCtxn[\ctx_d]{dc}{\ctx_{dc}} \\
    \tCtxn[\ctx_s;\ctx_d;\ctx_t]{tc}{\ctx_{tc}}
  }{
    \tCtx{(\ctx_s, \ctx_d, \ctx_t, \ctx_{dc}, \ctx_{tc})}
  }
\]
We shall use the joint context even in judgements that require only particular sub-contexts.
For example, we shall write $\tVty{\ctx}{\vty}{\skel}$ instead of the more verbose $\tVty{\ctx_s;\ctx_d;\ctx_t}{\vty}{\skel}$ when $\ctx = (\ctx_s, \ctx_d, \ctx_t, \ctx_{dc}, \ctx_{tc})$.

Using the joint contexts, the well-formedness rules for coercions are given in Figure~\ref{fig:coercion-wellformedness}.
Note that the subtyping on value and computation types is structural, i.e. both sides need to have the same skeleton.

\begin{figure}[b]
  \begin{mathpar}
    \hilite{\tCoerDirt{\ctx}{\coerDirt}{\dirt \leq \dirt'}}\ \textrm{assuming} \ \tCtx{\ctx} \ \textrm{and} \ \tDty{\ctx}{\dirt} \ \textrm{and} \ \tDty{\ctx}{\dirt'} \hfill \\
    \infer{
      (\dirtcoervar : \dirt \leq \dirt') \in \ctx
    }{
      \tCoerDirt{\ctx}{\dirtcoervar}{\dirt \leq \dirt'}
    }

    \infer{
      \tCoerDirt{\ctx}{\coerDirt^1}{\dirt \leq \dirt'} \and
      \tCoerDirt{\ctx}{\coerDirt^2}{\dirt' \leq \dirt''}
    }{
      \tCoerDirt{\ctx}{\coerDirt^2 \circ \coerDirt^1}{\dirt \leq \dirt''}
    }

    \infer{
      \tDty{\ctx}{\dirtvar}
    }{\tCoerDirt{\ctx}{\refl{\dirtvar}}{\dirtvar \leq \dirtvar}}

    \infer{
    }
    {\tCoerDirt{\ctx}{\refl{\emptyset}}{\emptyset \leq \emptyset}}

    \infer{
    }
    {\tCoerDirt{\ctx}{\emptyset_\dirtvar}{\emptyset \leq \dirtvar}}

    \infer{
      \op \in \Sigma \\ \tCoerDirt{\ctx}{\coerDirt}{\dirt \leq \dirt'}
    }{\tCoerDirt{\ctx}{\set{\op} \cupDirtBoth \coerDirt}{\set{\op} \cup \dirt \leq \set{\op} \cup \dirt'}}

    \infer{
      \op \in \Sigma \\ \tCoerDirt{\ctx}{\coerDirt}{\dirt \leq \dirt'}
    }{\tCoerDirt{\ctx}{\set{\op} \cupDirtSingle \coerDirt}{\dirt \leq \set{\op} \cup \dirt'}} \\
    \hilite{\tCoerVal{\ctx}{\coerVal}{\vty \leq \vty'}}\ \textrm{assuming} \ \tCtx{\ctx} \ \textrm{and} \ \tSty{\ctx}{\skel} \ \textrm{and}  \ \tVty{\ctx}{\vty}{\skel} \ \textrm{and} \ \tVty{\ctx}{\vty'}{\skel} \hfill \\
    \infer{(\tycoervar : \vty \leq \vty') \in \ctx}{\tCoerVal{\ctx}{\tycoervar}{\vty \leq \vty'}}

    \infer{
      \tCoerVal{\ctx}{\coerVal^1}{\vty \leq \vty'} \and
      \tCoerVal{\ctx}{\coerVal^2}{\vty' \leq \vty''}
    }{
      \tCoerVal{\ctx}{\coerVal^2 \circ \coerVal^1}{\vty \leq \vty''}
    }

    \infer{\tVty{\ctx}{\alpha}{\skel}}{\tCoerVal{\ctx}{\refl{\tyvar}}{\tyvar \leq \tyvar}} \\

    \infer{{}}{\tCoerVal{\ctx}{\refl{\tyUnit}}{\tyUnit \leq \tyUnit}}

    \infer{
      \tCoerVal{\ctx}{\coerVal}{\vty' \leq \vty} \\ \tCoerComp{\ctx}{\coerComp}{\cty \leq \cty'}
    }{
      \tCoerVal{\ctx}{\coerVal \to \coerComp}{(\vty \to \cty) \leq  (\vty' \to \cty')}
    } \\
    \hilite{\tCoerComp{\ctx}{\coerComp}{\cty \leq \cty'}}\ \textrm{assuming} \ \tCtx{\ctx} \ \textrm{and} \ \tSty{\ctx}{\skel} \ \textrm{and} \ \tCty{\ctx}{\cty}{\skel} \ \textrm{and} \ \tCty{\ctx}{\cty'}{\skel} \hfill \\
    \infer{
      \tCoerVal{\ctx}{\coerVal}{\vty \leq \vty'} \\ \tCoerDirt{\ctx}{\coerDirt}{\dirt \leq \dirt'}
    }{
      \tCoerComp{\ctx}{\dirty{\coerVal}{\coerDirt}}{(\dirty{\vty}{\dirt}) \leq (\dirty{\vty'}{\dirt'})}
    }
  \end{mathpar}
  \caption{Coercion well-formedness rules}
  \label{fig:coercion-wellformedness}
\end{figure}

Even though we assume empty coercions only for the empty dirt and dirt parameters, they are admissible for any well-formed $\tDty{\ctx}{\dirt}$.
For those, we can define the empty coercion $\tCoerDirt{\ctx}{\emptyset_\dirt}{\emptyset \le \dirt}$ by
\[
  \emptyset_\emptyset = \refl{\emptyset}
  \qquad
  \emptyset_{\set{\op} \cup \dirt} = \set{\op} \cupDirtBoth \emptyset_\dirt
\]
Similarly, reflexive coercions are admissible for an arbitrary dirt $\tCoerDirt{\ctx}{\refl{\dirt}}{\dirt \le \dirt}$ by:
\[
  \refl{\{\op\} \cup \dirt } = \set{\op} \cupDirtBoth \refl{\dirt}
\]
and for an arbitrary value or computation type by:
\[
  \refl{\vty \to \cty} = \refl{\vty} \to \refl{\cty}
  \qquad
  \refl{\dirty{\vty}{\dirt}} = \dirty{\refl{\vty}}{\refl{\dirt}}
\]


Due to composition, we have multiple (semantically equivalent) coercions between two types, for example we can always compose a coercion with the reflexive one.
In our previous work~\cite{DBLP:journals/jfp/KarachaliasPSVS20}, we preferred to keep the calculus minimal, but Proposition~\ref{prop:substitutionTransitivity} requires compositions of arbitrary coercions, including parametric ones, therefore we add composition as an additional construct.

\subsection{Terms}

Last, and in fact least, we define the terms of the language.
It turns out that it does not matter what terms we choose as long as typing judgements are preserved by substitutions (Theorem~\ref{thm:substitutionPreservesJudgements}) and their denotational semantics respects subtyping (Theorem~\ref{th:sqCommute}).
The values are usual: variable, unit, function and also an explicit cast of a value.
Computations are: return that lifts a value to a computation, operation calls, sequencing, application, and explicit casts of computations.

\[
  \begin{array}{r@{\quad}r@{~}c@{~}l}
    \text{typing context} & \tyCtx & ::= &
    \emptyctx \mid
    \tyCtx, (x : \vty)                     \\\\

    \text{value}          & v      & ::= &
    x                                                              \mid
    \tmUnit                                                        \mid
    \fun{(x:\vty)}{c}                                              \mid
    \cast{v}{\coerVal}                     \\

    \text{computation}    & c      & ::= &
    \return{v}                                                     \mid
    \operation{v}{y : \vty}{c}                                     \mid
    \doin{x}{c_1}{c_2}                                             \mid
    v_1~v_2                                                        \mid
    \cast{c}{\coerComp}
  \end{array}
\]

Typing contexts are well-formed in a parameter context according to rules in Figure~\ref{tab:contextWellFormedness} in Appendix~\ref{sec:wellFormedness}, while terms are typed according to rules in Figure~\ref{tab:poly:typing}.
As typing contexts are secondary in our development, we shall always use their full name, and simply use \emph{context} to talk about parameter contexts~$\ctx$.

\begin{figure}[h]
  \begin{mathpar}
    \hilite{\tVal{\ctx}{\tyCtx}{v}{\vty}}\ \textrm{assuming} \ \tTyCtx{\ctx}{\tyCtx} \ \textrm{and} \ \tVty{\ctx}{\vty}{\skel} \hfill \\
    \infer{x : \vty \in \tyCtx}{\tVal{\ctx}{\tyCtx}{x}{\vty}}

    \infer{
    }{\tVal{\ctx}{\tyCtx}{\tmUnit}{\tyUnit}}

    \infer{
      \tComp{\ctx}{\tyCtx, x: \vty}{c}{\cty}
    }{\tVal{\ctx}{\tyCtx}{\fun{(x : \vty)}{c}}{\vty \to \cty}}

    \infer{
      \tVal{\ctx}{\tyCtx}{v}{\vty} \\ \tCoerVal{\ctx}{\coerVal}{\vty \leq \vty'}
    }{\tVal{\ctx}{\tyCtx}{\cast{v}{\coerVal}}{\vty'}} \\

    \hilite{\tComp{\ctx}{\tyCtx}{c}{\cty}}\ \textrm{assuming} \ \tTyCtx{\ctx}{\tyCtx} \ \textrm{and} \ \tCty{\ctx}{\cty}{\skel} \hfill \\

    \infer{\tVal{\ctx}{\tyCtx}{v}{\vty}}{\tComp{\ctx}{\tyCtx}{\return v}{\dirty{\vty}\emptyset}}

    \infer{
      \tComp{\ctx}{\tyCtx}{c_1}{\dirty{\vty}{\dirt}} \\ \tComp{\ctx}{\tyCtx, x : \vty}{c_2}{\dirty{\vty'}{\dirt}}
    }{\tComp{\ctx}{\tyCtx}{\doin{x}{c_1}{c_2}}{\dirty{\vty'}{\dirt}}}

    \infer{\tVal{\ctx}{\tyCtx}{v_1}{\vty \to \cty} \\ \tVal{\ctx}{\tyCtx}{v_2}{\vty}}{\tComp{\ctx}{\tyCtx}{v_1~v_2}{\cty}}

    \infer{\op : \vty_1 \to \vty_2 \in \Sigma \\ \tVal{\ctx}{\tyCtx}{v}{\vty_1} \\ \tComp{\ctx}{\tyCtx, y : \vty_2}{c}{\dirty{\vty}{\dirt}} \\ \op \in \dirt}{\tComp{\ctx}{\tyCtx}{\operation{v}{y : \vty_2}{c}}{\dirty{\vty}{\dirt}}}

    \infer{\tComp{\ctx}{\tyCtx}{c}{\cty} \\ \tCoerComp{\ctx}{\coerComp}{\cty \leq \cty'}}{\tComp{\ctx}{\tyCtx}{\cast{c}{\coerComp}}{\cty'}}
  \end{mathpar}
  \caption{\coreEff typing rules}
  \label{tab:poly:typing}
\end{figure}

\section{Substitutions}
\label{sec:substitutions}
Most of our work focuses on translating types and terms in a given context into ones in a simpler context.
With that purpose, we define \emph{substitutions}~$\sigma$ as:
\[
  \begin{array}{r@{\quad}r@{~}c@{~}l}
    \text{substitution} & \sigma & ::= &
    \emptysub \mid
    \sigma, (\skelvar \mapsto \skel) \mid
    \sigma, (\dirtvar \mapsto \dirt) \mid
    \sigma, (\tyvar \mapsto \vty) \mid
    \sigma, (\tycoervar \mapsto \coerVal) \mid
    \sigma, (\dirtcoervar \mapsto \coerDirt)
  \end{array}
\]
that maps skeleton parameters to skeletons, dirt parameters to dirt, and so on.
We assume that each parameter is assigned at most one counterpart, so we can interpret each substitution as a function.
When easier, we will use the function like notation (e.g. $\sigma(\skelvar) = \skel$) to denote the action of the substitution on a given parameter.
We shall also use $\sigma \setminus \mathcal{P}$ to denote the substitution $\sigma$ with the removal of all parameters in the set $\mathcal{P}$.

\subsection{Valid substitutions}

A \emph{valid substitution} $\validSub{\sigma}{\ctx}{\ctx'}$ takes each parameter from a well-formed parameter context $\ctx$ into its counterpart, which is well-formed in context $\ctx'$, as given by rules presented in the Figure~\ref{fig:poly:substitution}.

\begin{figure}[h]
  \begin{mathpar}
    \hilite{\validSub{\sigma}{\ctx}{\ctx'}}\ \textrm{assuming} \ \tCtx{\ctx} \ \textrm{and} \ \tCtx{\ctx'}\hfill \\
    \infer{{}}{\validSub{\emptysub}{\emptyctx}{\ctx'}}

    \infer{\validSub{\sigma}{\ctx}{\ctx'} \\ \tSty{\ctx'}{\skel} }{
      \validSub{(\sigma, \skelvar \mapsto \skel)}{(\ctx, \skelvar)}{\ctx'}
    }

    \infer{\validSub{\sigma}{\ctx}{\ctx'} \\ \tDty{\ctx'}{\dirt} }{
      \validSub{(\sigma, \dirtvar \mapsto \dirt)}{(\ctx, \dirtvar)}{\ctx'}
    }

    \infer{
      \validSub{\sigma}{\ctx}{\ctx'} \\ \tVty{\ctx'}{\vty}{\sigma(\skel)}
    }{
      \validSub{(\sigma, \tyvar \mapsto \vty)}{(\ctx, (\tyvar: \skel))}{\ctx'}
    }

    \infer{
      \validSub{\sigma}{\ctx}{\ctx'} \\ \tCoerty{\ctx'}{\coerVal}{\sigma(\vty_1) \leq \sigma(\vty_2)}
    }{
      \validSub{(\sigma, \tycoervar \mapsto \coerVal)}{(\ctx, (\tycoervar : \vty_1 \leq \vty_2))}{\ctx'}
    }

    \infer{\validSub{\sigma}{\ctx}{\ctx'} \\
      \tCoerty{\ctx'}{\coerDirt}{\sigma(\dirt_1) \leq \sigma(\dirt_2)}
    }{
      \validSub{(\sigma, \varpi \mapsto \coerDirt)}{(\ctx, (\dirtcoervar : \dirt_1 \leq \dirt_2))}{\ctx'}
    }
  \end{mathpar}
  \caption{Substitution validity rules}
  \label{fig:poly:substitution}
\end{figure}

Note that validity rules ensure that substitutions are defined on all parameters in the source context, even if they map back to themselves.
To prevent unnecessary clutter, we will omit these trivial mappings.

We will mostly be interested in substitutions $\validSub{\sigma}{\ctx}{\ctx'}$ where $\ctx'$ is a subset of $\ctx$, and $\sigma$ preserves all their common parameters.
In such cases, we shall call $\sigma$ a \emph{strengthening} of $\ctx$ and denote $\ctx'$ as $\sigma(\ctx)$.
For example
\[
  \set{\skelvar' \mapsto \tyUnit, \dirtvar' \mapsto \dirtvar, \tyvar \mapsto \tyUnit}
\]
(where we omit $\skelvar \mapsto \skelvar, \dirtvar \mapsto \dirtvar$),
strengthens $\ctx = \skelvar, \skelvar', \dirtvar, \dirtvar', (\tyvar : \skelvar')$ to $\ctx' = \skelvar, \dirtvar$.

As expected, valid substitutions preserve well-formedness and typing judgements, where the action of a substitution is extended from parameters to arbitrary types in Figure~\ref{fig:poly:substitution:types}, arbitrary coercions in Figure~\ref{fig:poly:substitution:coercions}, and arbitrary typing contexts and terms in Figure~\ref{fig:poly:substitution:terms}.

\begin{figure}[h]
  \begin{align*}
    \sigma(\tyUnit)            & = \tyUnit                            & \sigma(\skel_1 \to \skel_2)  & = \sigma(\skel_1) \to \sigma(\skel_2) \\
    \\
    \sigma(\emptyset)          & = \emptyset                          & \sigma(\set{\op} \cup \dirt) & = \set{\op} \cup \sigma(\dirt)        \\
    \\
    \sigma(\tyUnit)            & = \tyUnit                            & \sigma(\vty \to \cty)        & = \sigma(\vty) \to \sigma(\cty)       \\
    \sigma(\dirty{\vty}{\cty}) & = \dirty{\sigma(\vty)}{\sigma(\cty)}
  \end{align*}
  \caption{Substitution action on skeletons, dirt, and types}
  \label{fig:poly:substitution:types}
\end{figure}

\begin{figure}
  \begin{align*}
    \sigma(\refl{\dirtvar})                  & = \refl{\sigma(\dirtvar)}                  & \sigma(\refl{\emptyset})                   & = \refl{\emptyset}                            \\
    \sigma(\emptyset_\dirtvar)               & = \emptyset_{\sigma(\dirtvar)}             & \sigma(\set{\op} \cupDirtSingle \coerDirt) & = \set{\op} \cupDirtSingle \sigma(\coerDirt)  \\
    \sigma(\set{\op} \cupDirtBoth \coerDirt) & = \set{\op} \cupDirtBoth \sigma(\coerDirt)                                                                                              \\
    \\
    \sigma(\trgUnitRefl)                     & = \trgUnitRefl                             & \sigma(\refl{\tyvar})                      & = \refl{\sigma(\tyvar)}                       \\
    \sigma(\coerVal \to \coerComp)           & = \sigma(\coerVal) \to \sigma(\coerComp)   & \sigma(\dirty{\coerVal}{\coerDirt})        & = \dirty{\sigma(\coerVal)}{\sigma(\coerDirt)} \\
  \end{align*}
  \caption{Substitution action on dirt and type coercions}
  \label{fig:poly:substitution:coercions}
\end{figure}

\begin{figure}
  \begin{align*}
    \sigma(\emptyctx)           & = \emptyctx                           & \sigma(\tyCtx, x : \vty)          & = \sigma(\tyCtx), x : \sigma(\vty)                 \\
    \\
    \sigma(x)                   & = x                                   & \sigma(\tmUnit)                   & = \tmUnit                                          \\
    \sigma(\fun{x : \vty}{c})   & = \fun{x : \sigma(\vty)}{\sigma(c)}   & \sigma(\cast{v}{\coerVal})        & = \cast{\sigma(v)}{\sigma(\coerVal)}               \\
    \\
    \sigma(\return{v})          & = \return{\sigma(v)}                  & \sigma(\doin{x}{v}{c})            & = \doin{x}{\sigma(v)}{\sigma(c)}                   \\
    \sigma(v_1~v_2)             & = \sigma(v)~\sigma(w)                 & \sigma(\operation{v}{y: \vty}{c}) & = \operation{\sigma(v)}{y:\sigma(\vty)}{\sigma(c)} \\
    \sigma(\cast{v}{\coerComp}) & = \cast{\sigma(v)}{\sigma(\coerComp)}
  \end{align*}
  \caption{Substitution action on typing contexts and terms}
  \label{fig:poly:substitution:terms}
\end{figure}

\begin{prop}
  \label{prop:substitutionPreservesSkeletons}
  For an arbitrary valid substitution $\validSub{\sigma}{\ctx}{\ctx'}$, the following holds:
  \begin{itemize}
    \item if $\tSty{\ctx}{\skel}$, then $\tSty{\ctx'}{\sigma(\skel)}$,
    \item if $\tDty{\ctx}{\dirt}$, then $\tDty{\ctx'}{\sigma(\dirt)}$.
  \end{itemize}
\end{prop}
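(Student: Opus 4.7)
The plan is to prove the two statements independently by structural induction on the well-formedness derivations $\tSty{\ctx}{\ety}$ and $\tDty{\ctx}{\dirt}$. Since skeletons never mention dirt and dirt never mentions skeletons, the two inductions are completely separate; neither requires a mutual recursion with the other.

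Before the main induction it is worth isolating a small auxiliary lemma: whenever $\validSub{\sigma}{\ctx}{\ctx'}$, one has $\tSty{\ctx'}{\sigma(\evar)}$ for every $\evar \in \ctx$, and analogously $\tDty{\ctx'}{\sigma(\dirtvar)}$ for every $\dirtvar \in \ctx$. This is proved by an easy induction on the derivation of substitution validity in Figure~\ref{fig:poly:substitution}: the extension rule that introduces the mapping for the parameter in question supplies precisely the required premise, while for any parameter introduced earlier the induction hypothesis applies, since $\ctx'$ is fixed and no extension on the $\ctx$ side can invalidate well-formedness in $\ctx'$.

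With that lookup lemma available, the main induction is mechanical. For $\tSty{\ctx}{\ety}$ the variable case $\ety = \evar$ is immediate from the lemma; the case $\ety = \tyUnit$ uses $\sigma(\tyUnit) = \tyUnit$, which is well-formed in any context; and the arrow case $\ety = \ety_1 \to \ety_2$ applies the induction hypothesis to each subskeleton and reassembles the derivation with the arrow rule, using $\sigma(\ety_1 \to \ety_2) = \sigma(\ety_1) \to \sigma(\ety_2)$. For $\tDty{\ctx}{\dirt}$ the argument is identical in shape: the variable case uses the lemma, the empty dirt case is trivial, and the $\set{\op} \cup \dirt$ case uses that operation signatures come from the global $\Sigma$ and are unaffected by $\sigma$, together with the induction hypothesis on $\dirt$.

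There is no genuine obstacle here; the only care required is in phrasing the auxiliary lookup lemma so that the invariance of the target context $\ctx'$ along the substitution validity derivation is made explicit. This proposition then serves as the base on which the corresponding preservation statements for value types, computation types, coercions, typing contexts, and finally terms will be stacked in the sequel.
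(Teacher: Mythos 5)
Your proof is correct and is exactly the routine argument the paper leaves implicit (it states this proposition without proof): a lookup lemma extracted from the substitution-validity derivation, followed by independent structural inductions on the two well-formedness judgements. The observation that the target context $\ctx'$ is invariant along the validity derivation is the only point needing care, and you handle it properly.
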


\begin{prop}
  \label{prop:substitutionPreservesTypes}
  For an arbitrary valid substitution $\validSub{\sigma}{\ctx}{\ctx'}$, the following holds:
  \begin{itemize}
    \item if $\tVty{\ctx}{\vty}{\skel}$, then $\tVty{\ctx'}{\sigma(\vty)}{\sigma(\skel)}$,
    \item if $\tCty{\ctx}{\cty}{\skel}$, then $\tCty{\ctx'}{\sigma(\cty)}{\sigma(\skel)}$.
  \end{itemize}
\end{prop}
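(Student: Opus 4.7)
The plan is to prove both items by simultaneous structural induction on the derivation of the typing judgements $\tVty{\ctx}{\vty}{\ety}$ and $\tCty{\ctx}{\cty}{\ety}$, as given by the rules in Figure~\ref{fig:wellformedness-types}. Since the action of $\sigma$ on types and dirt is defined compositionally in Figure~\ref{fig:poly:substitution:types}, each case of the induction reduces to applying the appropriate well-formedness rule to substituted subcomponents, with premises supplied either by the induction hypothesis or by Proposition~\ref{prop:substitutionPreservesSkeletons}.

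I would first handle the value-type cases. For $\tyvar$: the premise $(\tyvar : \ety) \in \ctx$ forces $\ctx$ to contain a type parameter binding with skeleton $\ety$, and the rules in Figure~\ref{fig:poly:substitution} guarantee that $\sigma$ maps $\tyvar$ to some $\vty$ with $\tVty{\ctx'}{\vty}{\sigma(\ety)}$. Since $\sigma(\tyvar) = \vty$, this is exactly the required judgement. The $\tyUnit$ case is immediate, because $\sigma(\tyUnit) = \tyUnit$ and its skeleton $\tyUnit$ is unaffected by $\sigma$. For a function type $\vty \to \cty$ with skeleton $\ety_1 \to \ety_2$, the induction hypothesis applied to the two premises yields $\tVty{\ctx'}{\sigma(\vty)}{\sigma(\ety_1)}$ and $\tCty{\ctx'}{\sigma(\cty)}{\sigma(\ety_2)}$, and the function-type rule then gives $\tVty{\ctx'}{\sigma(\vty) \to \sigma(\cty)}{\sigma(\ety_1) \to \sigma(\ety_2)}$, which matches $\sigma(\vty \to \cty)$ and $\sigma(\ety_1 \to \ety_2)$ by the definitions in Figure~\ref{fig:poly:substitution:types}.

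The only rule for computation types has premises $\tVty{\ctx}{\vty}{\ety}$ and $\tDty{\ctx}{\dirt}$. For the first, I would invoke the induction hypothesis to obtain $\tVty{\ctx'}{\sigma(\vty)}{\sigma(\ety)}$; for the second, I would invoke Proposition~\ref{prop:substitutionPreservesSkeletons} to get $\tDty{\ctx'}{\sigma(\dirt)}$. Reassembling via the computation-type rule yields $\tCty{\ctx'}{\dirty{\sigma(\vty)}{\sigma(\dirt)}}{\sigma(\ety)}$, and this is $\tCty{\ctx'}{\sigma(\dirty{\vty}{\dirt})}{\sigma(\ety)}$ by the last equation of Figure~\ref{fig:poly:substitution:types}.

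The only subtle point, and the nearest thing to an obstacle, is the variable case: we must read off from the substitution-validity rule for $\tyvar$ that the image~$\sigma(\tyvar)$ has skeleton \emph{exactly}~$\sigma(\ety)$, and not merely some skeleton that one would have to coerce. This is precisely why the validity rule for $(\tyvar \mapsto \vty)$ in Figure~\ref{fig:poly:substitution} requires $\tVty{\ctx'}{\vty}{\sigma(\ety)}$ rather than $\tVty{\ctx'}{\vty}{\ety}$: the skeleton annotation must itself be transported along $\sigma$ since $\ety$ might mention skeleton parameters in $\ctx_s$ that $\sigma$ rewrites. Once this invariant is observed, the induction is entirely routine.
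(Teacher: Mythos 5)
Your proof is correct and is exactly the routine structural induction the paper leaves implicit (the proposition is stated without proof there): induct on the well-formedness derivation, discharge the dirt premise via Proposition~\ref{prop:substitutionPreservesSkeletons}, and use the compositional definition of $\sigma$ on types to reassemble each rule. Your observation that the validity rule for $(\tyvar \mapsto \vty)$ demands skeleton $\sigma(\ety)$ rather than $\ety$ is precisely the invariant that makes the variable case go through, so nothing is missing.
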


\begin{prop}
  \label{prop:substitutionPreservesCoercions}
  For an arbitrary valid substitution $\validSub{\sigma}{\ctx}{\ctx'}$, the following holds:
  \begin{itemize}
    \item if $\tCoerVal{\ctx}{\coerVal}{\vty_1 \leq \vty_2}$, then $\tCoerVal{\ctx'}{\sigma(\coerVal)}{\sigma(\vty_1) \leq \sigma(\vty_2)}$,
    \item if $\tCoerComp{\ctx}{\coerComp}{\cty_1 \leq \cty_2}$, then $\tCoerComp{\ctx'}{\sigma(\coerComp)}{\sigma(\cty_1) \leq \sigma(\cty_2)}$,
    \item if $\tCoerDirt{\ctx}{\coerDirt}{\dirt_1 \leq \dirt_2}$, then $\tCoerDirt{\ctx'}{\sigma(\coerDirt)}{\sigma(\dirt_1) \leq \sigma(\dirt_2)}$.
  \end{itemize}
\end{prop}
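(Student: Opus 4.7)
The plan is to prove all three statements simultaneously by a mutual induction on the structure of the coercion well-formedness derivations. At each step I rely on Propositions~\ref{prop:substitutionPreservesSkeletons} and~\ref{prop:substitutionPreservesTypes} to ensure that the side conditions (well-formedness of the resulting types, skeletons and dirts) in the coercion typing rules are satisfied after substitution, and use the inductive hypotheses on sub-derivations.

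The cases split naturally into three groups. First, the \emph{parameter} cases ($\tycoervar$ and $\dirtcoervar$): if $\tCoerVal{\ctx}{\tycoervar}{\vty_1 \le \vty_2}$ because $(\tycoervar : \vty_1 \le \vty_2) \in \ctx$, then by the definition of $\validSub{\sigma}{\ctx}{\ctx'}$ the substitution rule for type coercion parameters provides exactly $\tCoerVal{\ctx'}{\sigma(\tycoervar)}{\sigma(\vty_1) \le \sigma(\vty_2)}$, and analogously for $\dirtcoervar$. Second, the \emph{compositional} and \emph{structural} cases ($\coerVal^2 \circ \coerVal^1$, $\coerVal \to \coerComp$, $\dirty{\coerVal}{\coerDirt}$, $\set{\op} \cupDirtBoth \coerDirt$, $\set{\op} \cupDirtSingle \coerDirt$, and their dirt/value analogues): these are routine — apply the induction hypothesis to each sub-coercion and reassemble via the corresponding typing rule, since substitution commutes with these constructors by definition. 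Third, the trivial reflexive cases $\refl{\tyUnit}$ and $\refl{\emptyset}$: immediate, as $\sigma$ leaves them fixed.

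The main obstacle — and the only genuinely non-routine step — is handling $\refl{\tyvar}$, $\refl{\dirtvar}$ and $\emptyset_\dirtvar$. The issue is that the syntactic coercion grammar permits reflexivity only on a parameter (or on $\tyUnit$/$\emptyset$), but substitution may replace that parameter by an arbitrary type or dirt: for instance $\sigma(\refl{\tyvar}) = \refl{\sigma(\tyvar)}$, where $\sigma(\tyvar)$ can be any well-formed value type, and similarly $\sigma(\emptyset_\dirtvar) = \emptyset_{\sigma(\dirtvar)}$ for an arbitrary dirt. Fortunately, the admissibility clauses just introduced define $\refl{\vty}$, $\refl{\cty}$, $\refl{\dirt}$, and $\emptyset_\dirt$ by recursion on the target type or dirt, yielding a well-typed coercion for every well-formed argument. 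Thus for $\refl{\tyvar}$, Proposition~\ref{prop:substitutionPreservesTypes} applied to the side condition $\tVty{\ctx}{\tyvar}{\ety}$ gives $\tVty{\ctx'}{\sigma(\tyvar)}{\sigma(\ety)}$, and the admissibility of $\refl{\sigma(\tyvar)}$ delivers the required $\tCoerVal{\ctx'}{\refl{\sigma(\tyvar)}}{\sigma(\tyvar) \le \sigma(\tyvar)}$; the arguments for $\refl{\dirtvar}$ and $\emptyset_\dirtvar$ are analogous, using Proposition~\ref{prop:substitutionPreservesSkeletons}.

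Because the three judgements are defined mutually (value coercions for functions invoke computation coercions, which invoke value and dirt coercions), the induction must indeed be carried out simultaneously; otherwise, the three statements are independent in structure and the proof reduces to mechanical case analysis. The whole argument therefore amounts to invoking the admissible reflexivity/empty constructions in the three parameter-specialised reflexive cases, and propagating the induction hypotheses elsewhere.
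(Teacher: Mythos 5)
Your proof is correct and is essentially the intended argument: the paper states this proposition without proof, treating it as a routine induction over the coercion well-formedness derivations, which is exactly what you carry out. You also correctly isolate the one non-trivial point — that $\sigma(\refl{\tyvar}) = \refl{\sigma(\tyvar)}$ and $\sigma(\emptyset_\dirtvar) = \emptyset_{\sigma(\dirtvar)}$ fall outside the raw coercion grammar and must be read via the admissible constructions $\refl{\vty}$, $\refl{\dirt}$, $\emptyset_\dirt$ that the paper introduces for arbitrary well-formed types and dirts — and you discharge it properly using Propositions~\ref{prop:substitutionPreservesSkeletons} and~\ref{prop:substitutionPreservesTypes}.
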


\begin{thm}
  \label{thm:substitutionPreservesJudgements}
  For an arbitrary valid substitution $\validSub{\sigma}{\ctx}{\ctx'}$, the following holds:
  \begin{itemize}
    \item if $\tVal{\ctx}{\tyCtx}{v}{\vty}$, then $\tVal{\ctx'}{\sigma(\tyCtx)}{\sigma(v)}{\sigma(\vty)}$,
    \item if ${\tComp{\ctx}{\tyCtx}{c}{\cty}}$, then $\tComp{\ctx'}{\sigma(\tyCtx)}{\sigma(c)}{\sigma(\cty)}$.
  \end{itemize}
\end{thm}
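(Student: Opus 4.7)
The plan is to prove the two statements simultaneously by mutual induction on the typing derivations $\tVal{\ctx}{\tyCtx}{v}{\vty}$ and $\tComp{\ctx}{\tyCtx}{c}{\cty}$. For each typing rule from Figure~\ref{tab:poly:typing}, I apply the substitution to both the term and its type, and check that the resulting judgement is derivable. Propositions~\ref{prop:substitutionPreservesSkeletons}, \ref{prop:substitutionPreservesTypes}, and~\ref{prop:substitutionPreservesCoercions} will do all the heavy lifting for the side-conditions concerning well-formedness of types, computation types, and coercions, while the inductive hypothesis handles the subterms.

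Before starting the induction proper, I would record two small auxiliary facts. First, a compatibility lemma stating that if $\tTyCtx{\ctx}{\tyCtx}$ and $\validSub{\sigma}{\ctx}{\ctx'}$, then $\tTyCtx{\ctx'}{\sigma(\tyCtx)}$, and moreover $(x : \vty) \in \tyCtx$ implies $(x : \sigma(\vty)) \in \sigma(\tyCtx)$; both follow by straightforward induction on $\tyCtx$ using Proposition~\ref{prop:substitutionPreservesTypes}. Second, a membership lemma for dirt: if $\op \in \dirt$, then $\op \in \sigma(\dirt)$. This is immediate from the clauses $\sigma(\{\op\} \cup \dirt) = \{\op\} \cup \sigma(\dirt)$ and the fact that substitutions act homomorphically on dirt, so operations present in $\dirt$ are preserved and the substitution of the tail row variable can only add more operations.

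The inductive cases are then essentially mechanical. The variable rule uses the membership part of the first auxiliary fact. The unit and $\return{v}$ cases are direct, the former using that $\sigma(\tyUnit) = \tyUnit$ and $\sigma(\emptyset) = \emptyset$. The rules for $\fun{(x:\vty)}{c}$, $\doin{x}{c_1}{c_2}$, $\letval{x}{v}{c}$, and application $v_1~v_2$ apply the induction hypothesis to each immediate subterm, after observing that $\sigma(\tyCtx, x : \vty) = \sigma(\tyCtx), x : \sigma(\vty)$ so context extension is preserved. The two cast rules combine the IH on the coerced term with Proposition~\ref{prop:substitutionPreservesCoercions}, noting that $\sigma$ distributes over subtyping, e.g. $\tCoerVal{\ctx'}{\sigma(\coerVal)}{\sigma(\vty) \le \sigma(\vty')}$.

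The only case requiring mild care is $\operation{v}{y : \vty_2}{c}$. Here I must check that (i) the operation signature clause $\op : \vty_1 \to \vty_2 \in \Sigma$ is retained, which holds trivially since $\Sigma$ contains only monomorphic types that $\sigma$ fixes; (ii) the typing premises translate via the IH, using $\sigma(\tyCtx, y : \vty_2) = \sigma(\tyCtx), y : \vty_2$; and (iii) the dirt containment premise $\op \in \dirt$ translates to $\op \in \sigma(\dirt)$, which is exactly the second auxiliary fact. No step is genuinely hard; the main obstacle is simply the bookkeeping of writing out all cases and being careful that the dirt-containment and operation-signature conditions survive the substitution, since these are the only premises not handled uniformly by the preservation propositions.
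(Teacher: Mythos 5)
Your proof is correct and is exactly the routine mutual induction on typing derivations (leaning on Propositions~\ref{prop:substitutionPreservesTypes} and~\ref{prop:substitutionPreservesCoercions}, plus the context-compatibility and dirt-membership facts) that the paper leaves implicit by stating the theorem without proof. The only case needing care, the operation rule with its $\op \in \dirt$ premise and monomorphic signature types, is handled correctly in your proposal.
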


For substitutions $\sigma$ and $\sigma'$, we can define their \emph{composition} as:
\begin{align*}
  (\sigma' \circ \sigma)(\skelvar)     & = \sigma'(\sigma(\skelvar))     &
  (\sigma' \circ \sigma)(\tyvar)       & = \sigma'(\sigma(\tyvar))         \\
  (\sigma' \circ \sigma)(\dirtvar)     & = \sigma'(\sigma(\dirtvar))     &
  (\sigma' \circ \sigma)(\tycoervar)   & = \sigma'(\sigma(\tycoervar))     \\
  (\sigma' \circ \sigma)(\dirtcoervar) & = \sigma'(\sigma(\dirtcoervar))
\end{align*}

\begin{prop}
  \label{prop:substitution-composition}
  Let $\validSub{\sigma}{\ctx}{\ctx'}$ and $\validSub{\sigma'}{\ctx'}{\ctx''}$ be valid substitutions. Then the composition of the two substitutions is a valid substitution $\validSub{\sigma' \circ \sigma}{\ctx}{\ctx''}$.
\end{prop}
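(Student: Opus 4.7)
The plan is to proceed by induction on the derivation of $\validSub{\sigma}{\ctx}{\ctx'}$. The six rules in Figure~\ref{fig:poly:substitution} each extend $\sigma$ by one mapping, and for each of them we need to verify that the analogous extension of $\sigma' \circ \sigma$ is also valid in $\ctx''$. Before doing so, however, it is convenient to establish an auxiliary compositionality lemma stating that on arbitrary syntactic objects $X$ (skeletons, dirt, value types, computation types, value coercions, computation coercions, dirt coercions), we have $(\sigma' \circ \sigma)(X) = \sigma'(\sigma(X))$. This follows by a routine structural induction using the definitions of the substitution action in Figures~\ref{fig:poly:substitution:types} and~\ref{fig:poly:substitution:coercions}: at the leaves, the identity holds by the definition of the composition on parameters, and at each node the induction hypothesis propagates through the defining equation.

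With this lemma in hand, the main induction becomes a case analysis. The base case $\sigma = \emptysub$, $\ctx = \emptyctx$ is immediate, since $\sigma' \circ \emptysub = \emptysub$ and $\validSub{\emptysub}{\emptyctx}{\ctx''}$ holds by the empty rule. Each inductive case proceeds along the same pattern, which I illustrate for the type parameter case: suppose $\validSub{(\sigma, \tyvar \mapsto \vty)}{(\ctx, (\tyvar : \ety))}{\ctx'}$ was derived from $\validSub{\sigma}{\ctx}{\ctx'}$ together with $\tVty{\ctx'}{\vty}{\sigma(\ety)}$. By the induction hypothesis, $\validSub{\sigma' \circ \sigma}{\ctx}{\ctx''}$. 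Applying Proposition~\ref{prop:substitutionPreservesTypes} to $\sigma'$ on the assumption $\tVty{\ctx'}{\vty}{\sigma(\ety)}$ yields $\tVty{\ctx''}{\sigma'(\vty)}{\sigma'(\sigma(\ety))}$, which by the compositionality lemma equals $\tVty{\ctx''}{(\sigma' \circ \sigma)(\tyvar)}{(\sigma' \circ \sigma)(\ety)}$. Thus the type parameter rule gives the desired extension.

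The remaining cases are analogous: for skeleton and dirt parameters we appeal to Proposition~\ref{prop:substitutionPreservesSkeletons}, and for type and dirt coercion parameters we appeal to Proposition~\ref{prop:substitutionPreservesCoercions}. In each case the well-formedness premise produced by the last rule for $\sigma$ lives in $\ctx'$, is pushed forward to $\ctx''$ by the appropriate preservation proposition applied to $\sigma'$, and is then rewritten using the compositionality lemma so that it matches the form required by the corresponding validity rule for $\sigma' \circ \sigma$.

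The only potential obstacle is the mismatch between the composition, which is defined only on parameters, and the premises of the validity rules, which mention compound types and coercions such as $\sigma(\ety)$, $\sigma(\vty_1)$, or $\sigma(\dirt_1)$. This is entirely resolved by the compositionality lemma above, so the proof reduces to careful but routine bookkeeping; no new ingredients beyond Propositions~\ref{prop:substitutionPreservesSkeletons}, \ref{prop:substitutionPreservesTypes}, and \ref{prop:substitutionPreservesCoercions} are needed.
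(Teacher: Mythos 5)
Your proof is correct: the paper states Proposition~\ref{prop:substitution-composition} without proof, and your argument --- induction on the derivation of $\validSub{\sigma}{\ctx}{\ctx'}$, pushing each well-formedness premise forward with Propositions~\ref{prop:substitutionPreservesSkeletons}--\ref{prop:substitutionPreservesCoercions} applied to $\sigma'$, mediated by the compositionality identity $(\sigma'\circ\sigma)(X)=\sigma'(\sigma(X))$ --- is exactly the standard argument the authors leave implicit. The only nitpick is notational: in the type-parameter case the extended substitution is $(\sigma,\tyvar\mapsto\vty)$, so the relevant value is $(\sigma'\circ(\sigma,\tyvar\mapsto\vty))(\tyvar)=\sigma'(\vty)$ rather than $(\sigma'\circ\sigma)(\tyvar)$, but your intent is clear and the step is sound.
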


\subsection{A canonical form of contexts}
\label{sub:reduced-contexts}

To ease further analysis, we first reduce contexts to a canonical form, following similar ideas as our previous unification algorithms~\cite{DBLP:journals/corr/Pretnar13,DBLP:conf/esop/SalehKPS18,DBLP:journals/jfp/KarachaliasPSVS20}.
Recall that parameter contexts are of the form $(\ctx_s, \ctx_d, \ctx_t, \ctx_{dc}, \ctx_{tc})$, where:
\begin{itemize}
  \item $\ctx_s$ lists skeleton parameters $\skelvar$,
  \item $\ctx_d$ lists dirt parameters $\dirtvar$,
  \item $\ctx_t$ lists type parameters and their skeletons $\tyvar : \skel$,
  \item $\ctx_{dc}$ lists dirt coercion parameters $\dirtcoervar : \dirt_1 \le \dirt_2$, and
  \item $\ctx_{tc}$ lists type coercion parameters $\tycoervar : \vty_1 \le \vty_2$.
\end{itemize}

Contexts $\ctx_s$ and $\ctx_d$ are just lists of parameters, so we leave them as is.
However, due to structural subtyping, the type parameters $\ctx_t$ can be reduced to ones of the form $\tyvar : \skelvar$, so with only parameter skeletons.
For example, if we have $(\tyvar : \skel_1 \to \skel_2) \in \ctx_t$, the only closed types we can assign to $\tyvar$ are function types. Thus, we can introduce two type parameters $\tyvar_1 : \skel_1, \tyvar_2 : \skel_2$, and replace all occurrences of $\tyvar$ with $\tyvar_1 \to \tyvar_2$.

With that in mind, we define an auxiliary function $\subPhase_t(\ctx_t; \ctx_t', \ctx_d', \sigma)$ that processes a context~$\ctx_t$ one parameter at a time and accumulates canonical type parameters $\ctx_t'$, additionally generated dirt parameters $\ctx_d'$, and a reducing substitution $\sigma$:
\begin{align*}
  \subPhase_t(\emptyctx; \ctx_t', \ctx_d', \sigma)
  & := \tuple{\ctx_t', \ctx_d', \sigma}                                                                                                                                         \\
  \subPhase_t((\ctx_t, \tyvar : \skelvar); \ctx_t', \ctx_d', \sigma)
  & := \subPhase_t(\ctx_t; (\ctx_t', \tyvar : \skelvar), \ctx_d', \sigma)                                                                                                       \\
  \subPhase_t((\ctx_t, \tyvar : \tyUnit); \ctx_t', \ctx_d', \sigma)
  & := \subPhase_t(\ctx_t; \ctx_t', \ctx_d', \set{\tyvar \mapsto \tyUnit} \circ \sigma)                                                                                         \\
  \subPhase_t((\ctx_t, \tyvar : \skel_1 \to \skel_2); \ctx_t', \ctx_d', \sigma)
  & :=
  \begin{multlined}[t]
    \subPhase_t((\ctx_t, \tyvar_1 : \skel_1, \tyvar_2 : \skel_2); \ctx_t', (\ctx_d', \dirtvar), \set{\tyvar \mapsto \tyvar_1 \dirtyTo{\dirtvar} \tyvar_2} \circ \sigma) \\
    \shoveleft[6.7cm]{(\text{$\dirtvar, \tyvar_1, \tyvar_2$ fresh})}
  \end{multlined}
\end{align*}
Note that even though the number of variables in the unreduced context~$\ctx_t$ increases in the last case, the total size of accompanying skeletons decreases, ensuring termination.

After reducing type parameters, we turn to type constraints, as reducing those may introduce further dirt constraints.
Using a similar approach as for types, we reduce type constraints to ones witnessing $\tyvar_1 \leq \tyvar_2$, so only inequalities between type parameters.
Like before, we define a function $\subPhase_{tc}(\ctx_{tc}; \ctx_{tc}', \ctx_{dc}', \sigma)$ that processes~$\ctx_{tc}$ and accumulates reduced type constraints $\ctx_{tc}'$, additionally generated dirt constraints $\ctx_{dc}'$, and a substitution $\sigma$ like before:
\begin{align*}
  \subPhase_{tc}(\emptyctx; \ctx_{tc}', \ctx_{dc}', \sigma)
  & := \tuple{\ctx_{tc}', \ctx_{dc}', \sigma}                                                                                                                                                                            \\
  \subPhase_{tc}((\ctx_{tc}, \tycoervar : \tyvar_1 \leq \tyvar_2); \ctx_{tc}', \ctx_{dc}', \sigma)
  & := \subPhase_{tc}(\ctx_{tc}; (\ctx_{tc}', \tycoervar : \tyvar_1 \leq \tyvar_2), \ctx_{dc}', \sigma)                                                                                                                  \\
  \subPhase_{tc}((\ctx_{tc}, \tycoervar : \tyUnit \leq \tyUnit); \ctx_{tc}', \ctx_{dc}', \sigma)
  & := \subPhase_{tc}(\ctx_{tc}; \ctx_{tc}', \ctx_{dc}', \set{\tycoervar \mapsto \refl{\tyUnit}} \circ \sigma)                                                                                                           \\
  \subPhase_{tc}((\ctx_{tc}, \tycoervar : \vty_1 \dirtyTo{\dirt} \vty_2 \leq \vty_1' \dirtyTo{\dirt'} \vty_2'); \ctx_{tc}', \ctx_{dc}', \sigma)
  & :=                                                                                                                                                                                                                   \\
  &
  \begin{multlined}
    \hspace{-7cm}
    \subPhase_{tc}((\ctx_{tc}, \tycoervar_1 : \vty'_1 \leq \vty_1, \tycoervar_2: \vty_2 \leq \vty'_2); \ctx_{tc}', (\ctx_{dc}', \dirtcoervar: \dirt \leq \dirt'), \set{\tycoervar \mapsto (\tycoervar_1 \dirtyTo{\dirtcoervar} \tycoervar_2)} \circ \sigma) \\
    \text{where $\tycoervar_1, \tycoervar_2, \dirtcoervar$ fresh}
  \end{multlined}
\end{align*}
Since well-formedness rules ensure that both sides of type inequalities share the same skeleton, and since the previous stage removed all type parameters with non-parameter skeletons, these four clauses cover all possible cases.

We can visualise reduced contexts with directed graphs.
First, we have a graph representing type coercions.
Its nodes are type parameters, and its edges are type coercion parameters between them.
As we can compare only types with the same skeleton, the graph decomposes into separate subgraphs, each corresponding to a single skeleton parameter.
For example, the reduced context
\begin{multline*}
  \skelvar_1, \skelvar_2,
  (\tyvar_1 : \skelvar_1), (\tyvar_1' : \skelvar_1), (\tyvar_1'' : \skelvar_1),
  (\tyvar_2 : \skelvar_2), (\tyvar_2' : \skelvar_2), (\tyvar_2'' : \skelvar_2) \\
  (\tycoervar_1 : \tyvar_1 \le \tyvar_1''),
  (\tycoervar_2 : \tyvar_1' \le \tyvar_1''),
  (\tycoervar_3 : \tyvar_2 \le \tyvar_2')
\end{multline*}
can be concisely captured with the graph
\[
  \begin{tikzpicture}[box/.style = {draw, dashed, rounded corners=2mm}]
    \node [shape=circle, draw] (a1) at (0, 0) {$\tyvar_1$};
    \node [shape=circle, draw] (a1') at (2, 0) {$\tyvar_1'$};
    \node [shape=circle, draw] (a1'') at (1, 2) {$\tyvar_1''$};
    \draw[->] (a1) edge node[midway, left] {$\tycoervar_1$} (a1'');
    \draw[->] (a1') edge node[midway, right] {$\tycoervar_2$} (a1'');
    \node[box, fit=(a1)(a1')(a1''), label=below:$\skelvar_1$] (s1) {};
    \node [shape=circle, draw] (a2) at (4, 0) {$\tyvar_2$};
    \node [shape=circle, draw] (a2') at (4, 2) {$\tyvar_2'$};
    \node [shape=circle, draw] (a2'') at (5, 1) {$\tyvar_2''$};
    \draw[->] (a2) edge node[midway, left] {$\tycoervar_3$} (a2');
    \node[box, fit=(a2)(a2')(a2''), label=below:$\skelvar_2$] (s2) {};
  \end{tikzpicture}
\]

Similarly, we shall reduce dirt coercions.
Unlike type coercions, we can reduce only the left-hand sides~\cite{DBLP:journals/jfp/KarachaliasPSVS20}, ending up with inequalities of the form $\dirtvar_1 \le \ops$ and $\dirtvar_1 \le \ops \cup \dirtvar_2$.
Still, we end up with a graph that has dirt parameters as nodes and dirt coercion parameters as edges between them.
Additionally, each edge is annotated with an operation set to account for the right-hand side of dirt inequalities, and we need an additional sink node to account for coercions with no parameter on the right.
In Section~\ref{sub:removing-parallel-edges}, we shall explain why we can put operations on edges rather than add additional nodes for each combination $\ops \cup \dirtvar$.
For example, the reduced context
\[
  \dirtvar_1, \dirtvar_2, \dirtvar_3, \dirtvar_4,
  (\dirtcoervar_1 : \dirtvar_1 \le \dirtvar_2),
  (\dirtcoervar_2 : \dirtvar_1 \le \dirtvar_3),
  (\dirtcoervar_3 : \dirtvar_4 \le \ops_1 \cup \dirtvar_3),
  (\dirtcoervar_4 : \dirtvar_4 \le \ops_2)
\]
can be captured with the graph
\[
  \begin{tikzpicture}[box/.style = {draw, dashed, rounded corners=2mm}]
    \node [shape=circle, draw] (d1) at (2.5, 0) {$\dirtvar_1$};
    \node [shape=circle, draw] (d2) at (1, 2) {$\dirtvar_2$};
    \node [shape=circle, draw] (d3) at (4, 2) {$\dirtvar_3$};
    \node [shape=circle, draw] (d4) at (5.5, 0) {$\dirtvar_4$};
    \node [shape=circle, draw] (cd) at (7, 2) {$\cdot$};
    \draw[->] (d1) edge node[near end, right] {$\dirtcoervar_1$} (d2);
    \draw[->] (d1) edge node[near end, left] {$\dirtcoervar_2$} (d3);
    \draw[->] (d4) edge node[near start, left] {$\dirtcoervar_3 / \ops_1$} (d3);
    \draw[->] (d4) edge node[near start, right] {$\dirtcoervar_4 / \ops_2$} (cd);
  \end{tikzpicture}
\]
Reduced type parameters and type coercions involve only parameters, and are as such completely independent from dirt coercions.
As a result, we can represent any reduced context with two graphs: one for type coercions, and one for dirt coercions.

We reduce dirt coercion contexts using a function $\subPhase_{dc}(\ctx_{dc}; \ctx_{dc}', \ctx_d', \sigma)$ like before.
Recall that each dirt is either of the form $\ops$ or $\ops \cup \dirtvar$, so in addition to the terminal case~\eqref{case:dc-terminal}, we need to consider four cases, accounting for each option on both sides of the inequality. Furthermore, each case is split into two further subcases, depending if $\ops_1 \subseteq \ops_2$ or not.
First, we have the inequalities where the right-hand side is of the closed form $\ops_2$:
\begin{align*}
  \tag{1}\label{case:dc-terminal}
  \subPhase_{dc}(\emptyctx; \ctx_{dc}', \ctx_d', \sigma)
  & := \tuple{\ctx_{dc}', \ctx_d', \sigma}
  \\[1.5ex]
  \tag{2.1}\label{case:dc-closed-closed-hold}
  \subPhase_{dc}((\ctx_{dc}, \dirtcoervar: \ops_1 \leq \ops_2\text{ for $\ops_1 \subseteq \ops_2$}); \ctx_{dc}', \ctx_d', \sigma)
  & :=
  \\
  & \tag*{\llap{$\subPhase_{dc}(\ctx_{dc}; \ctx_{dc}', \ctx_d', \set{\dirtcoervar \mapsto \ops_1 \cupDirtBoth \emptyset_{\ops_2 \setminus \ops_1}} \circ \sigma)$\qquad\qquad}}
  \\
  \tag{2.2}
  \subPhase_{dc}((\ctx_{dc}, \dirtcoervar: \ops_1 \leq \ops_2\text{ for $\ops_1 \not\subseteq \ops_2$}); \ctx_{dc}', \ctx_d', \sigma)
  & := \text{fail}
  \\[1.5ex]
  \tag{3.1}\label{case:dc-open-closed-hold}
  \subPhase_{dc}((\ctx_{dc}, \dirtcoervar: \ops_1 \cup \dirtvar_1 \leq \ops_2\text{ for $\ops_1 \subseteq \ops_2$}); \ctx_{dc}', \ctx_d', \sigma)
  & :=
  \\
  & \tag*{\llap{$
      \begin{multlined}[t]
        \subPhase_{dc}(\ctx_{dc}; (\ctx_{dc}', \dirtcoervar': \dirtvar_1 \leq \ops_2), \ctx_d', \set{\dirtcoervar \mapsto \ops_1 \cupDirtBoth \dirtcoervar'} \circ \sigma) \\
        \text{ where $\dirtcoervar'$ fresh}
  \end{multlined}$\qquad\qquad}}
  \\
  \tag{3.2}\label{case:dc-open-closed-fail}
  \subPhase_{dc}((\ctx_{dc}, \dirtcoervar: \ops_1 \cup \dirtvar_1 \leq \ops_2\text{ for $\ops_1 \not\subseteq \ops_2$}); \ctx_{dc}', \ctx_d', \sigma)
  & := \text{fail}
\end{align*}
followed by the inequalities where the right hand side $\ops_2 \cup \dirtvar_2$ features a dirt parameter:
\begin{align*}
  \tag{4.1}\label{case:dc-closed-open-hold}
  \subPhase_{dc}((\ctx_{dc}, \dirtcoervar: \ops_1 \leq \ops_2 \cup \dirtvar_2\text{ for $\ops_1 \subseteq \ops_2$}); \ctx_{dc}', \ctx_d', \sigma)
  & := \subPhase_{dc}(\ctx_{dc}; \ctx_{dc}', \ctx_d', \sigma)
  \\
  \tag{4.2}\label{case:dc-closed-open-fail}
  \subPhase_{dc}((\ctx_{dc}, \dirtcoervar: \ops_1 \leq \ops_2 \cup \dirtvar_2\text{ for $\ops_1 \not\subseteq \ops_2$}); \ctx_{dc}', \ctx_d', \sigma)
  & :=
  \\
  & \tag*{\llap{$
      \begin{multlined}[t]
        \subPhase_{dc}(\sigma'(\ctx_{dc}, \ctx_{dc}'); \emptyctx, (\ctx_d' \setminus \{\dirtvar_2\}, \dirtvar_2'), \sigma' \circ \sigma) \\
        \text{ where $\dirtvar_2'$ fresh, $\sigma' = \set{\dirtvar_2 \mapsto (\ops_1 \setminus \ops_2) \cup \dirtvar_2'}$}
  \end{multlined}$\quad}}
  \\[1.5ex]
  \tag{5.1}\label{case:dc-open-open-hold}
  \subPhase_{dc}((\ctx_{dc}, \dirtcoervar: \ops_1 \cup \dirtvar_1 \leq \ops_2 \cup \dirtvar_2\text{ for $\ops_1 \subseteq \ops_2$}); \ctx_{dc}', \ctx_d', \sigma)
  & :=
  \\
  & \tag*{\llap{$
      \begin{multlined}[t]
        \subPhase_{dc}((\ctx_{dc}; (\ctx_{dc}', \dirtcoervar' : \dirtvar_1 \leq (\ops_2 \setminus \ops_1) \cupDirtBoth \dirtvar_2'); \ctx_d', \set{\dirtcoervar \mapsto \ops_1 \cupDirtBoth \dirtcoervar'} \circ \sigma)
          \\
          \text{where $\dirtcoervar'$ fresh}
    \end{multlined}$\quad}}
    \\
    \tag{5.2}\label{case:dc-open-open-fail}
    \subPhase_{dc}((\ctx_{dc}, \dirtcoervar: \ops_1 \cup \dirtvar_1 \leq \ops_2 \cup \dirtvar_2\text{ for $\ops_1 \not\subseteq \ops_2$}); \ctx_{dc}', \ctx_d', \sigma)
    & :=
    \\
    & \tag*{\llap{$
        \begin{multlined}[t]
          \subPhase_{dc}((\sigma'(\ctx_{dc}, \ctx_{dc}')); (\dirtcoervar' : \dirtvar_1 \leq (\ops_2 \setminus \ops_1) \cupDirtBoth \dirtvar_2'),(\ctx_d' \setminus \{\dirtvar_2\}, \dirtvar_2'), \set{\dirtcoervar \mapsto \ops_1 \cupDirtBoth \dirtcoervar'} \circ \sigma' \circ \sigma) \\
          \text{where $\dirtvar_2', \dirtcoervar'$ fresh, $\sigma' = \set{\dirtvar_2 \mapsto (\ops_1 \setminus \ops_2) \cup \dirtvar_2'}$}
    \end{multlined}$\quad}}
  \end{align*}
  In cases~\eqref{case:dc-closed-closed-hold}--\eqref{case:dc-open-closed-fail}, we have only $\ops_2$ on the right-hand side, so $\ops_1 \subseteq \ops_2$ must hold for an instantiation to exist.
  If it holds, we can factor out $\ops_1$ from both sides of the coercion $\dirtcoervar$, however note that for flexibility we keep the less restrictive coercion $\dirtcoervar' : \dirtvar \le \ops_2$ in $\ctx_{dc}'$ rather than $\dirtvar \le \ops_2 \setminus \ops_1$ in case~\eqref{case:dc-open-closed-hold}.

  Cases~\eqref{case:dc-closed-open-hold} and~\eqref{case:dc-open-open-hold} are similar to the ones before.
  In cases~\eqref{case:dc-closed-open-fail} and~\eqref{case:dc-open-open-fail}, however, the dirt parameter $\dirtvar_2$ can absorb the difference $\ops_1 \setminus \ops_2$.
  Thus, we do not fail, but replace $\dirtvar_2$ with $(\ops_1 \setminus \ops_2) \cup \dirtvar_2'$.
  As $\dirtvar_2$ could have appeared in already reduced coercions $\ctx_{dc}'$, we need to reintroduce them back in the unreduced context $\ctx_{dc}$ and start afresh.
  Even though in this case, the reducing context $\ctx_{dc}$ increases in size, the set of operations that parameters in $\ctx_d$ can absorb in the future gets smaller, ensuring termination.

  Finally, we can merge all three stages into a single substitution:
  \begin{align*}
    & \phase_\mathtt{red}(\ctx_s, \ctx_d, \ctx_t, \ctx_{dc}, \ctx_{tc}) =                                                                \\
    & \quad \text{let} \tuple{\ctx_t', \ctx_d', \sigma_t} = \subPhase_t(\ctx_t; \emptyctx, \ctx_d, \emptysub)                            \\
    & \quad \text{let} \tuple{\ctx_{tc}', \ctx_{dc}', \sigma_{tc}} = \subPhase_{tc}(\sigma_t(\ctx_{tc}); \emptyctx, \ctx_{dc}, \sigma_t) \\
    & \quad \text{let} \tuple{\ctx_{dc}'', \ctx_d'', \sigma_{dc}} = \subPhase_{dc}(\ctx_{dc}'; \emptyctx, \ctx_d', \sigma_{tc})          \\
    & \quad \tuple{(\ctx_s, \ctx_d'', \ctx_t', \ctx_{dc}'', \ctx_{tc}'), \sigma_{dc}}
  \end{align*}
  Since $\sigma_t$ maps only type parameters, we need to apply it only to $\ctx_{tc}$, while dirt constraints $\ctx_{dc}$ can be left as is.
  Next, type coercion parameters do not appear anywhere else, so there is no need to apply $\sigma_{tc}$, only pass it to the next stage.
  Finally, even though $\sigma_{dc}$ maps dirt parameters, we can assume that $\ctx_t'$ and $\ctx_{tc}'$ are in canonical form, so there is no need to substitute them.

  \begin{prop}
    If we have $\phase_\mathtt{red}(\ctx) = (\ctx', \sigma)$, then $\validSub{\sigma}{\ctx}{\ctx'}$.
  \end{prop}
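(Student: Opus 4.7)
The plan is to prove each of the three sub-phases $\subPhase_t$, $\subPhase_{tc}$, $\subPhase_{dc}$ separately preserves substitution validity, by induction on its respective recursion. Since $\phase_\mathtt{red}$ threads the accumulated substitution through all three phases (the initial accumulator of $\subPhase_{tc}$ is the output of $\subPhase_t$, and similarly for $\subPhase_{dc}$), the returned $\sigma_{dc}$ is by construction already the full composition, so no further outer appeal to Proposition~\ref{prop:substitution-composition} is required.

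For $\subPhase_t$, I would strengthen the statement to an invariant: if the accumulator is a valid substitution $\validSub{\sigma}{\ctx_s, \ctx_d, \ctx_t^{\text{done}} \cup \ctx_t^{\text{rem}}}{\ctx_s, \ctx_d', \ctx_t'}$, where $\ctx_t^{\text{done}}$ is the prefix already consumed, then the recursive call on $\ctx_t^{\text{rem}}$ returns a valid substitution into the final output context. Induction is on the total size of unprocessed skeletons (decreasing in the last clause, since two strictly smaller skeletons replace one arrow skeleton). Each recursive step adds at most one mapping whose well-formedness is directly checkable: $\tyvar \mapsto \tyUnit$ has the required skeleton $\tyUnit$, and $\tyvar \mapsto \tyvar_1 \dirtyTo{\dirtvar} \tyvar_2$ has skeleton $\ety_1 \to \ety_2$ because the fresh parameters $\dirtvar, \tyvar_1, \tyvar_2$ are added to the output context. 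The case for $\subPhase_{tc}$ is analogous: the mapping $\tycoervar \mapsto (\tycoervar_1 \dirtyTo{\dirtcoervar} \tycoervar_2)$ witnesses the function-type subtyping by the arrow rule in Figure~\ref{fig:coercion-wellformedness}, because structural subtyping forces both sides to share a skeleton $\ety_1 \to \ety_2$ and the fresh coercion parameters are added with exactly the contravariant/covariant signatures needed.

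The main obstacle is $\subPhase_{dc}$, where several cases reintroduce the already-reduced constraints $\ctx_{dc}'$ back into the unreduced pool. I would first dispatch termination by the lexicographic measure suggested in the text (the total room every row variable still has to absorb new operations, strictly decreasing in every reintroducing case, tie-broken by the length of the unprocessed list). Modulo termination, the subcases with $\ops_1 \subseteq \ops_2$ add the witnesses $\ops_1 \cupDirtBoth \emptyset_{\ops_2 \setminus \ops_1}$ or $\ops_1 \cupDirtBoth \dirtcoervar'$ (or no mapping at all in the $\ops_1 \leq \ops_2 \cup \dirtvar_2$ case), each well-formed by the corresponding dirt-coercion rule plus the admissibility of $\emptyset_\dirt$ noted after Figure~\ref{fig:coercion-wellformedness}. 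The delicate subcases with $\ops_1 \not\subseteq \ops_2$ require committing to $\sigma' = \set{\dirtvar_2 \mapsto (\ops_1 \setminus \ops_2) \cup \dirtvar_2'}$; here I would first verify directly that $\sigma'$ is a valid substitution from the current target context to the one in which $\dirtvar_2$ is replaced by $\dirtvar_2'$, then use Propositions~\ref{prop:substitutionPreservesCoercions} and~\ref{prop:substitution-composition} to conclude that $\sigma'(\ctx_{dc} \cup \ctx_{dc}')$ is still well-formed and that $\sigma' \circ \sigma$ remains a valid substitution, at which point the inductive hypothesis applies to the reconstituted context. Combining the three sub-phase invariants then yields $\validSub{\sigma_{dc}}{\ctx}{\ctx'}$ as required.
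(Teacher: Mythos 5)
The paper states this proposition without any proof at all --- it is left as a routine check --- so there is no official argument to compare yours against. Your decomposition (one invariant per sub-phase, induction on the recursion of each, threading the accumulator so that the final $\sigma_{dc}$ is already the full composition) is exactly the argument the authors presumably had in mind, and the individual steps you identify (well-formedness of each added mapping, the arrow rule for the $\tycoervar \mapsto (\tycoervar_1 \dirtyTo{\dirtcoervar} \tycoervar_2)$ case, validity of $\sigma'$ before re-entering the induction in the reintroducing clauses of $\subPhase_{dc}$) are the right ones. Two remarks. First, you spend effort on termination of $\subPhase_{dc}$, but the proposition is conditional on $\phase_\mathtt{red}(\ctx) = (\ctx', \sigma)$ already being defined, so you may simply induct on the call tree of the assumed-terminating computation; the absorption measure is needed elsewhere, not here. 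Second, and more substantively: you observe in passing that the clauses handling $\dirtcoervar : \ops_1 \leq \ops_2 \cup \dirtvar_2$ add ``no mapping at all'' for $\dirtcoervar$, and you let that stand. But the validity rules of Figure~\ref{fig:poly:substitution} require $\sigma$ to be defined on every parameter of the source context, and $\dirtcoervar$ is removed from the target context, so it cannot be covered by the ``trivial mappings are omitted'' convention. To make the invariant go through you must supply the missing witness explicitly, e.g.\ $\dirtcoervar \mapsto \ops_1 \cupDirtBoth \bigl((\ops_2 \setminus \ops_1) \cupDirtSingle \emptyset_{\dirtvar_2}\bigr)$, which is well-formed by the admissibility of $\emptyset_\dirt$ noted after Figure~\ref{fig:coercion-wellformedness}. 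This is a defect of the paper's definition of $\subPhase_{dc}$ rather than of your argument, but your proof as written inherits it; once the witness is added, the proof closes.
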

  \begin{proof}
    An unexciting step-by-step analysis of each of the three stages~\cite[Section~6.1]{Koprivec_2024}.
  \end{proof}

\section{Polarity}
\label{sec:polarity}
As shown in Section~\ref{sec:overview-simplification}, constraint simplification will not necessarily replace types with equivalent, but with compatible ones.
Instead of checking the compatibility of whole types, it is enough to check the compatibility of their free parameters.
Whether a parameter can be safely increased or decreased depends on its \emph{polarity}, which captures their impact on the whole type.

\subsection{Free parameters}

\begin{defi}
  We define a set of free type and dirt parameters $\FP$ as a pair of sets $\tuple{\FP^+, \FP^-}$ for types, dirts and typing contexts as
  \begin{align*}
    \fp{\tyvar}   & = \tuple{\set{\tyvar}, \emptyset}   & \fp{\tyUnit}   & = \tuple{\emptyset, \emptyset} & \fp{\vty \to \cty}        & = \overline{\fp{\vty}} \cup \fp{\cty} \\
    \fp{\dirtvar} & = \tuple{\set{\dirtvar}, \emptyset} & \fp{\emptyset} & = \tuple{\emptyset, \emptyset} & \fp{\set{\op} \cup \dirt} & = \fp{\dirt}
  \end{align*}
  \begin{align*}
    \fp{\dirty{\vty}{\dirt}} & = \fp{\vty} \cup \fp{\dirt}                                                          \\
    \fp{\emptyctx}           & = \tuple{\emptyset, \emptyset} & \fp{\tyCtx, x: \vty} & = \fp{\tyCtx} \cup \fp{\vty}
  \end{align*}
  where all set operations (union, difference, subset, \dots) on pairs are defined component-wise, while $\overline{\tuple{\FP^+, \FP^-}} = \tuple{\FP^-, \FP^+}$ denotes the swap of the components.
\end{defi}

If a parameter is not present in neither $\FP^+$ nor in $\FP^-$, we call it \emph{neutral}.
If it is present in both $\FP^+$ and in $\FP^-$, we call it \emph{bipolar}.

For example, for
\[
  F = \fp{
    (\tyBool \dirtyTo{\dirtvar} \tyvar)
    \dirtyTo{\dirtvar'}
    \tyvar
  }
\]
we have $F^+ = \set{\tyvar, \dirtvar'}$ and $F^- = \set{\tyvar, \dirtvar}$.

Applying a substitution to a type changes its sets of positive and negative parameters, and the new sets can be computed directly from the original ones as
\[
  \sigma(\FP) = \bigcup_{\tyvar^+ \in \FP^+} \fp{\sigma(\tyvar^+)} \cup \overline{\bigcup_{\tyvar^- \in \FP^-} \fp{\sigma(\tyvar^-)}} \cup \bigcup_{\dirtvar^+ \in \FP^+} \fp{\sigma(\dirtvar^+)} \cup \overline{\bigcup_{\dirtvar^- \in \FP^-} \fp{\sigma(\dirtvar^-)}}
\]

\begin{prop}
  For an arbitrary valid substitution $\validSub{\sigma}{\ctx}{\ctx'}$, the following holds:
  \begin{itemize}
    \item if $\tTyCtx{\ctx}{\tyCtx}$, then $\fp{\sigma(\tyCtx)} = \sigma(\fp{\tyCtx})$,
    \item if $\tVty{\ctx}{\tyCtx}{\vty}$, then $\fp{\sigma(\vty)} = \sigma(\fp{\vty})$,
    \item if $\tVty{\ctx}{\tyCtx}{\cty}$, then $\fp{\sigma(\cty)} = \sigma(\fp{\cty})$.
  \end{itemize}
\end{prop}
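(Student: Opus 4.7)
The plan is to proceed by mutual structural induction, establishing all three claims simultaneously together with the auxiliary statement $\fp{\sigma(\dirt)} = \sigma(\fp{\dirt})$ for a well-formed dirt $\tDty{\ctx}{\dirt}$, which is needed for the computation type case. Before the induction, I would first dispatch a small algebraic lemma about the extended action of $\sigma$ on polarity sets: namely that $\sigma(F_1 \cup F_2) = \sigma(F_1) \cup \sigma(F_2)$ and $\sigma(\overline{F}) = \overline{\sigma(F)}$. Both follow directly from unfolding the formula $\sigma(F) = \bigcup_{p \in F^+} \fp{\sigma(p)} \cup \overline{\bigcup_{n \in F^-} \fp{\sigma(n)}}$ and the fact that $(F_1 \cup F_2)^{\pm} = F_1^{\pm} \cup F_2^{\pm}$ and $\overline{F}^+ = F^-$, $\overline{F}^- = F^+$.

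With that lemma in hand, every inductive case is a routine calculation. In the base cases for parameters, e.g. $\fp{\tyvar} = \tuple{\set{\tyvar}, \emptyset}$, the right hand side $\sigma(\fp{\tyvar})$ collapses by the formula to exactly $\fp{\sigma(\tyvar)}$, matching the left hand side since $\sigma$ acts on $\tyvar$ as the identity on types followed by $\fp{\cdot}$. For $\tyUnit$, $\emptyset$, and $\set{\op} \cup \dirt$, both sides are unchanged by $\sigma$ up to recursion into the tail. The interesting step is the function type: by the substitution action $\sigma(\vty \to \cty) = \sigma(\vty) \to \sigma(\cty)$ and the definition of $\fp{\cdot}$, we get
\[
\fp{\sigma(\vty \to \cty)} = \overline{\fp{\sigma(\vty)}} \cup \fp{\sigma(\cty)},
\]
and applying the induction hypotheses together with the two algebraic properties of $\sigma$ yields $\sigma(\overline{\fp{\vty}} \cup \fp{\cty}) = \sigma(\fp{\vty \to \cty})$. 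The computation type case $\dirty{\vty}{\dirt}$ and the typing context extension case $\tyCtx, x : T$ are analogous, using union-preservation.

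I expect no genuine obstacle; the only subtlety is book-keeping, because the formula defining $\sigma$ on polarity sets is asymmetric in how it treats the positive and negative components, so one must be careful that the \emph{bar swap} lemma $\sigma(\overline{F}) = \overline{\sigma(F)}$ is established before tackling the contravariant domain position in $\vty \to \cty$. Once the two algebraic properties are settled, the induction itself is entirely mechanical and the well-formedness assumptions are used only to ensure that each parameter appearing in $\tyCtx$, $\vty$, $\cty$ or $\dirt$ lies in the domain of $\sigma$, so that $\sigma(p)$ and $\sigma(n)$ are well-defined in $\ctx'$.
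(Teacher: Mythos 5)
Your proof is correct: the two distributivity facts $\sigma(F_1 \cup F_2) = \sigma(F_1) \cup \sigma(F_2)$ and $\sigma(\overline{F}) = \overline{\sigma(F)}$ do follow directly from unfolding the definition of $\sigma(\FP)$ (using that componentwise union and the swap commute, and $\overline{\overline{F}} = F$), and with them the mutual induction over dirt, value types, computation types, and typing contexts goes through exactly as you describe, the only contravariant case being the domain of $\vty \to \cty$. The paper omits the proof of this proposition as routine, and your argument is precisely the standard one it implicitly relies on, so there is nothing to compare beyond noting that your explicit treatment of the swap lemma is the one genuinely non-trivial bookkeeping point.
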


\subsection{Coercion families}

As discussed in Section~\ref{sec:overview-simplification}, positive parameters can be safely decreased, and negative parameters can be safely increased.
Consider two instantiations of free parameters of type $\vty = (\tyBool \dirtyTo{\dirtvar} \tyvar) \dirtyTo{\dirtvar'} \tyvar$:
\begin{align*}
  \sigma_1 & = \set{\tyvar \mapsto \tyInt, \dirtvar \mapsto \set{\op, \op'}, \dirtvar' \mapsto \emptyset} \\
  \sigma_2 & = \set{\tyvar \mapsto \tyInt, \dirtvar \mapsto \set{\op}, \dirtvar' \mapsto \set{\op}}
\end{align*}
We then have
\[
  \sigma_1(\vty)
  =
  (\tyBool \longDirtyTo{\set{\op, \op'}} \tyInt) \to \tyInt
  \le
  (\tyBool \longDirtyTo{\set{\op}} \tyInt) \longDirtyTo{\set{\op}} \tyInt
  =
  \sigma_2(\vty)
\]
witnessed by the coercion
\[
  (\refl{\tyBool} \longDirtyTo{\set{\op} \cup \emptyset_{\set{\op'}}} \refl{\tyInt}) \longDirtyTo{\emptyset_{\set{\op}}} \refl{\tyInt}
\]
One can see that the structure of the coercion exactly matches the structure of the parametric type, where at leaves, we have coercions between instantiations of corresponding parameters whose direction is determined by the polarity of parameters.
Note that $\tyvar$ is bipolar, thus it can be assigned only reflexive coercions.

We consolidate all these coercions into a \emph{coercion family} $\coerSubst$, which assigns a type coercion~$\coerSubst(\tyvar)$ to each type parameter $\tyvar$, and a dirt coercion~$\coerSubst(\dirtvar)$ to each dirt parameter $\dirtvar$.
The coercion corresponding to the above example would thus be:
\[
  \coerSubst = \set{\tyvar \mapsto \refl{\tyInt}, \dirtvar \mapsto (\set{\op} \cup \emptyset_{\set{\op'}}), \dirtvar' \mapsto \emptyset_{\set{\op}}}
\]

\begin{defi}
  Take substitutions $\validSub{\sigma_1}{\ctx}{\ctx'}$, $\validSub{\sigma_2}{\ctx}{\ctx'}$ and a set of free parameters $\FP$.
  Then, a family $\coerSubst$ as described above is \emph{a coercion between $\sigma_1$ and $\sigma_2$ relative to $\FP$}, written as:
  \[
    \tCoerSub{\ctx'}{\coerSubst}{\sigma_1 \le_{\FP} \sigma_2}
  \]
  if:
  \begin{itemize}
    \item for all $\tyvar \in \FP^+$, we have $\tCoerVal{\ctx'}{\coerSubst(\tyvar)} \sigma_1(\tyvar) \leq \sigma_2(\tyvar)$
    \item for all $\tyvar \in \FP^-$, we have $\tCoerVal{\ctx'}{\coerSubst(\tyvar)} \sigma_2(\tyvar) \leq \sigma_1(\tyvar)$
    \item for all $\dirtvar \in \FP^+$, we have $\tCoerDirt{\ctx'}{\coerSubst(\dirtvar)} \sigma_1(\dirtvar) \leq \sigma_2(\dirtvar)$
    \item for all $\dirtvar \in \FP^-$, we have $\tCoerDirt{\ctx'}{\coerSubst(\dirtvar)} \sigma_2(\dirtvar) \leq \sigma_1(\dirtvar)$
  \end{itemize}
\end{defi}

Given a coercion family~$\coerSubst$, we can extend dirt coercions~$\coerSubst(\dirt)$ to arbitrary dirts~$\dirt$ as long as $\fp{\dirt} \subseteq \FP$ by:
\begin{align*}
  \coerSubst(\emptyset) & = \refl{\emptyset} & \coerSubst(\set{\op} \cup \dirt) & = \set{\op} \cup \coerSubst(\dirt)
\end{align*}
As expected, coercions on parameters give rise to a coercion on the whole dirt.
\begin{prop}
  For any valid substitutions $\validSub{\sigma_1}{\ctx}{\ctx'}, \validSub{\sigma_2}{\ctx}{\ctx'}$, such that $\tCoerSub{\ctx'}{\coerSubst}{\sigma_1 \le_{\FP} \sigma_2}$,
  and an arbitrary dirt $\tDty{\ctx}{\dirt}$ such that $\fp{\dirt} \subseteq \FP$, we have
  \[
    \tCoerDirt{\ctx'}{\coerSubst(\dirt)}{\sigma_1(\dirt) \leq \sigma_2(\dirt)}
  \]
\end{prop}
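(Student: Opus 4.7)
The plan is to prove this by a straightforward structural induction on the dirt $\dirt$, using the grammar $\dirt ::= \dirtvar \mid \emptyset \mid \set{\op} \cup \dirt'$. Throughout, I would use the observation that $\fp{\dirt} \subseteq \FP$ is preserved (or strengthened) when moving to subterms, so the induction hypothesis always applies.

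For the base case $\dirt = \emptyset$, by definition $\coerSubst(\emptyset) = \refl{\emptyset}$, and $\sigma_1(\emptyset) = \sigma_2(\emptyset) = \emptyset$, so the claim follows immediately from the $\refl{\emptyset}$ typing rule. For the base case $\dirt = \dirtvar$, we have $\fp{\dirtvar} = \tuple{\set{\dirtvar}, \emptyset}$, so $\dirtvar \in \FP^+$. The coercion family therefore provides $\tCoerDirt{\ctx'}{\coerSubst(\dirtvar)}{\sigma_1(\dirtvar) \le \sigma_2(\dirtvar)}$ directly from its definition. Note that here we are never in the negative-polarity clause of the coercion family definition, because a bare dirt in this proposition contributes only positively.

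For the inductive step $\dirt = \set{\op} \cup \dirt'$, we have $\fp{\set{\op} \cup \dirt'} = \fp{\dirt'} \subseteq \FP$, so the induction hypothesis gives $\tCoerDirt{\ctx'}{\coerSubst(\dirt')}{\sigma_1(\dirt') \le \sigma_2(\dirt')}$. The defining equation $\coerSubst(\set{\op} \cup \dirt') = \set{\op} \cupDirtBoth \coerSubst(\dirt')$ (I read the $\set{\op} \cup \coerSubst(\dirt)$ notation in the definition as the $\cupDirtBoth$ constructor, since both sides extend by the same operation), together with the coercion formation rule for $\set{\op} \cupDirtBoth \coerDirt$, yields the required $\tCoerDirt{\ctx'}{\set{\op} \cupDirtBoth \coerSubst(\dirt')}{\set{\op} \cup \sigma_1(\dirt') \le \set{\op} \cup \sigma_2(\dirt')}$. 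Since $\sigma_i(\set{\op} \cup \dirt') = \set{\op} \cup \sigma_i(\dirt')$ by the substitution action on dirt, this concludes the inductive case.

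There is no real obstacle here; the proof is essentially a bookkeeping exercise connecting the recursive definitions of $\coerSubst(\dirt)$, $\sigma_i(\dirt)$, and $\fp{\dirt}$. The one mildly subtle point worth flagging in the write-up is that a dirt variable $\dirtvar$ always contributes positively when viewed in isolation (per the $\fp$ definition), so only the positive clause of the coercion family is ever invoked in the base case. The analogous statements for value and computation types, which would be needed later for the simplification phases, would require a mutual induction that also uses the contravariance swap $\overline{\cdot}$ in the function-type clause of $\FP$, but this is not needed for the present proposition.
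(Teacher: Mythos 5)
Your proof is correct and is exactly the routine structural induction on $\dirt$ that the paper leaves implicit (this proposition is stated without proof), with the right handling of the two key points: the base case $\dirtvar$ uses only the positive clause of the coercion family since $\fp{\dirtvar}$ places $\dirtvar$ in $\FP^+$, and the step case correctly reads $\coerSubst(\set{\op} \cup \dirt)$ as the $\cupDirtBoth$ constructor so that the formation rule yields $\set{\op} \cup \sigma_1(\dirt) \le \set{\op} \cup \sigma_2(\dirt)$. No gaps.
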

Similarly, we can define a value coercion $\coerSubst(\vty)$ and a computation coercion $\coerSubst(\cty)$ for arbitrary $\vty, \cty$ such that $\fp{\vty} \subseteq \FP$ or $\fp{\cty} \subseteq \FP$:

\begin{align*}
  \coerSubst(\tyUnit)       & = \refl{\tyUnit}                        &
  \coerSubst(\vty \to \cty) & = \coerSubst(\vty) \to \coerSubst(\cty) & \coerSubst(\dirty{\vty}{\dirt}) & = \dirty{\coerSubst(\vty)}{\coerSubst(\dirt)} \\
\end{align*}

\begin{prop}
  \label{prop:subtypingOnFreeParams}
  For any valid substitutions $\validSub{\sigma_1}{\ctx}{\ctx'}, \validSub{\sigma_2}{\ctx}{\ctx'}$, such that $\tCoerSub{\ctx'}{\coerSubst}{\sigma_1 \le_{\FP} \sigma_2}$,
  for an arbitrary value type $\tVty{\ctx}{\vty}{\skel}$ such that $\fp{\vty} \subseteq \FP$, and for an arbitrary computation type $\tCty{\ctx}{\cty}{\skel}$ such that $\fp{\cty} \subseteq \FP$, we have
  \[
    \tCoerVal{\ctx'}{\coerSubst(\vty)}{\sigma_1(\vty) \leq \sigma_2(\vty)}
    \qquad
    \text{and}
    \qquad
    \tCoerComp{\ctx'}{\coerSubst(\cty)}{\sigma_1(\cty) \leq \sigma_2(\cty)}
  \]
\end{prop}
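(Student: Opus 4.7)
The plan is to prove both claims simultaneously by mutual induction on the structure of the value type $\vty$ and the computation type $\cty$, treating the preceding proposition on dirts as a black box.

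The key preliminary observation is a symmetry of the coercion family judgement: $\tCoerSub{\ctx'}{\coerSubst}{\sigma_1 \le_{\FP} \sigma_2}$ is equivalent to $\tCoerSub{\ctx'}{\coerSubst}{\sigma_2 \le_{\overline{\FP}} \sigma_1}$, because simultaneously flipping the polarities in $\FP$ and swapping $\sigma_1 \leftrightarrow \sigma_2$ in the subtyping direction yields exactly the same four conditions on the components of $\coerSubst$. This symmetry is what lets the induction pass contravariantly through the domain of a function type.

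The base cases are routine. For $\vty = \tyUnit$ the coercion $\coerSubst(\tyUnit) = \refl{\tyUnit}$ is immediately well-typed by the reflexivity rule for unit. For $\vty = \tyvar$ the assumption $\fp{\tyvar} = \tuple{\set{\tyvar}, \emptyset} \subseteq \FP$ forces $\tyvar \in \FP^+$, so the defining clause of $\coerSubst : \sigma_1 \le_{\FP} \sigma_2$ directly gives $\coerSubst(\tyvar) : \sigma_1(\tyvar) \le \sigma_2(\tyvar)$. The computation case $\cty = \dirty{\vty}{\dirt}$ is also straightforward: from $\fp{\cty} = \fp{\vty} \cup \fp{\dirt} \subseteq \FP$ we get $\fp{\vty} \subseteq \FP$ and $\fp{\dirt} \subseteq \FP$, apply the induction hypothesis on $\vty$ together with the preceding proposition on $\dirt$, and combine the two with the computation coercion rule, recalling that $\coerSubst(\cty) = \dirty{\coerSubst(\vty)}{\coerSubst(\dirt)}$ and that substitution commutes with $\dirty{\cdot}{\cdot}$.

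The interesting step, and the main obstacle, is the function case $\vty = \vty_1 \to \cty_2$. Here $\fp{\vty_1 \to \cty_2} = \overline{\fp{\vty_1}} \cup \fp{\cty_2} \subseteq \FP$, so $\fp{\cty_2} \subseteq \FP$ but only $\fp{\vty_1} \subseteq \overline{\FP}$. The induction hypothesis on $\cty_2$ gives $\coerSubst(\cty_2) : \sigma_1(\cty_2) \le \sigma_2(\cty_2)$, but the function coercion rule needs a coercion on $\vty_1$ in the \emph{opposite} direction, namely $\sigma_2(\vty_1) \le \sigma_1(\vty_1)$, because of the contravariance of the arrow. This is precisely where the symmetry observation is put to use: since $\coerSubst : \sigma_2 \le_{\overline{\FP}} \sigma_1$ holds and $\fp{\vty_1} \subseteq \overline{\FP}$, the induction hypothesis applied in the symmetric form yields exactly $\coerSubst(\vty_1) : \sigma_2(\vty_1) \le \sigma_1(\vty_1)$. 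Combining the two coercions via the function coercion rule, together with the definition $\coerSubst(\vty_1 \to \cty_2) = \coerSubst(\vty_1) \to \coerSubst(\cty_2)$ and the fact that substitution distributes over the arrow, closes the case and completes the induction.
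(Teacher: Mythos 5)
Your proof is correct and is essentially the intended argument: the paper states this proposition without proof, and the evident route is exactly your mutual structural induction on $\vty$ and $\cty$, with the base cases as you describe and the dirt component discharged by the preceding proposition. The symmetry observation you isolate --- that $\coerSubst \colon \sigma_1 \le_{\FP} \sigma_2$ is the same judgement as $\coerSubst \colon \sigma_2 \le_{\overline{\FP}} \sigma_1$ --- is precisely Proposition~\ref{prop:inverseFreeParams}, which the paper records separately, and your use of it to reverse the direction when passing contravariantly through a function domain (where $\fp{\vty_1} \subseteq \overline{\FP}$) is the one genuinely nontrivial step, handled correctly.
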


In addition, coercion families satisfy a number of meta-theoretical properties that will prove useful when establishing soundness of simplifications.

\begin{prop}
  \label{prop:inverseFreeParams}
  Let $\validSub{\sigma_1}{\ctx}{\ctx'}, \validSub{\sigma_2}{\ctx}{\ctx'}$ be valid substitutions and $\FP$ a set of free parameters such that $\tCoerSub{\ctx'}{\coerSubst}{\sigma_1 \leq_{\FP} \sigma_2}$, then $\tCoerSub{\ctx'}{\coerSubst}{\sigma_2 \leq_{\overline{\FP}} \sigma_1}$.
\end{prop}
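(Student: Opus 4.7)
The plan is to prove this essentially by unfolding definitions, since swapping $\sigma_1$ and $\sigma_2$ on the right-hand side of $\le_{\FP}$ corresponds precisely to swapping the polarity components of $\FP$.

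First, I recall that $\overline{\FP}$ is defined component-wise as swapping the two sets, so $\overline{\FP}^+ = \FP^-$ and $\overline{\FP}^- = \FP^+$. To establish $\tCoerSub{\ctx'}{\coerSubst}{\sigma_2 \leq_{\overline{\FP}} \sigma_1}$, I must verify the four clauses in the definition. For a type parameter $\tyvar \in \overline{\FP}^+$, the obligation is $\tCoerVal{\ctx'}{\coerSubst(\tyvar)}{\sigma_2(\tyvar) \leq \sigma_1(\tyvar)}$; but since $\overline{\FP}^+ = \FP^-$, this is exactly the second clause of the hypothesis $\tCoerSub{\ctx'}{\coerSubst}{\sigma_1 \leq_{\FP} \sigma_2}$. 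Similarly, for $\tyvar \in \overline{\FP}^-$, the obligation $\tCoerVal{\ctx'}{\coerSubst(\tyvar)}{\sigma_1(\tyvar) \leq \sigma_2(\tyvar)}$ matches the first clause of the hypothesis since $\overline{\FP}^- = \FP^+$.

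Next, the two clauses concerning dirt parameters $\dirtvar \in \overline{\FP}^+$ and $\dirtvar \in \overline{\FP}^-$ follow by the same reasoning against the corresponding dirt clauses of the hypothesis. Since $\coerSubst$ itself is unchanged, the same coercion family witnesses the swapped relation.

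I do not foresee any real obstacle here: the proposition is a bookkeeping lemma whose content is entirely captured by the observation that the definition of $\tCoerSub{\ctx'}{\coerSubst}{\cdot \leq_{\FP} \cdot}$ is symmetric under simultaneously exchanging the two substitutions and flipping the polarity labels in $\FP$. The proof amounts to matching the four required inequalities against the four hypothesized ones, so no additional lemma or construction is needed.
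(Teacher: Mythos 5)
Your proof is correct and is exactly the intended argument: the paper states this proposition without proof precisely because it reduces to the observation that $\overline{\FP}^+ = \FP^-$ and $\overline{\FP}^- = \FP^+$, so the four obligations for $\sigma_2 \leq_{\overline{\FP}} \sigma_1$ are literally the four hypotheses for $\sigma_1 \leq_{\FP} \sigma_2$, witnessed by the same unchanged family $\coerSubst$. Nothing is missing.
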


\begin{prop}
  \label{prop:substitutionTransitivity}
  Take substitutions $\validSub{\sigma_1}{\ctx}{\ctx'}$, $\validSub{\sigma_2}{\ctx}{\ctx'}$, $\validSub{\sigma_3}{\ctx}{\ctx'}$, and coercion families $\tCoerSub{\ctx'}{\coerSubst_1}{\sigma_1 \leq_{\FP} \sigma_2}$ and $\tCoerSub{\ctx'}{\coerSubst_2}{\sigma_2 \leq_{\FP} \sigma_3}$.
  Then $\tCoerSub{\ctx'}{(\coerSubst_2 \circ \coerSubst_1)}{\sigma_1 \leq_{\FP} \sigma_3}$, where $(\coerSubst_2 \circ \coerSubst_1)(\tyvar) := \coerSubst_2(\tyvar) \circ \coerSubst_1(\tyvar)$ and $(\coerSubst_2 \circ \coerSubst_1)(\dirtvar) := \coerSubst_2(\dirtvar) \circ \coerSubst_1(\dirtvar)$.
\end{prop}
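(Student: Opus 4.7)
The plan is to unfold the definition of $\tCoerSub{\ctx'}{\coerSubst_2 \circ \coerSubst_1}{\sigma_1 \leq_{\FP} \sigma_3}$ and check each of its four clauses by a case split on whether the parameter $a \in \FP$ is a type or dirt parameter and on its polarity. The only real tool needed is the composition rule of Figure~\ref{fig:coercion-wellformedness}: if $\coerVal^1 : \vty \leq \vty'$ and $\coerVal^2 : \vty' \leq \vty''$, then $\coerVal^2 \circ \coerVal^1 : \vty \leq \vty''$, and likewise for dirt coercions.

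For a positive type parameter $\tyvar \in \FP^+$ the hypotheses give
\[
  \tCoerVal{\ctx'}{\coerSubst_1(\tyvar)}{\sigma_1(\tyvar) \leq \sigma_2(\tyvar)}
  \quad\text{and}\quad
  \tCoerVal{\ctx'}{\coerSubst_2(\tyvar)}{\sigma_2(\tyvar) \leq \sigma_3(\tyvar)},
\]
which align perfectly for the composition rule, yielding $\coerSubst_2(\tyvar) \circ \coerSubst_1(\tyvar) : \sigma_1(\tyvar) \leq \sigma_3(\tyvar)$, exactly the positive clause. Positive dirt parameters follow verbatim using the dirt composition rule in place of the value one. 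These cases are essentially immediate.

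The negative case is the one requiring care. For $\tyvar \in \FP^-$ the hypotheses flip orientation, giving $\coerSubst_1(\tyvar) : \sigma_2(\tyvar) \leq \sigma_1(\tyvar)$ and $\coerSubst_2(\tyvar) : \sigma_3(\tyvar) \leq \sigma_2(\tyvar)$, while the goal is $\sigma_3(\tyvar) \leq \sigma_1(\tyvar)$. The bridging type $\sigma_2(\tyvar)$ again appears between the two coercions, so the composition rule still fires, but now the two coercions must be composed in the opposite textual order. A uniform way to obtain this without repeating the argument is to apply Proposition~\ref{prop:inverseFreeParams} to both hypotheses, turning every negative parameter into a positive one for $\overline{\FP}$, reuse the already-settled positive argument to build the composed family, and finally invoke Proposition~\ref{prop:inverseFreeParams} once more to return to the original polarity. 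Negative dirt parameters are handled in exactly the same way.

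I expect the only real obstacle to be this bookkeeping around the composition order for the negative-polarity parameters; the rest of the proof is just a four-way case split, each branch of which is a single application of the appropriate composition rule, so the statement should follow with no further machinery.
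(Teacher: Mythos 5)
Your proof is correct and follows essentially the same route as the paper's: unfold the definition of $\le_{\FP}$, split on the kind and polarity of each parameter, and close each branch with the composition rule of Figure~\ref{fig:coercion-wellformedness}; the paper writes out only the positive type case and declares the rest analogous. Your extra care about the negative case is well placed: as literally written, $(\coerSubst_2 \circ \coerSubst_1)(a) = \coerSubst_2(a) \circ \coerSubst_1(a)$ is \emph{not} well-formed for $a \in \FP^-$ (its inner codomain $\sigma_1(a)$ does not meet the outer domain $\sigma_3(a)$), and the well-formed witness of $\sigma_3(a) \le \sigma_1(a)$ is $\coerSubst_1(a) \circ \coerSubst_2(a)$ --- a genuine imprecision in the statement that the paper's ``analogous'' glosses over. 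One caveat on your proposed shortcut: routing through Proposition~\ref{prop:inverseFreeParams} cannot be applied to the whole family at once, since the composite it produces on $\overline{\FP}^+$ is $\coerSubst_1(a) \circ \coerSubst_2(a)$, which is the wrong (ill-formed) order for the parameters that were originally positive; so it must be invoked parameter-by-parameter, at which point it is just the direct two-line argument again. The direct argument in each branch is the cleaner way to finish, and your proof is otherwise complete.
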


\begin{proof}
  Take an arbitrary positive type parameter $\tyvar \in \FP^+$.
  We immediately get coercions $\tCoerVal{\ctx'}{\coerSubst_1(\tyvar)}{\sigma_1(\tyvar) \leq \sigma_2(\tyvar)}$ and $\tCoerVal{\ctx'}{\coerSubst_2(\tyvar)}{\sigma_2(\tyvar) \leq \sigma_3(\tyvar)}$, thus $\tCoerVal{\ctx'}{\coerSubst_2(\tyvar) \circ \coerSubst_1(\tyvar)}{\sigma_1(\tyvar) \leq \sigma_3(\tyvar)}$ as required.
  The proof for positive dirt parameters, and for negative type and dirt parameters is analogous.
\end{proof}

\begin{prop}
  \label{prop:postcomposition}
  Take substitutions $\validSub{\sigma}{\ctx_1}{\ctx_2}$, $\validSub{\sigma_1}{\ctx_2}{\ctx'}$, $\validSub{\sigma_2}{\ctx_2}{\ctx'}$, and a coercion family $\tCoerSub{\ctx'}{\coerSubst}{\sigma_1 \leq_{\sigma(\FP)} \sigma_2}$, then $\tCoerSub{\ctx'}{\coerSubst \circ \sigma}{(\sigma_1 \circ \sigma) \leq_{\FP} (\sigma_2 \circ \sigma)}$, where $(\coerSubst \circ \sigma)(\tyvar) := \coerSubst(\sigma(\tyvar))$ and $(\coerSubst \circ \sigma)(\dirtvar) := \coerSubst(\sigma(\dirtvar))$.
\end{prop}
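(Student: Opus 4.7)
The plan is to pick an arbitrary parameter $a \in \FP$ and verify case by case, on its polarity and sort (type or dirt), that $(\coerSubst \circ \sigma)(a) = \coerSubst(\sigma(a))$ witnesses the required inequality between $\sigma_1(\sigma(a))$ and $\sigma_2(\sigma(a))$. Note that, unlike in Proposition~\ref{prop:substitutionTransitivity}, $\sigma(a)$ is no longer guaranteed to be a parameter but may be an arbitrary type or dirt, so I must use the extension of $\coerSubst$ to compound types/dirts defined just before Proposition~\ref{prop:subtypingOnFreeParams}, rather than its action on parameters alone.

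First, I would establish the key side condition needed to apply that extension: for $a \in \FP^+$, the free parameters $\fp{\sigma(a)}$ lie inside $\sigma(\FP)$, while for $a \in \FP^-$, they lie inside $\overline{\sigma(\FP)}$. Both follow immediately by unfolding the definition $\sigma(\FP) = \bigcup_{p \in \FP^+} \fp{\sigma(p)} \cup \overline{\bigcup_{n \in \FP^-} \fp{\sigma(n)}}$: a positive parameter contributes $\fp{\sigma(p)}$ directly, whereas a negative one contributes the swapped pair, so taking the complement restores the required direction.

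For a positive type parameter $\tyvar \in \FP^+$, Proposition~\ref{prop:subtypingOnFreeParams} applied to the value type $\sigma(\tyvar)$ (which is well-formed in $\ctx_2$ by Proposition~\ref{prop:substitutionPreservesTypes}) yields $\tCoerVal{\ctx'}{\coerSubst(\sigma(\tyvar))}{\sigma_1(\sigma(\tyvar)) \leq \sigma_2(\sigma(\tyvar))}$, which is exactly the condition for $\coerSubst \circ \sigma$ at $\tyvar$. For a positive dirt parameter $\dirtvar \in \FP^+$, I would use the analogous dirt statement (the proposition stated just above Proposition~\ref{prop:subtypingOnFreeParams}). For a negative parameter $a \in \FP^-$, I would first invoke Proposition~\ref{prop:inverseFreeParams} to upgrade the hypothesis to $\tCoerSub{\ctx'}{\coerSubst}{\sigma_2 \leq_{\overline{\sigma(\FP)}} \sigma_1}$, and then apply Proposition~\ref{prop:subtypingOnFreeParams} to $\sigma(a)$, whose free parameters lie in $\overline{\sigma(\FP)}$, to obtain $\coerSubst(\sigma(a)) : \sigma_2(\sigma(a)) \leq \sigma_1(\sigma(a))$, which is precisely what is required for a negative parameter.

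The only delicate step is the bookkeeping around polarity: making sure that in each of the four cases the free-parameter set of $\sigma(a)$ falls into the correctly oriented slice of $\sigma(\FP)$, so that the extension of $\coerSubst$ is legal. Once that is in place, the proof is a mechanical combination of Propositions~\ref{prop:subtypingOnFreeParams} and~\ref{prop:inverseFreeParams} with no further computation needed.
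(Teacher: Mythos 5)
Your proposal is correct and follows essentially the same route as the paper's proof: both reduce the claim pointwise to Proposition~\ref{prop:subtypingOnFreeParams} applied to the (possibly compound) type or dirt $\sigma(a)$, using the definition of $\sigma(\FP)$ to check that $\fp{\sigma(a)}$ lands in the correctly oriented slice. The paper only writes out the positive type-parameter case and declares the rest analogous, whereas you additionally spell out the negative case via Proposition~\ref{prop:inverseFreeParams}; that is a faithful expansion of the omitted cases, not a different argument.
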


\begin{proof}
  Take a positive type parameter $\tyvar \in \FP^+$.
  We have $\fp{\sigma(\tyvar)} \subseteq \sigma(\FP)$ by definition of~$\sigma(\FP)$, thus from Proposition~\ref{prop:subtypingOnFreeParams}, we get $\tCoerVal{\ctx'}{\coerSubst(\sigma(\tyvar))}{\sigma_1(\sigma(\tyvar)) \leq_{\FP} \sigma_2(\sigma(\tyvar))}$ or equivalently $\tCoerVal{\ctx'}{\coerSubst(\sigma(\tyvar))}{(\sigma_1 \circ \sigma)(\tyvar) \leq_{\FP} (\sigma_2 \circ \sigma)(\tyvar)}$.
  The proof for other kinds of parameters is analogous.
\end{proof}

\section{Simplifications}
\label{sec:simplifications}
Finally, we have all the ingredients required for our constraint simplification algorithm.
We split the algorithm into multiple independent phases focusing on specific patterns often present in the inferred constraints.
In Section~\ref{sub:simplificationPhases}, we shall discuss general properties of such phases, and afterwards examine their particular instances.

As we have seen in Section~\ref{sub:reduced-contexts}, arbitrary contexts can be transformed into equivalent ones that are representable as graphs, with parameters as vertices and coercions as edges.
We continue this natural direction in Section~\ref{sub:removing-parallel-edges} by removing loops and parallel edges to obtain simple graphs, and in Section~\ref{sub:contracting-sccs} by contracting strongly connected components to obtain directed acyclic graphs (DAGs).
Finally, we end up with two heuristics that significantly simplify constraint graphs that occur in practice: in Section~\ref{sub:compressing-bridges} we attempt to remove as much of redundant intermediate type constraints as possible, and in Section~\ref{sub:minimizing} we do the same with dirt parameters.

\subsection{Simplification phases}
\label{sub:simplificationPhases}

A \emph{phase}~$\phase$ takes as inputs a parameter context~$\ctx$ and a set of free parameters~$\FP$, denoting the polarity of parameters in the type being simplified, and produces a simplified parameter context~$\ctx'$ together with a substitution $\validSub{\sigma}{\ctx}{\ctx'}$ that can be used to replace the removed parameters.

We shall compare contexts in terms of all their possible \emph{instantiations} in some other context, say $\ctxuse$, which we fix throughout the rest of the paper.
Take a polymorphic value $v$ with parameters in context $\ctx$.
Whenever this value is used in $\ctxuse$, we need to instantiate all its parameters to their counterparts (types, coercions, etc.) that need to be valid in $\ctxuse$.
Instantiating a polymorphic value thus amounts to a substitution $\validSub{\intp}{\ctx}{\ctxuse}$, which we shall label with $\intp$ to highlight its role.

From Proposition~\ref{prop:substitution-composition}, it immediately follows that any instantiation $\validSub{\intp'}{\ctx'}{\ctxuse}$ of the simplified context also produces an instantiation $\validSub{\intp' \circ \sigma}{\ctx}{\ctxuse}$ of the original context.
The converse does not hold: not every instantiation of the original context can be reconstructed from one of the simplified context.
This is for the simple reason that simplifications remove parameters and hence decrease the degrees of freedom.
However, we can recover the lost freedom using subtyping.

\begin{defi}
  \label{def:CompletePhase}
  Phase $\phase$ is \emph{complete} if for any $\phase(\ctx, \FP) = \tuple{\ctx', \sigma}$ and any instantiation $\validSub{\intp}{\ctx}{\ctxuse}$, there exists an instantiation $\validSub{\intp'}{\ctx'}{\ctxuse}$ and a coercion family $\tCoerSub{\ctxuse}{\coerSubst_{\intp}}{(\intp' \circ \sigma) \leq_{\FP} \intp}$.
\end{defi}

\begin{defi}
  Let $\phase_1$ and $\phase_2$ be two phases.
  The \emph{composition of phases} $\phase_2 \circ \phase_1$ is defined as follows:
  \[
    (\phase_2 \circ \phase_1)(\ctx, \FP) = (\ctx'', \sigma_2 \circ \sigma_1) \ \textrm{where } \phase_1(\ctx, \FP) = \tuple{\ctx', \sigma_1} \ \textrm{and } \phase_2(\ctx', \sigma_1(\FP)) = \tuple{\ctx'', \sigma_2}
  \]
\end{defi}

Proposition~\ref{prop:substitution-composition} implies that the composition of two phases is also a phase.

\begin{prop}
  Let $\phase_1$ and $\phase_2$ be two phases. If $\phase_1$ and $\phase_2$ are complete, then $\phase_2 \circ \phase_1$ is also complete.
\end{prop}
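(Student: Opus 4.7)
The plan is to unfold the definition of completeness and chain together the two phases' completeness witnesses using the already-established meta-theoretic propositions on coercion families.

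First, I would fix an arbitrary instantiation $\validSub{\intp}{\ctx}{\ctxuse}$ of the original context. Applying completeness of $\phase_1$ to $(\ctx, \FP)$ and the instantiation $\intp$ yields an intermediate instantiation $\validSub{\intp'}{\ctx'}{\ctxuse}$ together with a coercion family $\tCoerSub{\ctxuse}{\coerSubst_1}{(\intp' \circ \sigma_1) \leq_{\FP} \intp}$. Since $\intp'$ is itself a valid instantiation of the intermediate context $\ctx'$, completeness of $\phase_2$ applied to $(\ctx', \sigma_1(\FP))$ and $\intp'$ gives a final instantiation $\validSub{\intp''}{\ctx''}{\ctxuse}$ together with a coercion family $\tCoerSub{\ctxuse}{\coerSubst_2}{(\intp'' \circ \sigma_2) \leq_{\sigma_1(\FP)} \intp'}$.

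The remaining step is to transport the second coercion family along $\sigma_1$ and then compose it with the first. Applying Proposition~\ref{prop:postcomposition} to $\coerSubst_2$ with $\sigma := \sigma_1$ converts the subscript $\sigma_1(\FP)$ back to $\FP$, producing $\tCoerSub{\ctxuse}{\coerSubst_2 \circ \sigma_1}{((\intp'' \circ \sigma_2) \circ \sigma_1) \leq_{\FP} (\intp' \circ \sigma_1)}$. By associativity of substitution composition, the left-hand side equals $\intp'' \circ (\sigma_2 \circ \sigma_1)$, which is exactly the composite substitution produced by $\phase_2 \circ \phase_1$. Then Proposition~\ref{prop:substitutionTransitivity} applied to this family and $\coerSubst_1$ yields a single coercion family $\coerSubst_1 \circ (\coerSubst_2 \circ \sigma_1)$ witnessing $(\intp'' \circ (\sigma_2 \circ \sigma_1)) \leq_{\FP} \intp$, which is precisely the condition required by Definition~\ref{def:CompletePhase} for $\phase_2 \circ \phase_1$.

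There is no real obstacle here: the proof is essentially a bookkeeping exercise once the two propositions on coercion families (postcomposition to shift the polarity subscript and transitivity to chain the two inequalities) are available. The only subtle point worth checking carefully is that $\phase_2$ is indeed invoked with the correct polarity set $\sigma_1(\FP)$, matching the subscript produced by $\phase_1$, so that the postcomposition step applies verbatim and the resulting families can be composed in a well-typed manner.
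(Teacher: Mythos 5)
Your proof is correct and follows essentially the same route as the paper's: unfold completeness of $\phase_1$ and $\phase_2$ in turn, use Proposition~\ref{prop:postcomposition} to transport the second coercion family along $\sigma_1$ so its polarity subscript becomes $\FP$, and then chain the two families with Proposition~\ref{prop:substitutionTransitivity}. The only difference is cosmetic (the order in which you write the composite family), so there is nothing to add.
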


\begin{proof}
  Let $\phase_1$ and $\phase_2$ be two complete phases and $\validSub{\intp}{\ctx}{\ctxuse}$ be a valid instantiation.
  We need to show that there exists an instantiation $\validSub{\intp''}{\ctx''}{\ctxuse}$ and a coercion family $\tCoerSub{\ctxuse}{\coerSubst}{(\intp'' \circ \sigma_2 \circ \sigma_1) \leq_{\FP} \intp}$.

  Since $\phase_1$ is complete, there exists $\validSub{\intp'}{\ctx'}{\ctxuse}$ and $\tCoerSub{\ctxuse}{\coerSubst_{\intp}}{(\intp' \circ \sigma_1) \leq_{\FP} \intp}$.
  Furthermore, by completeness of $\phase_2$, there exists an instantiation $\validSub{\intp''}{\ctx''}{\ctxuse}$ and a coercion family $\tCoerSub{\ctxuse}{\coerSubst_{\intp'}}{(\intp'' \circ \sigma_2) \leq_{\sigma_1(\FP)} \intp'}$.
  Proposition~\ref{prop:postcomposition} implies that we have $\tCoerSub{\ctxuse}{\coerSubst_{\intp'} \circ \sigma_1}{(\intp'' \circ \sigma_2 \circ \sigma_1) \leq_{\FP} (\intp' \circ \sigma_1)}$.
  Finally, Proposition~\ref{prop:substitutionTransitivity} implies that, as required, $\tCoerSub{\ctxuse}{(\coerSubst_{\intp'} \circ \sigma_1 \circ \coerSubst_{\intp})}{(\intp'' \circ \sigma_2 \circ \sigma_1) \leq_{\FP} \intp}$.
\end{proof}


\subsection{Removing loops and parallel edges}
\label{sub:removing-parallel-edges}

An observant reader might have noticed that our graphs might have loops if the context contains coercion parameters of the form $\tycoervar : \tyvar \le \tyvar$ or $\dirtcoervar : \dirtvar \le \ops \cup \dirtvar$.
However, these are easily disposed by a strengthening
\begin{align*}
  \sigma_{l} & = \{ \tycoervar \mapsto \refl{\tyvar} \mid (\tycoervar: \tyvar \leq \tyvar) \in \ctx \} \circ \{ \dirtcoervar \mapsto \refl{\dirtvar} \cupDirtSingle \ops \mid (\dirtcoervar: \dirtvar \leq \ops \cup \dirtvar) \in \ctx \}
\end{align*}
which is also the reason behind including $\cupDirtSingle$ amongst our coercion constructors.
Note that the sink node cannot have loops as coercions $\dirtcoervar : \ops_1 \le \ops_2$ are already removed during context reduction.

Furthermore, our graph may have multiple parallel edges.
Let us first consider type coercions.
If we already have $\tycoervar : \tyvar_1 \le \tyvar_2$, an additional parameter $\tycoervar' : \tyvar_1 \le \tyvar_2$ between the same two nodes offers no additional flexibility, so we can simply replace it with $\tycoervar$.
In general, we define a strengthening
\[
  \sigma_t = \{ \tycoervar_{ijk} \mapsto \tycoervar_{ij0} \mid (\tycoervar_{ijk} : \tyvar_i \leq \tyvar_j) \in \ctx \}
\]
where $\tycoervar_{ij0}$ is the chosen representative of all coercions between $\tyvar_i$ and $\tyvar_j$.

For dirt coercions, the situation is more interesting due to the presence of operation sets.
Still, if we have both $\dirtcoervar : \dirtvar_1 \le \ops \cup \dirtvar_2$ and $\dirtcoervar' : \dirtvar_1 \le \ops' \cup \dirtvar_2$, then the upper bound for $\dirtvar_1$ can be refined to $(\ops \cap \ops') \cup \dirtvar_2$, and $\ops \cap \ops'$ can be factored out of both coercions.
Again, for a general context, we define
\[
  \sigma_d = \{ \dirtcoervar_{ijk} \mapsto (\ops_{ijk} \setminus \bigcap_l \ops_{ijl}) \cupDirtSingle \dirtcoervar_{ij} \mid (\dirtcoervar_{ijk} : \dirtvar_i \leq \ops_{ijk} \cup \dirtvar_j) \in \ctx \}
\]
for some fresh $\dirtcoervar_{ij} : \dirtvar_i \le (\bigcap_l \ops_{ijl}) \cup \dirtvar_j$ replacing all coercions between $\dirtvar_i$ and $\dirtvar_j$.
We can apply a similar approach for the sink node and define
\[
  \sigma_s = \{ \dirtcoervar_{ij} \mapsto (\ops_{ij} \setminus \bigcap_k \ops_{ik}) \cupDirtSingle \dirtcoervar_i \mid (\dirtcoervar_{ij} : \dirtvar_i \leq \ops_{ij}) \in \ctx \}
\]
coalescing all edges $\dirtcoervar_{ij} : \dirtvar_i \le \ops_{ij}$ between $\dirtvar_i$ and the sink node into a fresh $\dirtcoervar_i : \dirtvar_i \le \bigcap_k \ops_{ik}$.

Even though $\sigma_d$ and $\sigma_s$ are not strengthenings due to the introduction of additional dirt coercions, it is easy to check that
\[
  \phase_{\mathtt{loopPar}}(\ctx, \_)  = \tuple{(\sigma_s \circ \sigma_d \circ \sigma_t \circ \sigma_l)(\ctx), (\sigma_s \circ \sigma_d \circ \sigma_t \circ \sigma_l)}
\]
is a valid and complete simplification phase.

\subsection{Contracting strongly connected components}
\label{sub:contracting-sccs}

Having obtained a simple directed graph, we can simplify it even further.

Firstly, we can easily see that any strongly connected component (which in practice occurs for recursive functions) in a graph requires all the type parameters to be equal.
In terms of coercions, if $\ctx$ contains type coercions
\[
  \tycoervar_1: \tyvar_1 \leq \tyvar_2, \tycoervar_2: \tyvar_2 \leq \tyvar_3, \dots \tycoervar_n: \tyvar_n \leq \tyvar_1
\]
we can unify all type parameters with a substitution $\{ \tyvar_i \mapsto \tyvar_1 \}_{i = 2}^k$.
If we wish, we can also replace all $\tycoervar_i$ with $\refl{\tyvar_1}$, or we can simply weed them out by re-running $\phase_{\mathtt{loopPar}}$.

The same can be done for cycles in dirt coercions in case no additional operations $\ops$ are present, which is often the case in higher-order functions.

We skip writing out the formal definition of the substitution $\sigma$ as it would offer way more verbosity than insight.
Still, $\sigma$ is a strengthening, and we can again define a complete phase $\phase_{\mathtt{scc}}(\ctx, \_)  = \tuple{\sigma(\ctx), \sigma}$, after which we end with a directed acyclic graph (DAG).

\subsection{Compressing bridges}
\label{sub:compressing-bridges}

Secondly, intermediate type parameters $\tycoervar: \tyvar \leq \tyvar'$ can be merged if no other type parameters are dependent on their specific relationship, in particular if the edge is a \emph{bridge}, i.e. the only incoming edge of $\tyvar'$, or the only outgoing edge of $\tyvar$ (Figure~\ref{fig:typeContractionRefl}).

\begin{figure}[ht!]
  \centering
  \begin{subfigure}[b]{0.45\textwidth}
    \centering
    \begin{tikzpicture}
      \node [shape=circle, draw] (ab) at (2, 2) {$\tyvar$};
      \node [shape=circle, draw] (aa) at (2, 5) {$\tyvar'$};
      \node [right of=aa] {$\notin \FP^-$};
      \draw[->] (ab) edge node[midway, right] {$\tycoervar$} (aa);

      \node [shape=circle, draw] (a1) at (0, 7) {$\tyvar_1''$};
      \node (a2) at (2, 7) {$\cdots$};
      \node [shape=circle, draw] (a3) at (4, 7) {$\tyvar_n''$};
      \draw[->] (aa) edge node[midway, left] {$\tycoervar_1'$} (a1);
      \draw[->] (ab) edge[dashed, bend left=20] node[midway, left] {} (a1);
      \draw[->] (aa) edge node[midway, right] {$\tycoervar_n'$} (a3);
      \draw[->] (ab) edge[dashed, bend right=20] node[midway, right] {} (a3);
      \draw[gray] (aa.south)+(-2mm,-1mm) edge [|->, bend right=20] node [midway, left] {$\sigma$} (ab);

    \end{tikzpicture}
    \caption{Exactly one incoming coercion}
  \end{subfigure}
  \hfill
  \begin{subfigure}[b]{0.45\textwidth}
    \centering
    \begin{tikzpicture}
      \node [shape=circle, draw] (a'1) at (0, 0) {$\tyvar_1''$};
      \node (a'2) at (2, 0) {$\cdots$};
      \node [shape=circle, draw] (a'3) at (4, 0) {$\tyvar_n''$};
      \node [shape=circle, draw] (ab) at (2, 2) {$\tyvar$};
      \node [right of=ab] {$\notin \FP^+$};
      \draw[->] (a'1) edge node[midway, left] {$\tycoervar_1''$} (ab);
      \draw[->] (a'1) edge[dashed, bend left=20] node[midway, left] {} (aa);
      \draw[->] (a'3) edge node[midway, right] {$\tycoervar_n''$} (ab);
      \draw[->] (a'3) edge[dashed, bend right=20] node[midway, right] {} (aa);

      \node [shape=circle, draw] (aa) at (2, 5) {$\tyvar'$};
      \draw[->] (ab) edge node[midway, right] {$\tycoervar$} (aa);
      \draw[gray] (ab.north)+(-2mm,1mm) edge [|->, bend left=20] node [midway, left] {$\sigma$} (aa);
    \end{tikzpicture}
    \caption{Exactly one outgoing coercion}
  \end{subfigure}
  \caption{Two cases for type parameter contraction}
  \label{fig:typeContractionRefl}
\end{figure}

First, consider the case when $\tycoervar: \tyvar \leq \tyvar'$ is the only incoming edge of $\tyvar'$.
In that case, we can safely replace $\tyvar'$ with $\tyvar$ as the only affected constraints are ones of the form $\tyvar' \le \tyvar_i''$, which continue to hold in the new form $\tyvar \le \tyvar_i''$ due to transitivity.
To ensure completeness, we must furthermore ensure that $\tyvar'$ is not negative.

Formally, if $\tyvar'$ has only one incoming edge and $\tyvar' \notin \FP^-$, we can apply the phase
\[
  \phase_{\mathtt{bridgeIn}}(\ctx, \FP) = \tuple{\sigma(\ctx), \sigma} \ \textrm{where}\ \sigma = \set{\tyvar' \mapsto \tyvar}
\]
Conversely, if $\tyvar$ has only outgoing edge and $\tyvar \notin \FP^+$, we can apply the phase
\[
  \phase_{\mathtt{bridgeOut}}(\ctx, \FP) = \tuple{\sigma(\ctx), \sigma} \ \textrm{where}\ \sigma = \set{\tyvar \mapsto \tyvar'}
\]
In both cases, the substitution $\sigma$ is a strengthening, thus we get valid simplification phases.

\begin{prop}
  Simplification phases $\phase_{\mathtt{bridgeIn}}$ and $\phase_{\mathtt{bridgeOut}}$ are complete.
\end{prop}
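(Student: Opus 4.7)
The plan is to handle $\phase_{\mathtt{bridgeIn}}$ in detail; the argument for $\phase_{\mathtt{bridgeOut}}$ is entirely dual. Fix $\phase_{\mathtt{bridgeIn}}(\ctx, \FP) = \tuple{\ctx', \sigma}$ where $\tycoervar : \tyvar \leq \tyvar'$ is the unique incoming edge of $\tyvar'$, the hypothesis $\tyvar' \notin \FP^-$ holds, and $\sigma$ sends $\tyvar' \mapsto \tyvar$, $\tycoervar \mapsto \refl{\tyvar}$, and is the identity on every other parameter. Given an arbitrary instantiation $\validSub{\intp}{\ctx}{\ctxuse}$, we must produce an instantiation $\validSub{\intp'}{\ctx'}{\ctxuse}$ together with a coercion family witnessing $\tCoerSub{\ctxuse}{\coerSubst_\intp}{(\intp' \circ \sigma) \leq_{\FP} \intp}$.

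The first step is to build $\intp'$ by copying $\intp$ on every parameter surviving in $\ctx'$, with a single adjustment. Because $\ctx'$ arises from $\ctx$ by the strengthening $\sigma$, the only parameters whose stated bounds genuinely change are the type coercion parameters $\tycoervar_j' : \tyvar' \leq \tyvar_j''$ corresponding to the outgoing edges of $\tyvar'$, which in $\ctx'$ have the new type $\tyvar \leq \tyvar_j''$. For each such parameter we set
\[
  \intp'(\tycoervar_j') = \intp(\tycoervar_j') \circ \intp(\tycoervar),
\]
which is a well-typed coercion witnessing $\intp(\tyvar) \leq \intp(\tyvar_j'')$. No other bound is affected: context reduction (Section~\ref{sec:substitutions}) has already put all type coercions into the form $\tyvar_1 \leq \tyvar_2$, so $\tyvar'$ cannot appear nested inside any coercion bound; dirt coercion bounds do not mention type parameters at all; and by hypothesis there are no other incoming edges into $\tyvar'$. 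Validity of $\intp'$ against $\ctx'$ then follows directly from the validity of $\intp$ against $\ctx$.

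The coercion family $\coerSubst_\intp$ is then defined parameter by parameter. For every $p \in \FP$ with $p \neq \tyvar'$, we have $\sigma(p) = p$ and $\intp'(p) = \intp(p)$, so $(\intp' \circ \sigma)(p) = \intp(p)$, and the reflexive coercion $\refl{\intp(p)}$ serves on either polarity. The only remaining case is $\tyvar'$ itself, which by hypothesis lies at worst in $\FP^+$: here we must exhibit a coercion $(\intp' \circ \sigma)(\tyvar') = \intp(\tyvar) \leq \intp(\tyvar')$, which is witnessed precisely by $\intp(\tycoervar)$.

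The main obstacle is essentially careful bookkeeping: confirming that the only bounds disturbed by $\sigma$ are those of the outgoing edges of $\tyvar'$, and that postcomposition with $\intp(\tycoervar)$ is the unique adjustment needed to rehabilitate them. The restriction to contexts already in reduced form is crucial here, since otherwise $\tyvar'$ could occur inside a structural type in some coercion bound and the simple composition above would not suffice. The dual case $\phase_{\mathtt{bridgeOut}}$ proceeds identically, precomposing the unique outgoing edge $\intp(\tycoervar)$ onto each incoming edge of $\tyvar$, and using $\intp(\tycoervar)$ itself as the coercion when $\tyvar \in \FP^-$.
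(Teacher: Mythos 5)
Your proof is correct and follows essentially the same route as the paper's: copy $\intp$ onto the surviving parameters, repair each outgoing edge of $\tyvar'$ by composing with $\intp(\tycoervar)$, take the reflexive coercion on all unaffected parameters, and use $\intp(\tycoervar)$ itself as the family member at $\tyvar'$, which only needs checking in the positive case since $\tyvar' \notin \FP^-$. The only cosmetic difference is that you fold $\tycoervar \mapsto \refl{\tyvar}$ into $\sigma$, whereas the paper keeps the resulting loop $\tycoervar : \tyvar \le \tyvar$ in $\sigma(\ctx)$ and instantiates it reflexively; this does not affect the argument.
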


\begin{proof}
  Since the cases are symmetric, we can consider only $\phase_{\mathtt{bridgeIn}}$.
  Take an arbitrary $\validSub{\intp}{\ctx}{\ctxuse}$ and let us define $\validSub{\intp'}{\sigma(\ctx)}{\ctxuse}$

  Since $\sigma$ affects only type and type coercion parameters, we define $\intp'(\skelvar) = \intp(\skelvar)$ for all skeleton parameters $\skelvar$, and similar for dirt parameters $\dirtvar$ and dirt coercion parameters $\dirtcoervar$.
  For all type parameters $\tyvar'' \ne \tyvar'$ (including $\tyvar$) we similarly define $\intp'(\tyvar'') = \intp(\tyvar'')$, while on $\tyvar'$, we leave $\intp'$ undefined.

  Finally, the only affected type coercion parameters are ones that contained $\tyvar'$.
  By assumption, there is only one where $\tyvar'$ appears on the right-hand side: $\tycoervar : \tyvar \le \tyvar'$, and this maps to $(\tycoervar : \tyvar \le \tyvar) \in \sigma(\ctx)$, which we satisfy by defining $\intp'(\tycoervar) = \refl{\intp'(\tyvar)}$.
  All other coercion parameters have $\tyvar'$ on the left, so are of the form $\tycoervar_i' : \tyvar' \le \tyvar_i''$ for some $\tyvar_i''$, and map to $(\tycoervar_i' : \tyvar \le \tyvar_i'') \in \sigma(\ctx)$.
  For those, we define $\intp'(\tycoervar_i') = \intp(\tycoervar_i') \circ \intp(\tycoervar)$.
  As we had $\tCoerVal{}{\intp(\tycoervar)}{\intp(\tyvar) \le \intp(\tyvar')}$ and $\tCoerVal{}{\intp(\tycoervar_i')}{\intp(\tyvar') \le \intp(\tyvar_i'')}$, and as $\intp'(\tyvar) = \intp(\tyvar)$ and $\intp'(\tyvar_i'') = \intp(\tyvar_i'')$, we indeed have $\tCoerVal{}{\intp'(\tycoervar_i')}{\intp'(\tyvar) \le \intp'(\tyvar_i'')}$.

  Having checked that $\validSub{\intp'}{\sigma(\ctx)}{\ctxuse}$ let us find a coercion family $\coerSubst$ such that $\tCoerSub{\ctxuse}{\coerSubst}{(\intp' \circ \sigma) \leq_{\FP} \intp}$.
  The only parameter we need to check is $\tyvar'$, as on all others, instantiations $\intp' \circ \sigma$ and $\intp$ are the same.
  By assumption, we have $\tyvar' \not\in \FP^-$, so we only need to consider the case $\tyvar' \in \FP^+$.
  However, $\intp'(\sigma(\tyvar')) = \intp'(\tyvar) = \intp(\tyvar)$, thus $\tCoerVal{}{\intp(\tycoervar)}{\intp(\tyvar) \le \intp(\tyvar')}$ is the required coercion $\tCoerVal{}{\coerSubst(\tyvar')}{\intp'(\sigma(\tyvar')) \le \intp(\tyvar')}$.
\end{proof}

If there are multiple bridges, we can repeat the process, though we must make keep in mind that the polarity of the remaining parameter is the union of both polarities, so the polarity condition has to be re-evaluated after each compression.
As before, the same compression can be done for bridges in dirt coercions as long as the edge has no additional operations $\ops$ on the right-hand side.

\subsection{Minimizing operation sets}
\label{sub:minimizing}

Recall that dirt constraint parameters in a typing context $\ctx$ are of the form $\dirtcoervar : \dirtvar \leq \dirtvar' \cup \ops$ or $\dirtcoervar : \dirtvar \leq \ops$.
We have already seen that we can apply $\phase_{\mathtt{scc}}$, $\phase_{\mathtt{bridgeIn}}$, and $\phase_{\mathtt{bridgeOut}}$ in case the sets of operations $\ops$ are empty, for example with higher-order functions that are effectful without mentioning any particular operation $\op$.

In contrast to types, dirts have the least element $\emptyset$, which is often desirable as it signifies lack of effects, leading to efficient compilation~\cite{DBLP:journals/pacmpl/KarachaliasKPS21}.
Furthermore, any dirt parameter appearing only on the left-hand side of dirt inequalities can be safely set to $\emptyset$, as long as it is not negative.
Since setting $\dirtvar \mapsto \emptyset$ trivializes inequalities of the form $\dirtcoervar : \dirtvar \leq \dirtvar' \cup \ops$ where $\dirtvar'$ appeared on the right-hand side, this can cause a chain-reaction of further removals.

More generally, we can simultaneously remove any set of non-negative dirt parameters $\mathcal{D} = \set{\dirtvar \mid \dirtvar \in \ctx, \dirtvar \not\in \FP^-}$ with no incoming edges, i.e. no coercion parameters $\dirtcoervar : \dirtvar' \le \dirtvar \cup \ops$ with $\dirtvar \not\in \mathcal{D}$.
For such a set $\mathcal{D}$, we define a substitution $\sigma$ by $\sigma(\dirtvar) = \emptyset$ for each $\dirtvar \in \ctx$, and by $\sigma(\dirtcoervar) = \emptyset_{\sigma(\dirt)}$ for each $(\dirtcoervar : \dirtvar \le \dirt) \in \mathcal{D}$ where $\dirtvar \in \mathcal{D}$.
As $\sigma$ is a strengthening, we can define a simplification phase
\[
  \phase_{\mathtt{emptyDirt}}(\ctx, \FP) = \tuple{\sigma(\ctx), \sigma}
\]

\begin{prop}
  $\phase_{\mathtt{emptyDirt}}$ is complete.
\end{prop}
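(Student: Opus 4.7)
The plan is to mimic the argument used for $\phase_{\mathtt{bridgeIn}}$. Fix an arbitrary instantiation $\validSub{\intp}{\ctx}{\ctxuse}$ and let $\ctx' = \sigma(\ctx)$. Since $\sigma$ is the identity on skeletons, skeleton and type parameters, and type coercion parameters, and since the set $\mathcal{D}$ of removed dirt parameters has, by assumption, no incoming coercion edges from outside $\mathcal{D}$, the dirt coercion parameters that survive in $\ctx'$ are exactly those $\dirtcoervar : \dirtvar' \le \dirtvar'' \cup \ops$ (or $\dirtcoervar : \dirtvar' \le \ops$) of $\ctx$ whose source $\dirtvar'$ is not in $\mathcal{D}$; and for such coercions $\sigma$ also acts trivially on the right-hand side, because any $\dirtvar'' \in \mathcal{D}$ on the right would contradict the no-incoming-edges assumption.

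I would then define $\validSub{\intp'}{\ctx'}{\ctxuse}$ to agree with $\intp$ on every parameter of $\ctx'$ (and be left undefined on the removed parameters $\mathcal{D}$ and on the corresponding removed dirt coercion parameters). By the observation above, every remaining coercion constraint in $\ctx'$ is syntactically identical to the one in $\ctx$, so $\intp$ already witnesses its validity; hence $\intp'$ is a valid substitution.

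Next I would exhibit the coercion family $\tCoerSub{\ctxuse}{\coerSubst_{\intp}}{(\intp' \circ \sigma) \le_{\FP} \intp}$. For every parameter $a \in \FP$ that is not in $\mathcal{D}$, we have $\intp'(\sigma(a)) = \intp(a)$, so the reflexive coercion $\refl{\intp(a)}$ suffices. The only nontrivial case is a positive dirt parameter $\dirtvar \in \FP^+ \cap \mathcal{D}$ (the assumption $\mathcal{D} \cap \FP^- = \emptyset$ rules out the negative case). There $\intp'(\sigma(\dirtvar)) = \intp'(\emptyset) = \emptyset$, and we need a coercion $\emptyset \le \intp(\dirtvar)$; this is provided by the admissible empty coercion $\emptyset_{\intp(\dirtvar)}$.

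The main thing to check carefully is the identity-on-remaining-constraints claim, which is exactly where the condition that $\mathcal{D}$ has no incoming edges from outside is used; without it, $\sigma$ could shrink the right-hand side of some surviving coercion parameter and break validity of $\intp'$. Once that is clarified, the rest is a straightforward matching of cases along the lines of the $\phase_{\mathtt{bridgeIn}}$ proof.
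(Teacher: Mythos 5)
Your proposal is correct and follows essentially the same route as the paper's proof: restrict $\intp$ to the surviving parameters, use the no-incoming-edges assumption to argue that the remaining coercion constraints are untouched by $\sigma$, and witness the coercion family with reflexive coercions everywhere except on positive $\dirtvar \in \mathcal{D}$, where the empty coercion $\emptyset_{\intp(\dirtvar)}$ does the job. The only difference is that you spell out the justification for why $\intp'$ remains valid in slightly more detail than the paper does.
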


\begin{proof}
  Take an arbitrary $\validSub{\intp}{\ctx}{\ctxuse}$.
  The required $\validSub{\intp'}{\sigma(\ctx)}{\ctxuse}$ is thus
  \[
    \intp' = (\intp \setminus \mathcal{D}) \setminus \set{\dirtcoervar : \dirtvar \le \dirt \mid \dirtvar \in \mathcal{D}}
  \]
  As by assumption, parameters from $\mathcal{D}$ did not appear on right-hand sides of dirt coercions, no other parameters apart from the removed ones have been impacted by the substitution.
  Thus, to obtain $\tCoerSub{\ctxuse}{\coerSubst}{(\intp' \circ \sigma) \leq_{\FP} \intp}$, we only need to check how $\coerSubst$ is defined on $\dirtvar \in \mathcal{D}$.
  By assumption, $\dirtvar \not\in \FP^-$, so we only need to consider the case $\dirtvar \in \FP^+$.
  But since $(\intp' \circ \sigma)(\dirtvar) = \intp'(\sigma(\dirtvar)) = \emptyset$, we can define $\coerSubst(\dirtvar)$ to be the empty coercion $\emptyset_{\intp(\dirtvar)}$.
\end{proof}

Like $\phase_{\mathtt{bridgein}}$ and $\phase_{\mathtt{bridgeOut}}$, the phase $\phase_{\mathtt{dirtEmpty}}$ can be repeated iteratively until there is no set satisfying the requirements.
In practice, this can be done efficiently by traversing the quotient graph of strongly connected components in topological ordering.
Dually, there is a greatest dirt $\Sigma$, and we can apply a dual phase $\phase_{\mathtt{fullDirt}}$ for sets of non-positive dirt parameters with no incoming edges.
However, as discussed in Section~\ref{sec:implementation}, this does not bring significant practical benefits, so we refrain from doing it in the implementation.

\section{Denotational semantics}
\label{sec:denotationalSemantics}
To show that simplification phases preserve the semantics of terms, we turn to denotational semantics.

\subsection{Free monad}

As it is standard in the setting of algebraic effects, we are going to interpret effectful computations with the help of a \emph{free monad}.
For any assignment $\monadsig = \set{\op_i : U_i \to V_i}_i$, that maps an operation $\op_i$ to sets $U_i$ and $V_i$, we can define a monad $T_\monadsig$ that takes a set $X$ to the set $T_\monadsig X$, defined inductively as the smallest set, containing:
\begin{itemize}
  \item $\inval(a)$ for each $a \in X$,
  \item $\inop[\op_i](u; \kappa)$ for each $\op_i : U_i \to V_i \in \monadsig$, each $u \in U_i$, and each $\kappa \in V_i \to T_\monadsig X$
\end{itemize}
The inclusion $\inval$ acts as the unit of the monad, while for a map $f : X \to T_\monadsig Y$, its lifting $f^\dag : T_\monadsig X \to T_\monadsig Y$ is defined recursively as
\begin{align*}
  f^\dag(\inval(a))        & = f(a)                          \\
  f^\dag(\inop(u; \kappa)) & = \inop(u; f^\dag \circ \kappa)
\end{align*}

\subsection{Skeletal semantics}

Since the presented simplifications preserve types only up to subtyping, we first introduce a \emph{skeletal} semantics, which disregards effect information, giving us a unified platform on which to compare such changes.

For a parameter context $\ctx$, we shall take its interpretation $\sem[]{\ctx}$ to be a set of all valid assignments $\xi$ that map parameters to their interpretations.
Due to dependencies between parameters, we shall expose $\xi$ in turn, starting with skeleton parameters $\skelvar \in \ctx$, which are mapped to arbitrary sets $\xi(\skelvar)$.

Then, for any $\xi \in \sem[]{\ctx}$, we can interpret each skeleton $\tSty{\ctx}{\skel}$ with a set $\sem{\skel}$, defined as:
\begin{align*}
  {\sem{\skelvar}} & = \xi(\skelvar) & {\sem{\tyUnit}} & = \set{\star} & {\sem{\skel_1 \to \skel_2}} & = {\sem{\skel_1}} \to T_{\sem[]{\Sigma}}(\sem{\skel_2})
\end{align*}
where for a global signature $\Sigma = \set{\op_i : A_i \to B_i}_i$, we define $\sem[]{\Sigma} = \set{\op_i : \sem[]{\skel_i} \to \sem[]{\skel_i'}}_i$, where $\tVty{}{A_i}{\skel_i}$ and $\tVty{}{B_i}{\skel_i'}$ are monomorphic, thus $\xi$ does not play a role in their interpretation.

Next, for each $\tVty{\ctx}{\vty}{\skel}$, we define its \emph{skeletal semantics} $\ssem{\vty} = \sem{\skel}$ and similarly $\ssem{\cty} = \sem{\skel}$ for $\tCty{\ctx}{\cty}{\skel}$.
Since interpretations of types ignore all effect information, types with matching skeletons have the same interpretation.
For this reason, we also do not define any skeletal interpretations of coercions.

Given an assignment $\xi$, skeletal denotational semantics of well-typed terms are defined as maps from interpretations of typing contexts to interpretations of types:
\begin{align*}
  {\ssem{\tVal{\ctx}{\tyCtx}{v}{\vty}}} & : {\ssem{\tyCtx}} \to {\ssem{\vty}} & {\ssem{\tComp{\ctx}{\tyCtx}{c}{\cty}}} & : {\ssem{\tyCtx}} \to {\ssem{\cty}}
\end{align*}
where interpretation of typing contexts~$\ssem{\tyCtx}$ is defined component wise:
\begin{align*}
  {\ssem{x_1 : \vty_1, \dots, x_n : \vty_n}} & := \ssem{\vty_1} \times \dots \times \ssem{\vty_n}
\end{align*}

Interpretations are given in the standard way~\cite{DBLP:journals/corr/BauerP13}.
For a tuple $(a_1, \dots, a_n) = \env \in{\ssem{\tyCtx}}$ interpretations of values are defined as:
\begin{align*}
  {\ssem{\tVal{\ctx}{\tyCtx}{x_i}{\vty_i}}}(\env)                      & = a_i                                                                                    \\
  {\ssem{\tVal{\ctx}{\tyCtx}{\tmUnit}{\tyUnit}}}(\env)                 & = \star                                                                                  \\
  {\ssem{\tVal{\ctx}{\tyCtx}{\fun{(x: \vty)} c}{\vty \to \cty}}}(\env) & = a' \in{\ssem{\vty}} \mapsto {\ssem{\tComp{\ctx}{\tyCtx, x : \vty}{c}{\cty}}}(\env, a') \\
  {\ssem{\tVal{\ctx}{\tyCtx}{\cast{v}{\coerVal}}{\vty'}}}(\env)        & = {\ssem{\tVal{\ctx}{\tyCtx}{v}{\vty}}}(\env)
\end{align*}
where in the last line, we can safely discard $\tCoerVal{\ctx}{\coerVal}{\vty \le \vty'}$ since $\ssem{\vty} = \ssem{\vty'}$. Similarly, interpretations of computations are defined as:
\begin{align*}
  {\ssem{\tComp{\ctx}{\tyCtx}{\return{v}}{\dirty{\vty}{\dirt}}}}(\env)           & = \inval{({\ssem{\tVal{\ctx}{\tyCtx}{v}{\vty}}}(\env))}                                                                                                                                       \\
  {\ssem{\tComp{\ctx}{\tyCtx}{\operation{v}{y : A_{\op}}{c}}{\cty}}}(\env)
  & =
  \begin{multlined}[t]\inop \big({\ssem{\tComp{\ctx}{\tyCtx}{v}{A_{\op}}}}(\env); \\ b \in {\ssem{B_{\op}}} \mapsto {{\ssem{\tComp{\ctx}{\tyCtx, y : B_{\op}}{c}{\cty}}}{(\env,b)}}\big)
  \end{multlined} \\
  {\ssem{\tComp{\ctx}{\tyCtx}{\doin{x}{c_1}{c_2}}{\dirty{\vty_2}{\dirt}}}}(\env) & =
  \begin{multlined}[t]
    \big(a' \in {\ssem{\vty_1}} \mapsto {{\ssem{\tComp{\ctx}{\tyCtx, x : \vty_1}{c_2}{\dirty{\vty_2}{\dirt}}}}{(\env, a')}}\big)^\dag \\
    ({\ssem{\tComp{\ctx}{\tyCtx}{c_1}{{\dirty{\vty_1}{\dirt}}}}}(\env))
  \end{multlined}
  \\
  {\ssem{\tComp{\ctx}{\tyCtx}{v_1~v_2}{\cty}}}(\env)                             & = ({\ssem{\tVal{\ctx}{\tyCtx}{v_1}{\vty \to \cty}}}(\env))({\ssem{\tComp{\ctx}{\tyCtx}{v_2}{\vty}}}(\env))                                                                                    \\
  {\ssem{\tComp{\ctx}{\tyCtx}{\cast{c}{\coerComp}}{\cty'}}}(\env)                & = {\ssem{\tComp{\ctx}{\tyCtx}{c}{\cty}}}(\env)
\end{align*}
where again in the last line, we again discard the coercion $\tCoerComp{\ctx}{\coerComp}{\cty \le \cty'}$ since $\ssem{\cty} = \ssem{\cty'}$.

\subsection{Effectful semantics}
\label{sub:effectful-semantics}

We refine the skeletal semantics into a more precise \emph{effectful} semantics, which also reflects the operations, tracked by the effect system.

To do so, an assignment $\xi \in \sem[]{\ctx}$ maps each dirt parameter $\dirtvar \in \ctx$ to some set of operations $\xi(\dirtvar)$, all of which must appear in the global signature $\Sigma$.
Then, each dirt $\tDty{\ctx}{\dirt}$ can similarly be interpreted with a set of operations $\sem{\dirt}$, given by:
\begin{align*}
  {\sem{\dirtvar}} & = \xi(\dirtvar) & {\sem{\emptyset}} & = \emptyset & {\sem{\set{\op} \cup \dirt}} & = \set{\op} \cup \sem{\dirt}
\end{align*}

Interpreting dirt allows us to also interpret value and computation types.
For each value type $\tVty{\ctx}{\vty}{\skel}$ and each computation type $\tCty{\ctx}{\cty}{\skel}$ we define sets $\sem{\vty}$ and $\sem{\cty}$ together with injections into the skeletal semantics.
To interpret types, any assignment $\xi \in \sem[]{\ctx}$ must map each $(\tyvar : \skel) \in \ctx$ to some set $\xi(\tyvar)$ together with an injection $\inject{\tyvar} : \xi(\tyvar) \hookrightarrow \sem{\skel}$.
Then, interpretations are extended to arbitrary value and computation types by:
\begin{align*}
  {\sem{\tyUnit}}       & = \set{\star}                                                                                                                                                         \\
  {\sem{\vty~!~\dirt}}  & = T_{\sem[]{\Sigma}|_{\sem{\dirt}}}(\sem{\vty})                                                                                                                       \\
  {\sem{\vty \to \cty}} & = \Big\{(f, f') : \big(\ssem{\vty} \to \ssem{\cty}\big) \times \big(\sem{\vty} \to \sem{\cty}\big) \,\Big\vert\, f \circ \inject{\vty} = \inject{\cty} \circ f'\Big\}
\end{align*}
where $\sem[]{\Sigma}|_{\sem{\dirt}}$ is the restriction of $\sem[]{\Sigma}$ to operations from $\sem{\dirt}$, ensuring that any valid computation can trigger only operations listed in $\dirt$.

The injections $\inject{\vty} : \sem{\vty} \hookrightarrow \ssem{\vty}$ and $\inject{\cty} : \sem{\cty} \hookrightarrow \ssem{\cty}$ are defined recursively by:
\begin{align*}
  \inject{\tyUnit}(\star)               & = \star                                                    \\
  \inject{\vty \to \cty}(f, f')         & = f                                                        \\
  \inject{\vty!\dirt}(\inval(a))        & = \inval(a)                                                \\
  \inject{\vty!\dirt}(\inop(u; \kappa)) & = \inop(\inject{A_i}(u); \inject{\vty!\dirt} \circ \kappa)
\end{align*}
where we crucially use the fact that $B_i$ is monomorphic and first-order, and so $\ssem{B_i} = \sem{B_i}$.

More interestingly, functions are interpreted with pairs of functions, the first acting on skeletal and the second on effectful semantics.
Interpreting values of $\vty \to \cty$ with functions $\sem{\vty} \to \sem{\cty}$ is not sufficient because there is no simple way to extend their domain from $\sem{\vty}$ to $\ssem{\vty}$ in order to include them in $\ssem{\vty} \to \ssem{\cty}$.
Even though $\inject{\vty \to \cty}$ projects on the first component, it is still injective.
Indeed, if $\inject{\vty \to \cty}(f_1, f_1') = \inject{\vty \to \cty}(f_2, f_2')$, we first get $f_1 = f_2$ and furthermore $\inject{\cty} \circ f_1' = \inject{\cty} \circ f_2'$ which implies $f_1' = f_2'$ since $\inject{\cty}$ is injective.

Next, we turn to previously ignored coercions.
As dirt is interpreted with sets of operations, dirt coercions are interpreted as injections between them.
Thus, an assignment $\xi \in \sem[]{\ctx}$ maps each parameter $(\dirtcoervar : \dirt \le \dirt') \in \ctx$ to an injection $\xi(\dirtcoervar) : \sem{\dirt} \hookrightarrow \sem{\dirt'}$.
Then, a dirt coercion $\tCoerDirt{\ctx}{\coerDirt}{\dirt \leq \dirt'}$, can be interpreted with an injection $\sem{\coerDirt} : \sem{\dirt} \hookrightarrow \sem{\dirt'}$, defined recursively in the expected way.

Similarly, we use injections to interpret value and computation type coercions.
For the fifth and final kind of parameters, an assignment $\xi \in \sem[]{\ctx}$ maps type coercion parameters $(\tycoervar : \vty \le \vty') \in \ctx$ to injections $\xi(\tycoervar) : \sem{\vty} \hookrightarrow \sem{\vty'}$.
In addition, we require these injections to commute with injections into the shared skeletal semantics:
\[
  \inject{\vty} = \inject{\vty'} \circ \xi(\tycoervar)
\]
which can be summed up with the following commutative diagram
\[
  \begin{tikzcd}[row sep=3em,column sep=3em]
    {\sem{\vty}} \arrow[dr, "\inject{\vty}", hookrightarrow] \arrow[rr, "\xi(\tycoervar)", hookrightarrow]  & & {\sem{\vty'}} \arrow[dl, "\inject{\vty'}", hookrightarrow] \\
    & {\ssem{\vty}}&
  \end{tikzcd}
\]
Then, we can interpret arbitrary value and computation type coercions with injections
\begin{align*}
  \sem{\tCoerVal{\ctx}{\coerVal}{\vty \leq \vty'}} & : \sem{\vty} \hookrightarrow \sem{\vty'} & \sem{\tCoerComp{\ctx}{\coerComp}{\cty \leq \cty'}} & : \sem{\cty} \hookrightarrow \sem{\cty'}
\end{align*}
that again commute with injections into the skeletal semantics:
\begin{align*}
  \inject{\vty} & = \inject{A'} \circ \sem{\coerVal} & \inject{\cty} & = \inject{\cty'} \circ \sem{\coerComp}
\end{align*}
The interpretations are defined as follows:
\begin{align*}
  \sem{\trgUnitRefl}(\star)                      & = \star                                                                     \\
  \sem{\coerVal \to \coerComp}((f, f'))          & = (f, \sem{\coerComp} \circ f' \circ \sem{\coerVal})                        \\
  \sem{\dirty{\coerVal}{\coerDirt}}(\inval(a))   & = \inval(\sem{\coerVal}(a))                                                 \\
  \sem{\dirty{\coerVal}{\coerDirt}}(\inop(u; k)) & = \inop[\sem{\coerDirt}(\op)](u; \sem{\dirty{\coerVal}{\coerDirt}} \circ k)
\end{align*}
The only non-trivial requirement is checking that the function coercion is valid.
Take injections $\sem{\coerVal} : \sem{\vty'} \hookrightarrow \sem{\vty}$ and $\sem{\coerComp} : \sem{\cty} \hookrightarrow \sem{\cty'}$, and take an arbitrary $(f, f') \in \sem{\vty \to \cty}$, i.e. $f \circ \inject{\vty} = \inject{\cty} \circ f'$.
Let us first show that $\sem{\coerVal \to \coerComp}(f, f')$ lies in $\sem{\vty' \to \cty'}$, which follows from
\[
  f \circ \inject{\vty'} =
  f \circ \inject{\vty} \circ \sem{\coerVal} =
  \inject{\cty} \circ f' \circ \sem{\coerVal} =
  \inject{\cty'} \circ \sem{\coerComp} \circ f' \circ \sem{\coerVal}
\]
Showing that $\inject{\vty' \to \cty'} \circ \sem{\coerVal \to \coerComp} = \inject{\vty \to \cty}$ is trivial since the injections just project the first component, while $\sem{\coerVal \to \coerComp}$ preserves it.
As coercions commute with injections into skeletons, their interpretation is uniquely determined.

\begin{prop}
  For an arbitrary $\xi \in \sem[]{\ctx}$, the following holds:
  \begin{itemize}
    \item If $\tCoerDirt{\ctx}{\coerDirt}{\dirt_1 \leq \dirt_2}$ and $\tCoerDirt{\ctx}{\coerDirt'}{\dirt_1 \leq \dirt_2}$, then $\sem{\coerDirt} = \sem{\coerDirt'}$.
    \item If $\tCoerVal{\ctx}{\coerVal}{\vty_1 \leq \vty_2}$ and $\tCoerVal{\ctx}{\coerVal'}{\vty_1 \leq \vty_2}$, then $\sem{\coerVal} = \sem{\coerVal'}$.
    \item If $\tCoerComp{\ctx}{\coerComp}{\cty_1 \leq \cty_2}$ and $\tCoerComp{\ctx}{\coerComp'}{\cty_1 \leq \cty_2}$, then $\sem{\coerComp} = \sem{\coerComp'}$.
  \end{itemize}
\end{prop}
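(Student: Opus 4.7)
The plan is to prove all three statements by simultaneous structural induction on the coercion derivations, leveraging the commutation observation made immediately before the proposition: every well-typed value (resp.\ computation) coercion interprets as a map satisfying $\inject{\vty_2} \circ \sem{\coerVal} = \inject{\vty_1}$ (resp.\ $\inject{\cty_2} \circ \sem{\coerComp} = \inject{\cty_1}$). Since the coercion is forced to commute with fixed injections into the skeletal semantics, its interpretation is essentially determined by the source and target alone.

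For value coercions, I would first confirm by an auxiliary induction on $\vty_2$ that $\inject{\vty_2}$ is injective: the unit case is immediate, the function-type case projects the first component but the pairing condition pins down the second, and the computation-type case injects by its name-preserving action on the $\inval$ and $\inop$ constructors of the free monad. Given two coercions $\coerVal, \coerVal'$ of the same type, both satisfy $\inject{\vty_2} \circ \sem{\coerVal} = \inject{\vty_1} = \inject{\vty_2} \circ \sem{\coerVal'}$, and post-composing with a left inverse of $\inject{\vty_2}$ on its image yields $\sem{\coerVal} = \sem{\coerVal'}$. The argument for computation coercions is identical with $\cty$ replacing $\vty$, giving the second and third clauses.

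For dirt coercions, there is no shared skeletal semantics to mediate the comparison, so I would proceed by direct structural induction on $\coerDirt$, showing that $\sem{\coerDirt}$ coincides with the canonical operation-name-preserving inclusion $\sem{\dirt_1} \hookrightarrow \sem{\dirt_2}$. The reflexive, empty, and parameter cases are inclusions by construction; the inductive constructors $\set{\op} \cupDirtBoth \coerDirt$, $\set{\op} \cupDirtSingle \coerDirt$, and $\coerDirt^2 \circ \coerDirt^1$ each preserve this property. Two coercions of the same type therefore collapse to this unique inclusion, yielding the first clause.

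The main obstacle lies in the parameter case $\dirtcoervar : \dirt_1 \le \dirt_2$: the excerpt does not explicitly constrain $\xi(\dirtcoervar)$, yet the computation-coercion clause $\sem{\dirty{\coerVal}{\coerDirt}}(\inop(u;k)) = \inop[\sem{\coerDirt}(\op)](u;\sem{\dirty{\coerVal}{\coerDirt}} \circ k)$ is only coherent when $\sem{\coerDirt}$ respects operation names, so that $u$ and the continuation $k$ still type-check against the target operation. I would therefore make explicit, as an invariant of valid $\xi$, that $\xi(\dirtcoervar)$ is the inclusion on the underlying subsets of $\sem[]{\Sigma}$, mirroring the commutation condition already imposed on type-coercion parameters. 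With this invariant in place, the dirt induction closes cleanly, and the three clauses of the proposition then follow uniformly as three instances of ``an injection into a skeletal semantics is pinned down by its commutation with fixed injections''.
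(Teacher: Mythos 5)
Your proposal is correct and follows essentially the same route as the paper, which offers no proof beyond the remark immediately preceding the statement (``as coercions commute with injections into skeletons, their interpretation is uniquely determined''): for value and computation coercions this is precisely your cancellation argument, since both $\sem{\coerVal}$ and $\sem{\coerVal'}$ satisfy $\inject{\vty_2} \circ \sem{\coerVal} = \inject{\vty_1} = \inject{\vty_2} \circ \sem{\coerVal'}$ and $\inject{\vty_2}$ is injective (the paper itself checks injectivity of $\inject{\vty \to \cty}$ despite its being a first projection). Where you go beyond the paper is the dirt case, and rightly so: dirt has no skeletal counterpart, so the one-line argument does not apply to it, and as you observe the statement is actually false if $\xi(\dirtcoervar)$ may be an arbitrary injection $\sem{\dirt} \hookrightarrow \sem{\dirt'}$ --- two distinct parameters $\dirtcoervar, \dirtcoervar'$ of the same type could then receive different interpretations. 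Requiring $\xi(\dirtcoervar)$ to be the subset inclusion is the right repair; it is in any case forced by the clause $\sem{\dirty{\coerVal}{\coerDirt}}(\inop(u;k)) = \inop[\sem{\coerDirt}(\op)](u; \sem{\dirty{\coerVal}{\coerDirt}} \circ k)$ together with the requirement that computation coercions commute with the operation-name-preserving skeletal injections, so it is clearly what the authors intend. With that invariant in place, your induction showing that $\sem{\coerDirt}$ is always the canonical inclusion closes the first clause, and the proof is complete.
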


Interpretation for typing contexts is again defined component wise.

\begin{align*}
  {\sem{x_1 : \vty_1, \dots, x_n : \vty_n}} & := \sem{\vty_1} \times \dots \times \sem{\vty_n}
\end{align*}
Applying injections for each component also gives us $\inject{\tyCtx} : \sem{\tyCtx} \hookrightarrow \ssem{\tyCtx}$.

\begin{thm}
  \label{th:sqCommute}
  For any $\tVal{\ctx}{\tyCtx}{v}{\vty}$ or $\tComp{\ctx}{\tyCtx}{c}{\cty}$, and for any $\xi \in \sem[]{\ctx}$, there exist unique maps $\sem{\tVal{\ctx}{\tyCtx}{v}{\vty}} : \sem{\tyCtx} \to \sem{\vty}$ or $\sem{\tComp{\ctx}{\tyCtx}{c}{\cty}} : \sem{\tyCtx} \to \sem{\cty}$ such that the following diagrams commute:
  \[
    \begin{tikzcd}
      {\sem{\tyCtx}} \arrow[r, "{\sem{v}}"] \arrow[d, "\inject{\tyCtx}"', hookrightarrow] &
      {\sem{\vty}} \arrow[d, "\inject{\vty}", hookrightarrow] \\
      {\ssem{\tyCtx}} \arrow[r, "{\ssem{v}}"]                   & {\ssem{\vty}}
    \end{tikzcd}
    \qquad\qquad\qquad
    \begin{tikzcd}
      {\sem{\tyCtx}} \arrow[r, "{\sem{c}}"] \arrow[d, "\inject{\tyCtx}"', hookrightarrow] &
      {\sem{\cty}} \arrow[d, "\inject{\cty}", hookrightarrow] \\
      {\ssem{\tyCtx}} \arrow[r, "{\ssem{c}}"]                   & \ssem{\cty}
    \end{tikzcd}
  \]
\end{thm}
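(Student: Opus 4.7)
The plan is to proceed by mutual induction on the typing derivations of $\tVal{\ctx}{\tyCtx}{v}{\vty}$ and $\tComp{\ctx}{\tyCtx}{c}{\cty}$. Uniqueness drops out first and for free: since each $\inject{\vty}$ and $\inject{\cty}$ is injective, two candidates $f_1, f_2 : \sem{\tyCtx} \to \sem{\vty}$ making the square commute both satisfy $\inject{\vty} \circ f_1 = \ssem{v} \circ \inject{\tyCtx} = \inject{\vty} \circ f_2$, hence $f_1 = f_2$, and likewise on the computation side. So the whole content lies in existence.

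For existence I would mirror the skeletal definitions but work inside the refined sets. The cases of variables, $\tmUnit$, $\return{v}$, and application~$v_1\,v_2$ are routine: their skeletal clauses continue to make sense with $\sem{\cdot}$ in place of $\ssem{\cdot}$, and the commuting square reduces to the already-established naturality of $\inject{\vty \to \cty}$ and $\inject{\dirty{\vty}{\dirt}}$. For $\fun{(x:\vty)}{c}$ I would package the skeletal interpretation as the first component of the pair and the inductively obtained effectful interpretation $a' \mapsto \sem{c}(\env, a')$ as the second; the coherence equation defining the subset $\sem{\vty \to \cty} \subseteq \ssem{\vty \to \cty} \times (\sem{\vty} \to \sem{\cty})$ is exactly the inductive commuting square for $c$. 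For the two cast forms I would set $\sem{\cast{v}{\coerVal}}(\env) = \sem{\coerVal}(\sem{v}(\env))$ and $\sem{\cast{c}{\coerComp}}(\env) = \sem{\coerComp}(\sem{c}(\env))$, using the previously established identities $\inject{\vty'} \circ \sem{\coerVal} = \inject{\vty}$ and $\inject{\cty'} \circ \sem{\coerComp} = \inject{\cty}$ to chase the square. Operation calls $\operation{v}{y : \vty_2}{c}$ are interpreted via $\inop$, and here the paper's standing assumption that each operation result type $B_i$ is ground is essential: it forces $\sem{B_i} = \ssem{B_i}$, so the induction hypothesis on $c$ supplies a continuation that agrees on both rails of the diagram.

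The main obstacle will be sequencing $\doin{x}{c_1}{c_2}$, which demands that the Kleisli extension $(a' \mapsto \sem{c_2}(\env, a'))^\dag$ applied to $\sem{c_1}(\env)$ both lands in $\sem{\dirty{\vty_2}{\dirt}}$ and commutes with its skeletal counterpart under $\inject{\dirty{\vty_2}{\dirt}}$. Containment is fine because the typing rule for $\keydo$ forces $c_1$ and $c_2$ to share the same dirt $\dirt$, so every $\inop$ node produced by either piece already belongs to the restricted signature $\sem[]{\Sigma}|_{\sem{\dirt}}$. Commutativity needs a small auxiliary lemma stating that $\inject{\dirty{\vty}{\dirt}}$ commutes with Kleisli extension along maps that themselves commute with the appropriate injections; this is a straightforward structural induction on elements of the free monad, splitting on the $\inval$ and $\inop$ constructors. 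Once that lemma is in place, the square for $\keydo$ follows by chasing constructors through the definitions, and the theorem is assembled by combining the individual cases above.
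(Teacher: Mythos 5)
Your proposal is correct and follows essentially the same route as the paper: uniqueness from injectivity of the $\iota$ maps, existence by induction on the typing derivation, with the cast cases discharged by the commutation identities $\inject{\vty'} \circ \sem{\coerVal} = \inject{\vty}$. You in fact supply more detail than the paper's sketch, in particular by isolating the auxiliary lemma that $\inject{\dirty{\vty}{\dirt}}$ commutes with Kleisli extension (needed for the $\keydo$ case) and by noting where groundness of the operation result types is used.
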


\begin{proof}
  Uniqueness follows from the fact that the connecting mappings are injections.
  The existence is proven by a routine induction on the typing derivation.

  For example, take the case for $\tVal{\ctx}{\tyCtx}{\cast{v}{\coerVal}}{\vty'}$, where $\tVal{\ctx}{\tyCtx}{v}{\vty}$ and $\tCoerVal{\ctx}{\coerVal}{\vty \leq \vty'}$.
  Then, we see that defining $\sem{\tVal{\ctx}{\tyCtx}{\cast{v}{\coerVal}}{\vty'}} = \sem{\tCoerVal{\ctx}{\coerVal}{\vty \leq \vty'}} \circ \sem{\tVal{\ctx}{\tyCtx}{v}{\vty}}$
  makes the following diagram commute, where the left square commutes by induction hypothesis, and the right triangle commutes by definition of $\sem{\coerVal}$.
  \[
    \begin{tikzcd}
      {\sem{\tyCtx}} \arrow[r, "{\sem{v}}"] \arrow[d, "\inject{\tyCtx}"', hookrightarrow] \arrow[rr, "{\sem{\cast{v}{\coerVal}}}", bend left=35] &
      {\sem{\vty}} \arrow[dr, "\inject{\vty}", hookrightarrow] \arrow[r, "{\sem{\coerVal}}", hookrightarrow] &
      {\sem{\vty'}} \arrow[d, "\inject{\vty'}", hookrightarrow] \\
      {\ssem{\tyCtx}} \arrow[rr, "{\ssem{\cast{v}{\coerVal}} = \ssem{v}}"]                   && {\ssem{\vty}}
    \end{tikzcd} \qedhere
  \]
\end{proof}

Recall that apart from Theorem~\ref{thm:substitutionPreservesJudgements}, which shows that valid substitutions preserve typing judgements, Theorem~\ref{th:sqCommute} is the only requirement we require from our terms.
Both being very natural conditions gives us strong confidence that results can be applied to a number of language extensions.

\subsection{Preservation}

We now turn to the crucial soundness result: a coercion family between two substitutions gives us injections between interpretations of substituted terms.

\begin{thm}
  \label{th:denotationPreservation}
  Take substitutions $\validSub{\sigma_1}{\ctx}{\ctx'}$ and $\validSub{\sigma_2}{\ctx}{\ctx'}$ together with a coercion family $\tCoerSub{\ctx'}{\coerSubst}{\sigma_1 \leq_\FP \sigma_2}$.
  Then, for any $\tVal{\ctx}{\tyCtx}{v}{\vty}$ such that $\fp{\vty} \cup \overline{\fp{\tyCtx}} \subseteq \FP$, and any $\xi \in \sem[]{\ctx'}$, we have
  \[
    \sem{\tVal{\ctx'}{\sigma_1(\tyCtx)}{\sigma_1(v)}{\sigma_1(\vty)}} = {\sem{\coerSubst(\vty)}} \circ \sem{\tVal{\ctx'}{\sigma_2(\tyCtx)}{\sigma_2(v)}{\sigma_2(\vty)}} \circ {\sem{\coerSubst(\tyCtx)}}
  \]
  Similarly, for any $\tComp{\ctx}{\tyCtx}{c}{\cty}$ such that $\fp{\cty} \cup \overline{\fp{\tyCtx}} \subseteq \FP$, we have
  \[
    \sem{\tComp{\ctx'}{\sigma_1(\tyCtx)}{\sigma_1(c)}{\sigma_1(\cty)}} = {\sem{\coerSubst(\cty)}} \circ \sem{\tComp{\ctx'}{\sigma_2(\tyCtx)}{\sigma_2(c)}{\sigma_2(\cty)}} \circ {\sem{\coerSubst(\tyCtx)}}
  \]
  where $\coerSubst(\tyCtx)$ is defined component-wise as:
  \[
    \coerSubst({x_1 : \vty_1, \dots, x_n : \vty_n}) := \coerSubst(\vty_1) \times \dots \times \coerSubst(\vty_n)
  \]

\end{thm}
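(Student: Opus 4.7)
The plan is to prove both equations simultaneously by mutual structural induction on the typing derivations of $v$ and $c$. By Proposition~\ref{prop:subtypingOnFreeParams}, the extended actions $\coerSubst(\vty)$, $\coerSubst(\cty)$, and (componentwise) $\coerSubst(\tyCtx)$ are well-formed coercions witnessing the expected subtyping between $\sigma_1$- and $\sigma_2$-substituted types, so both sides of each equation are type-compatible. We also observe that $\sigma_1$ and $\sigma_2$ must agree on skeleton parameters: well-formedness of $\coerSubst$ forces $\sigma_1(\tyvar)$ and $\sigma_2(\tyvar)$ to share a skeleton for every $\tyvar$ on which $\coerSubst$ is defined, so the skeletal interpretation $\ssem{\sigma_1(v)}$ coincides with $\ssem{\sigma_2(v)}$ for a fixed $\xi$.

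The base cases are direct. For a variable $x_i$, the product coercion $\sem{\coerSubst(\tyCtx)}$ acts on an input tuple by applying $\sem{\coerSubst(\vty_j)}$ in each coordinate, so projection on the $i$-th coordinate yields precisely $\sem{\coerSubst(\vty_i)}$ as required. For $\tmUnit$, both interpretations are $\env \mapsto \star$ and $\sem{\coerSubst(\tyUnit)}$ is the identity. The cast cases $\cast{v}{\coerVal}$ and $\cast{c}{\coerComp}$ combine the inductive hypothesis with a coercion uniqueness argument: both $\sem{\coerSubst(\vty')} \circ \sem{\sigma_1(\coerVal)}$ and $\sem{\sigma_2(\coerVal)} \circ \sem{\coerSubst(\vty)}$ interpret a coercion from $\sigma_1(\vty)$ to $\sigma_2(\vty')$, and by the coercion uniqueness proposition preceding Theorem~\ref{th:sqCommute} these two interpretations are equal, so the IH transports across the cast.

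The structural cases for $\return{v}$, $v_1\,v_2$, $\doin{x}{c_1}{c_2}$, and $\operation{v}{y:\vty}{c}$ are handled by applying the IH to subterms and invoking naturality of $\inval$, $\inop$, and the Kleisli extension with respect to coercion injections. A bookkeeping point: the intermediate types appearing in subterms (the return type of $c_1$ in a do-binding, the operation argument type in an $\op$-call) may introduce free parameters that must lie within $\FP$; these are covered because the types in question are either in $\tyCtx$, in the current conclusion type, or ground, and our conditions $\fp{\vty} \cup \overline{\fp{\tyCtx}} \subseteq \FP$ and $\fp{\cty} \cup \overline{\fp{\tyCtx}} \subseteq \FP$ already ensure this. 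The case equalities themselves reduce to routine manipulations of $(\cdot)^\dag$, $\inop$, and pairing with the injections guaranteed by the IH.

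The principal obstacle is the lambda case $\fun{(x : \vty)}{c}$, where the denotation is a pair $(f, f')$ of a skeletal and an effectful function related by the commuting square defining $\sem{\vty \to \cty}$, and $\sem{\coerSubst(\vty \to \cty)}$ modifies only the effectful component by pre-composing $\sem{\coerSubst(\vty)}$ and post-composing $\sem{\coerSubst(\cty)}$. The skeletal components of the $\sigma_1$- and $\sigma_2$-interpretations coincide by the observation above that skeletal semantics is insensitive to the choice among $\sigma_1$, $\sigma_2$. The effectful component of the required identity unfolds, after noting $\coerSubst(\tyCtx, x : \vty) = \coerSubst(\tyCtx) \times \coerSubst(\vty)$, directly to the inductive hypothesis applied to the body $c$ in the extended typing context, closing the induction.
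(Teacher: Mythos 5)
Your strategy diverges from the paper's in a way that matters. The paper does not induct on typing derivations at this point at all: it post-composes both sides of the desired equation with the injection $\inject{\sigma_1(\vty)}$ into the skeletal semantics, observes that everything collapses there (Theorem~\ref{th:sqCommute} supplies the commuting squares for $\sigma_1(v)$ and for $\sigma_2(v)$, the skeletal interpretations of the two substituted terms are literally identical, and interpretations of the coercions produced by $\coerSubst$ commute with the injections into the shared skeleton), and then cancels the injection. The induction is thereby delegated entirely to Theorem~\ref{th:sqCommute}. A direct induction on the term, as you propose, is not merely longer --- it does not close.

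The gap is concentrated in your ``bookkeeping point'', whose claim is false. In the application case $v_1\,v_2$ with $v_1 : \vty \to \cty$ and $v_2 : \vty$, the argument type $\vty$ is an intermediate type appearing neither in $\tyCtx$ nor in the conclusion type $\cty$, and it need not be ground; the same happens with the value type of $c_1$ in $\doin{x}{c_1}{c_2}$ and with the type of the bound value in $\letval{x}{v}{c}$. To invoke your inductive hypothesis on $v_1$ you would need $\fp{\vty \to \cty} \cup \overline{\fp{\tyCtx}} \subseteq \FP$, hence $\fp{\vty} \subseteq \overline{\FP}$, which does not follow from $\fp{\cty} \cup \overline{\fp{\tyCtx}} \subseteq \FP$. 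Worse, for a parameter $\tyvar$ of $\vty$ with $\tyvar \notin \FP$ the family $\coerSubst$ is undefined, and in general \emph{no} coercion relates $\sigma_1(\tyvar)$ and $\sigma_2(\tyvar)$ (take $\sigma_1(\tyvar) = \tyUnit \dirtyTo{\set{\op_1}} \tyUnit$ and $\sigma_2(\tyvar) = \tyUnit \dirtyTo{\set{\op_2}} \tyUnit$, which share a skeleton but are incomparable), so no extension of the coercion family can even state the inductive hypothesis at that type. The only relationship available between $\sem{\sigma_1(v_1)}$ and $\sem{\sigma_2(v_1)}$ there is agreement after injection into the skeletal semantics --- which is exactly the fact the paper's non-inductive argument is built on. Your variable, unit, cast, operation and lambda cases are fine (operation types are ground by assumption, and the lambda case pushes $\vty$ into the context with the correct polarity), but the theorem as stated is not a strong enough induction invariant for application, sequencing, and let; you should switch to the injection-cancellation argument.
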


\begin{proof}
  Let us consider only the case for values, as the one for computations proceeds analogously.
  The proof is nicely summed up with the commutative diagram:
  \[
    \tikzcdset{scale cd/.style={every label/.append style={scale=#1},
cells={nodes={scale=#1}}}}
\begin{tikzcd}
  & {\sem{\sigma_1(\tyCtx)}} \arrow[rr, "{{\sem{\sigma_1(v)}}}"] \arrow[dd, "\inject{\sigma_1(\tyCtx)}" description, hookrightarrow] \arrow[ddddd, "{{\sem{\coerSubst(\tyCtx)}}}" description, bend right=60,sloped, hookrightarrow] &          & {\sem{\sigma_1(\vty)}} \arrow[dd, "\inject{\sigma_1(\vty)}" description, hookrightarrow]                                                                                                     &     \\
  &                                                                                                                                                              &         &                                                                                                                                            &     \\
  & {\ssem{\sigma_1(\tyCtx)}} \arrow[rr, "{{\ssem{\sigma_1(v)}}}"] \ar[d,-,double equal sign distance]   &  {} \ar[d,-,double equal sign distance] & {\ssem{\sigma_1(\vty)}} \ar[d,-,double equal sign distance]                                                                                                                               &  \\
  & {\ssem{\sigma_2(\tyCtx)}} \arrow[rr, "{{\ssem{\sigma_2(v)}}}"']                                                                                                                           &    {}     & {\ssem{\sigma_2(\vty)}}                                                                                                                                   &  \\
  &                                                                                                                                                              &  &                                                                                                                                            &     \\
  & {\sem{\sigma_2(\tyCtx)}} \arrow[uu, "\inject{\sigma_2(\tyCtx)}" description, hookrightarrow] \arrow[rr, "{{\sem{\sigma_2(v)}}}"]                                                                   &          & {\sem{\sigma_2(\vty)}} \arrow[uu, "\inject{\sigma_2(\vty)}" description, hookrightarrow] \arrow[uuuuu, "{{\sem{\coerSubst(\vty)}}}" description, bend right=60,sloped,hookrightarrow] &
\end{tikzcd}

  \]
  Here, the top and bottom square commute due to the definition of effectful semantics, the middle square commutes because the skeletal semantics is identical, and the side triangles commute because $\coerSubst$ produces well-typed coercions, which commute with injections into skeletons.
  We conclude by noticing that conclusion follows from the fact that $\inject{\sigma_1(A)}$ is an injection and can be cancelled from the left.
\end{proof}

From this, it immediately follows that complete phases preserve the semantics up to a coercion.
In particular, every instantiation of a polymorphic value~$v$, the scenario we are interested in, can be replaced with a suitably coerced instantiation of a simplified value~$\sigma(v)$.

\begin{cor}
  \label{cor:toplevelCorrectness}
  Let $\tVal{\ctx}{\cdot}{v}{\vty}$ be a well-typed closed value, $\phase$ a complete phase such that $\phase(\ctx, \fp{\vty}) = \tuple{\ctx', \sigma}$.
  Then, for any instantiation $\validSub{\intp}{\ctx}{\ctxuse}$, there exists an instantiation $\validSub{\intp'}{\ctx'}{\ctxuse}$ and a coercion $\tCoerVal{\ctxuse}{\coerVal}{\intp'(\sigma(\vty)) \le \intp(\vty)}$ such that
  \[
    {\sem{\tVal{\ctxuse}{\cdot}{\intp(v)}{\intp(\vty)}}} = {\sem{\tVal{\ctxuse}{\cdot}{\cast{\intp'(\sigma(v))}{\coerVal}}{\intp(\vty)}}}
  \]
  for any assignment $\xi$ of parameters in $\ctxuse$.
\end{cor}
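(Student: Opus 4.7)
The plan is to peel the corollary into three essentially mechanical layers, invoking in turn the completeness of $\phase$, Proposition~\ref{prop:subtypingOnFreeParams}, and Theorem~\ref{th:denotationPreservation}. First, I would fix $\FP := \fp{\vty}$ and feed the given instantiation $\validSub{\intp}{\ctx}{\ctxuse}$ into Definition~\ref{def:CompletePhase}. Completeness of $\phase$ supplies in one stroke both the instantiation $\validSub{\intp'}{\ctx'}{\ctxuse}$ demanded by the statement and a coercion family $\tCoerSub{\ctxuse}{\coerSubst_\intp}{(\intp' \circ \sigma) \le_{\FP} \intp}$ that will mediate between the original and the simplified instantiation.

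Next, I would specialize this coercion family to $\vty$ itself. Since $\fp{\vty} \subseteq \FP$ by choice of $\FP$, Proposition~\ref{prop:subtypingOnFreeParams} yields immediately $\tCoerVal{\ctxuse}{\coerSubst_\intp(\vty)}{(\intp' \circ \sigma)(\vty) \le \intp(\vty)}$. Because $(\intp' \circ \sigma)(\vty) = \intp'(\sigma(\vty))$, setting $\coerVal := \coerSubst_\intp(\vty)$ gives exactly the coercion requested by the corollary, and $\cast{\intp'(\sigma(v))}{\coerVal}$ is well-typed of type $\intp(\vty)$ by the typing rule for value casts.

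Third, I would apply Theorem~\ref{th:denotationPreservation} to $v$ with $\sigma_1 := \intp' \circ \sigma$, $\sigma_2 := \intp$ and coercion family $\coerSubst_\intp$. The side condition $\fp{\vty} \cup \overline{\fp{\emptyctx}} \subseteq \FP$ collapses to $\fp{\vty} \subseteq \FP$ because $v$ is closed, and this is exactly our choice of $\FP$. Moreover, on the empty typing context the coercion $\coerSubst_\intp(\emptyctx)$ is the identity on the one-element set, so the corresponding factor in the theorem's equation drops out. What remains is a direct equality between $\sem{\intp'(\sigma(v))}$, $\sem{\intp(v)}$ and $\sem{\coerSubst_\intp(\vty)}$ at the interpretation level.

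To close the argument I would compare this with the definition of effectful semantics for value casts used in the proof of Theorem~\ref{th:sqCommute}, namely $\sem{\cast{u}{\coerVal}} = \sem{\coerVal} \circ \sem{u}$, specialized to $u := \intp'(\sigma(v))$ and $\coerVal$ as above. Matching the two equations gives the desired semantic equality on $\sem{\tVal{\ctxuse}{\cdot}{\intp(v)}{\intp(\vty)}}$ for every assignment $\xi$. The only delicate step is orienting the polarities when invoking Theorem~\ref{th:denotationPreservation}: the family produced by completeness goes from $\intp' \circ \sigma$ to $\intp$, so care is needed so that $\sigma_1$ and $\sigma_2$ are assigned the right way round and the coercion $\coerVal$ points in the direction the statement asks for; no genuinely combinatorial obstacle appears beyond this bookkeeping.
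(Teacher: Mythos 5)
Your proposal is correct and follows essentially the same route as the paper's own proof: invoke completeness of $\phase$ to obtain $\intp'$ and the coercion family, take $\coerVal := \coerSubst_\intp(\vty)$, apply Theorem~\ref{th:denotationPreservation} (with the typing-context factor trivial since $v$ is closed), and finish with the cast equation $\sem{\cast{u}{\coerVal}} = \sem{\coerVal} \circ \sem{u}$ from the proof of Theorem~\ref{th:sqCommute}. The orientation bookkeeping you flag at the end is real but is handled no more explicitly in the paper's proof than in yours.
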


\begin{proof}
  Since $\phase$ is complete, we get $ \tCoerSub{\ctxuse}{\coerSubst}{(\intp' \circ \sigma) \leq \intp}$, and $\tCoerVal{\ctxuse}{\coerSubst(\vty)}{\intp'(\sigma(\vty)) \leq \intp(\vty)}$ is the required coercion $\coerVal$.
  From Theorem~\ref{th:denotationPreservation}, we get
  \begin{align*}
    \sem{\tVal{\ctxuse}{\cdot}{\intp(v)}{\intp(\vty)}}
    & = \sem{\coerSubst(\vty)} \circ \sem{\tVal{\ctxuse}{\cdot}{\intp'(\sigma(v))}{\intp'(\sigma(\vty))}} \\
    & = {\sem{\tVal{\ctxuse}{\cdot}{\cast{\intp'(\sigma(v))}{\coerSubst(\vty)}}{\intp(\vty)}}}
  \end{align*}
  with the second equation being exactly the one as in the proof of Theorem~\ref{th:sqCommute}.
\end{proof}

\section{Implementation}
\label{sec:implementation}
\subsection{Simplification pipeline}

We have implemented the simplification phases described in Section~\ref{sec:simplifications} in the existing \eff compiler as a default post-processing step after top-level type inference~\cite{DBLP:journals/jfp/KarachaliasPSVS20}.
Type inference produces explicitly typed terms in a core calculus, similar to one described in Section~\ref{sec:language} in a given parameter context.
Note that we generalize only top-level values~\cite{DBLP:conf/tldi/VytiniotisJS10}.
From the type of those values, we compute the polarities of parameters, apply all simplification phases, and map the resulting substitution to both the type and the term.
Only after the simplification, we generalize the term and store it for further use.
Due to the Corollary~\ref{cor:toplevelCorrectness}, each time the original top-level value is referenced, a simplified value can be used in its place, with both the instantiation and cast automatically inferred by type inference.
In practice, the casts are usually trivial and optimized away by other source-level transformations~\cite{DBLP:journals/pacmpl/KarachaliasKPS21}.

The implementation of phases is a slightly different from the outline in the Section~\ref{sec:simplifications}.
As constraints are represented with simple directed graphs, phases $\phase_\mathtt{red}$ and $\phase_\mathtt{loopPar}$ are applied implicitly.
To make the implementation simpler, we do not construct substitutions, but only collect equality constraints that we pass to the existing unification engine.
Also, since type and type constraint parameters belonging to different skeletons are independent of each other, we perform all the simplifications separately on smaller graphs, significantly reducing their cost.

As mentioned in Section~\ref{sub:minimizing}, we do not apply the phase $\phase_{\mathtt{fullDirt}}$, that would map all dirt parameters with no outgoing edges to the whole signature $\Sigma$.
First, users can define operations at later time, so $\Sigma$ can grow after the simplifications have been applied.
Next, while removal slightly simplifies dirt coercions, those already disappear when compiling to OCaml, so unlike $\phase_{\mathtt{emptyDirt}}$, there is no additional efficiency benefit.
Finally, we have not encountered such cases in practice.

\subsection{Results}
\label{sub:results}

To evaluate the true impact of our simplifications, we repeated the experiments on the benchmark suite previously used to assess the performance of our optimizing compiler~\cite{DBLP:journals/pacmpl/KarachaliasKPS21}.
This time, we compared manually monomorphised code against both unsimplified and simplified polymorphic code.
The newly added simplification phases successfully reproduced the exact same code as the manually monomorphised version, except that now standard lists are used instead of custom monomorphic ones tailored to specific cases.
For example, the $n$-queens benchmark previously used
\begin{efflisting}
type solution = SolutionNil | SolutionCons of (int * int) * solution
type solutions = SolutionsNil | SolutionsCons of solution * solutions
\end{efflisting}
instead of \effcode{(int * int) list list}, as the use of latter would produce polymorphic functions.

The performance results, shown in Table~\ref{tab:benchmarkTimes}, reflect this equivalence.
All benchmarks were using Bechamel micro-benchmarking tool on OCaml branch \verb|4.12.0+domains+effects|.
We repeated the experiments across a range of problem sizes, but the results were consistent, so we report only the largest instance for each benchmark.

\begin{table}
  \begin{tabular}{rcc}
    \toprule
    & unsimplified & simplified \\
    \midrule
    one solution of $n$-queens                       & 24.5         & 0.92       \\
    all solutions of $n$-queens                      & 489.3        & 0.93       \\
    stateful counter                                 & 2.48         & 1.21       \\
    list of generator values                         & 2.31         & 1.01       \\
    stateful sum of generator values\footnotemark[1] & 1.03         & 0.94       \\
    exceptional arithmetic                           & 1.92         & 1.07       \\
    stateful arithmetic\footnotemark[1]              & 0.86         & 0.86       \\
    pure tree traversal                              & 48.85        & 1.21       \\
    reader tree traversal                            & 42.32        & 0.95       \\
    stateful tree traversal                          & 32.96        & 0.99       \\
    \bottomrule
  \end{tabular}
  \caption{Slowdown factors (lower is better) of polymorphic code in comparison to monomorphic one.}
  \label{tab:benchmarkTimes}
\end{table}

As expected, the runtimes of the simplified code are roughly on par with those of monomorphic code, while unsimplified polymorphic code can be one or two orders of magnitude slower, depending on the degree of polymorphism used in the benchmarks.
Since the only actual difference between the monomorphic and simplified polymorphic code lies in the types used, any performance variation can be attributed to measurement error or to the trade-off between using built-in polymorphic lists, which may benefit from compiler and runtime optimisations, and user-defined monomorphic lists, which may gain from reduced overhead due to their fixed types.

To measure the impact of particular simplification phases, we also compiled the \eff standard library, which features multiple recursive higher-order polymorphic functions, mostly on lists.
Without simplifications, the compiled version of the standard library functions featured 447 explicit coercions, all of which were removed after applying the phases.
\footnotetext[1]{%
  Unfortunately, the two marked benchmarks are affected by an unresolved compiler bug that prevents inlining of a handler, resulting in approximately $700\times$ worse performance compared to hand-written OCaml code without handlers.
  This regression is most likely due to the extensive refactoring that occurred during the transition to a polymorphic compiler.
  Previously, the handler was successfully inlined, yielding comparable performance.
  Due to the excessive runtime, coercions contribute little to the overall cost, and measurement error is amplified.
}
Table~\ref{tab:graphsStdlib} shows the information about graph sizes of simplified constraints sets with different simplification settings (recall that phases $\phase_\mathtt{red}$ and $\phase_\mathtt{loopPar}$ are performed implicitly).
The number of edges in type graphs is shown in \textbf{bold} as these amount to additional coercion parameters in the compiled code.
Additionally, we show the number of generated monadic artifacts that have significant impact on the final runtime efficiency.

\begin{table}[ht!]
  \begin{tabular}{ccccc}
    \toprule
    \textbf{simplification level}                 & \textbf{dirt nodes/edges} & \textbf{type nodes/edges} & \ocamlcode{>>=} & \ocamlcode{return} \\ \midrule
    no simplification                             & 632 / 644                 & 435 / \textbf{447}        & 78              & 29                 \\
    only $\phase_{\mathtt{scc}}$ (dirt and types) & 47 / 53                   & 242 / \textbf{193}        & 19              & 24                 \\
    only dirt simplifications                     & 0 / 0                     & 435 / \textbf{447}        & 27              & 24                 \\
    only type simplifications                     & 632 / 644                 & 0 / \textbf{0}            & 77              & 52                 \\
    all simplifications                           & 0 / 0                     & 0 / \textbf{0}            & 17              & 31                 \\ \bottomrule
  \end{tabular}
  \caption{Impact of different simplification phases on graph sizes and generated artifacts when compiling \eff standard library.}
  \label{tab:graphsStdlib}
\end{table}
Similarly, we compiled polymorphic versions of all the above benchmarks in the suite, with the total results shown in Table~\ref{tab:graphsBenchmarks}.
As with the standard library, the number of nodes, edges and monadic artifacts decreases with each simplification phase.
The final result, where all simplifications are applied, is the same as the one obtained by manually monomorphising the benchmarks.

One should notice the marked difference in the sizes of dirt and type constraints between the standard library and benchmarks.
This is because the benchmarks are heavily skewed towards effects and handlers to showcase optimizations when inlining handlers, while the standard library is geared towards general purpose polymorphic functions (like \texttt{map} and \texttt{fold}).

\begin{table}[ht!]
  \begin{tabular}{ccccc}
    \toprule
    \textbf{simplification level}                 & \textbf{dirt nodes/edges} & \textbf{type nodes/edges} & \ocamlcode{>>=} & \ocamlcode{return} \\ \midrule
    no simplification                             & 4351 / 4832               & 89 / \textbf{98}          & 485             & 253                \\
    only $\phase_{\mathtt{scc}}$ (dirt and types) & 6 / 8                     & 29 / \textbf{20}          & 84              & 211                \\
    only dirt simplifications                     & 0 / 0                     & 89 / \textbf{98}          & 87              & 211                \\
    only type simplifications                     & 4351 / 4832               & 0 / \textbf{0}            & 485             & 265                \\
    all simplifications                           & 0 / 0                     & 0 / \textbf{0}            & 84              & 213                \\ \bottomrule
  \end{tabular}
  \caption{Impact of different simplification phases on graph sizes and generated artifacts when compiling polymorphic versions of \eff benchmark suite.}
  \label{tab:graphsBenchmarks}
\end{table}

We can clearly see how progressive simplification phases reduce the size of the graphs down to zero.
Dirt simplification phases do not reduce coercion parameters as those are irrelevant when compiling to OCaml anyway.
However, they do reduce monadic artifacts as they allow purity based optimizations~\cite{DBLP:journals/pacmpl/KarachaliasKPS21} to kick in and reduce the number of generated binds by more than 60\% in favour of more efficient let bindings.
As expected by Corollary~\ref{cor:toplevelCorrectness}, additional casts increase the number of monadic returns as more terminal values are recognized as pure and need to be lifted to be used in computations that remain monadic.
Note that all those returns were present before, but in form of additional arguments to functions with coercion parameters.

\section{Conclusion}
\label{sec:conclusion}
In this work, we have established a framework in which to explore semantics-preserving simplifications of coercions.
Even though we mentioned effect handlers only briefly, they are our main motivation, as the massive overhead of polymorphic explicit coercions restricted programmers to using monomorphic functions.
With the removal of practically all unnecessary coercion parameters, our work makes the overhead negligible and allows efficient compilation of significantly larger and more modular programs, which is likely to reveal further challenges.

\subsection{Related work}

There are of course multiple other approaches to the efficient evaluation of effect handlers.
From OCaml 5~\cite{DBLP:conf/pldi/Sivaramakrishnan21} onwards, effect handlers are available natively through an efficient runtime representation.
As the main purpose for introducing handlers into OCaml was the support for various concurrency scheduling strategies, the implementation is focused on efficiency of single-shot handlers, i.e. ones where a continuation is called at most once.
In this respect, our approach is agnostic and supports efficient compilation of arbitrary handlers.

Effect-aware compilation is also featured in Koka~\cite{DBLP:journals/corr/Leijen14} that, in addition to an efficient runtime, uses a row-based effect system to determine which functions are pure and do not require a CPS translation~\cite{DBLP:conf/popl/Leijen17}.
In order to speed up the search for the appropriate encompassing handler, Koka uses \emph{evidence passing}, where a vector of active handlers is passed around to operations at runtime~\cite{DBLP:journals/pacmpl/XieL21}.
Similar evidence is determined at compile time in Effekt~\cite{DBLP:journals/pacmpl/BrachthauserSO20}, which enforces lexical scoping of handlers~\cite{DBLP:journals/pacmpl/SchusterBO20, DBLP:conf/pldi/SchusterB0O22} at the slight cost of losing full polymorphism.
It would also be interesting if this approach, which features \emph{subregion evidence}~\cite{DBLP:journals/pacmpl/MullerSSOB23}, a form of explicit subeffecting witnesses,  fits into our simplification framework and could benefit from it in any way.

Another promising venue on the topic of algebraic effects is that of \emph{lift coercions}~\cite{DBLP:journals/pacmpl/BiernackiPPS18,DBLP:journals/pacmpl/BiernackiPPS19,DBLP:journals/pacmpl/XieCIL22,DBLP:journals/pacmpl/YoshiokaSI24}, which enable modular code by explicitly marking operations that should not be handled, thereby preventing their unintentional capture by external handlers.
Similar to our subtyping coercions, lift coercions are explicit coercions between different effects, and it would be interesting to see how much they impact the performance and if our approach could simplify them.

Looking at constraint simplification more generally, the existing literature is vast~\cite{DBLP:journals/tcs/FuhM90,DBLP:conf/sas/TrifonovS96,DBLP:conf/fpca/HengleinR95},
and uses similar techniques as we do: canonising constraints into a graph, followed by graph reduction and context-specific simplifications.
It would be interesting to explore whether bridge compression or our framework of substitutions, coercion families, and complete phases offers any new insights.

One general approach worth mentioning is \emph{garbage collection}~\cite{DBLP:conf/icfp/Pottier96,DBLP:journals/iandc/Pottier01}, a canonical simplification algorithm, which performs just as well as, or better than, a collection of heuristics.
Here, one first transitively closes the constraints, and then keeps only those of the form $\tyvar_1^- \le \tyvar_2^+$.
Even though our earlier work~\cite{DBLP:journals/corr/Pretnar13} was based on garbage collection, this fails to transfer over to explicit coercions.
For example, consider the set of type constraints represented by the following graph:
\[
  \begin{tikzpicture}
    \node [shape=circle, draw] (t1) at (0, 0) {$\tyvar_1^-$};
    \node [shape=circle, draw] (t2) at (4, 0) {$\tyvar_2^-$};
    \node [shape=circle, draw] (t3) at (2, 1) {$\tyvar_3$};
    \node [shape=circle, draw] (t4) at (0, 2) {$\tyvar_4^+$};
    \node [shape=circle, draw] (t5) at (4, 2) {$\tyvar_5^+$};
    \draw[->] (t1) -- (t3);
    \draw[->] (t2) -- (t3);
    \draw[->] (t3) -- (t4);
    \draw[->] (t3) -- (t5);
  \end{tikzpicture}
\]
Even if we add additional coercions from $\tyvar_1$ and $\tyvar_2$ to $\tyvar_4$ and $\tyvar_5$ to the graph, the removal of $\tyvar_3$ requires its instantiation, and there is no type we can assign it without changing the meaning.

One could consider adding additional type constructors like $\tyvar_1 \sqcup \tyvar_2$ or $\tyvar_4 \sqcap \tyvar_5$ as done in algebraic subtyping~\cite{DBLP:conf/popl/DolanM17}.
However, recall from Section~\ref{sub:explicit-polymorphism} that in \ocaml, type coercions are translated into polymorphic functions.
Translating $\tyvar_4 \sqcap \tyvar_5$ therefore requires a type constructor \ocamlcode{'a4 /\\ 'a5}, along with coercions \ocamlcode{'a4 /\\ 'a5 -> 'a4} and \ocamlcode{'a4 /\\ 'a5 -> 'a5}, for arbitrary type parameters \ocamlcode{'a4} and \ocamlcode{'a5}.
Type products might serve this purpose, but we have not yet explored this option or the potential performance implications of using (boxed) tuples.

\subsection{Future work}
\label{sub:future-work}

Even though the results in Section~\ref{sub:results} indicate there is not a lot left to simplify, we would like to determine when a given set of constraints cannot be reduced further, like it is done for garbage collection~\cite{DBLP:journals/iandc/Pottier01} or algebraic subtyping~\cite{DBLP:conf/popl/DolanM17}.

Similarly, we would like to explore the role that ordering of simplification phases plays in the end result.
Outside of the framework of simplifying contexts, constraints can be reduced even further through transformations of terms, for example, with suitable rules that rearrange the coercions in terms~\cite{DBLP:journals/pacmpl/KarachaliasKPS21}, changing their local meaning whilst preserving the semantics of the whole program.

Another direction in which polymorphism can be combined with the optimizing compiler~\cite{DBLP:journals/pacmpl/KarachaliasKPS21} is polymorphic function specialisation.
For example, a polymorphic list map of type
\[
  (\tyvar_1 \dirtyTo{\dirtvar} \tyvar_2) \to (\tyvar_1 \, \mathtt{list} \dirtyTo{\dirtvar}\tyvar_2 \, \mathtt{list})
\]
needs to be compiled using the monadic embedding as the given function can be effectful.
Since this is inefficient for pure functions, it is beneficial to compile an additional variant~\cite{DBLP:conf/popl/Leijen17} with the assumption $\dirtvar = \emptyset$.
The number of possible variants is exponential in the number of dirt parameters, so the presented algorithm already helps in this regard.
It remains to be explored, but it is likely that in practice only two variants are sufficient, a general one and one with all dirt parameters set to $\emptyset$.

As the focus of our work was practical --- how to best reduce the number of coercion parameters --- we kept the denotational semantics as simple as necessary to still guarantee soundness of simplifications.
However, the interplay between various constructs hints towards a 2-category with parameter contexts as objects, substitutions as morphisms, and coercion families as 2-morphisms, though the role of polarity remains to be explored.

\section*{Acknowledgments}
\noindent
For their suggestions and comments, we would like to sincerely thank the anonymous reviewers, Danel Ahman, Andrej Bauer, George Karachalias, Sam Lindley, Tom Schrijvers and the participants of IFIP WG~2.1 meeting in Oxford, especially Fritz Henglein.

\bibliographystyle{alphaurl}
\bibliography{references}

\newcommand{\etalchar}[1]{$^{#1}$}
\begin{thebibliography}{SDW{\etalchar{+}}21}

\bibitem[BP14]{DBLP:journals/corr/BauerP13}
Andrej Bauer and Matija Pretnar.
\newblock An effect system for algebraic effects and handlers.
\newblock {\em Log. Methods Comput. Sci.}, 10(4), 2014.
\newblock \href {https://doi.org/10.2168/LMCS-10(4:9)2014} {\path{doi:10.2168/LMCS-10(4:9)2014}}.

\bibitem[BPPS18]{DBLP:journals/pacmpl/BiernackiPPS18}
Dariusz Biernacki, Maciej Pir{\'{o}}g, Piotr Polesiuk, and Filip Sieczkowski.
\newblock Handle with care: relational interpretation of algebraic effects and handlers.
\newblock {\em Proc. {ACM} Program. Lang.}, 2({POPL}):8:1--8:30, 2018.
\newblock \href {https://doi.org/10.1145/3158096} {\path{doi:10.1145/3158096}}.

\bibitem[BPPS19]{DBLP:journals/pacmpl/BiernackiPPS19}
Dariusz Biernacki, Maciej Pir{\'{o}}g, Piotr Polesiuk, and Filip Sieczkowski.
\newblock Abstracting algebraic effects.
\newblock {\em Proc. {ACM} Program. Lang.}, 3({POPL}):6:1--6:28, 2019.
\newblock \href {https://doi.org/10.1145/3290319} {\path{doi:10.1145/3290319}}.

\bibitem[BSO20]{DBLP:journals/pacmpl/BrachthauserSO20}
Jonathan~Immanuel Brachth{\"{a}}user, Philipp Schuster, and Klaus Ostermann.
\newblock Effects as capabilities: effect handlers and lightweight effect polymorphism.
\newblock {\em Proc. {ACM} Program. Lang.}, 4({OOPSLA}):126:1--126:30, 2020.
\newblock \href {https://doi.org/10.1145/3428194} {\path{doi:10.1145/3428194}}.

\bibitem[DM17]{DBLP:conf/popl/DolanM17}
Stephen Dolan and Alan Mycroft.
\newblock Polymorphism, subtyping, and type inference in mlsub.
\newblock In Giuseppe Castagna and Andrew~D. Gordon, editors, {\em Proceedings of the 44th {ACM} {SIGPLAN} Symposium on Principles of Programming Languages, {POPL} 2017, Paris, France, January 18-20, 2017}, pages 60--72. {ACM}, 2017.
\newblock \href {https://doi.org/10.1145/3009837.3009882} {\path{doi:10.1145/3009837.3009882}}.

\bibitem[FM90]{DBLP:journals/tcs/FuhM90}
You{-}Chin Fuh and Prateek Mishra.
\newblock Type inference with subtypes.
\newblock {\em Theor. Comput. Sci.}, 73(2):155--175, 1990.
\newblock \href {https://doi.org/10.1016/0304-3975(90)90144-7} {\path{doi:10.1016/0304-3975(90)90144-7}}.

\bibitem[HR95]{DBLP:conf/fpca/HengleinR95}
Fritz Henglein and Jakob Rehof.
\newblock Safe polymorphic type inference for scheme: Translating scheme to {ML}.
\newblock In John Williams, editor, {\em Proceedings of the seventh international conference on Functional programming languages and computer architecture, {FPCA} 1995, La Jolla, California, USA, June 25-28, 1995}, pages 192--203. {ACM}, 1995.
\newblock \href {https://doi.org/10.1145/224164.224203} {\path{doi:10.1145/224164.224203}}.

\bibitem[KKPS21]{DBLP:journals/pacmpl/KarachaliasKPS21}
Georgios Karachalias, Filip Koprivec, Matija Pretnar, and Tom Schrijvers.
\newblock Efficient compilation of algebraic effect handlers.
\newblock {\em Proc. {ACM} Program. Lang.}, 5({OOPSLA}):1--28, 2021.
\newblock \href {https://doi.org/10.1145/3485479} {\path{doi:10.1145/3485479}}.

\bibitem[Kop24]{Koprivec_2024}
Filip Koprivec.
\newblock {\em Efficient multishot algebraic effect handlers}.
\newblock PhD thesis, University of Ljubljana, Faculty of Mathematics and Physics, 2024.
\newblock URL: \url{https://repozitorij.uni-lj.si/IzpisGradiva.php?lang=eng&id=162658}.

\bibitem[KPS{\etalchar{+}}20]{DBLP:journals/jfp/KarachaliasPSVS20}
Georgios Karachalias, Matija Pretnar, Amr~Hany Saleh, Stien Vanderhallen, and Tom Schrijvers.
\newblock Explicit effect subtyping.
\newblock {\em J. Funct. Program.}, 30:e15, 2020.
\newblock \href {https://doi.org/10.1017/S0956796820000131} {\path{doi:10.1017/S0956796820000131}}.

\bibitem[Lei14]{DBLP:journals/corr/Leijen14}
Daan Leijen.
\newblock Koka: Programming with row polymorphic effect types.
\newblock In Paul~Blain Levy and Neel Krishnaswami, editors, {\em Proceedings 5th Workshop on Mathematically Structured Functional Programming, MSFP@ETAPS 2014, Grenoble, France, 12 April 2014}, volume 153 of {\em {EPTCS}}, pages 100--126, 2014.
\newblock \href {https://doi.org/10.4204/EPTCS.153.8} {\path{doi:10.4204/EPTCS.153.8}}.

\bibitem[Lei17]{DBLP:conf/popl/Leijen17}
Daan Leijen.
\newblock Type directed compilation of row-typed algebraic effects.
\newblock In Giuseppe Castagna and Andrew~D. Gordon, editors, {\em Proceedings of the 44th {ACM} {SIGPLAN} Symposium on Principles of Programming Languages, {POPL} 2017, Paris, France, January 18-20, 2017}, pages 486--499. {ACM}, 2017.
\newblock \href {https://doi.org/10.1145/3009837.3009872} {\path{doi:10.1145/3009837.3009872}}.

\bibitem[LPT03]{DBLP:journals/iandc/LevyPT03}
Paul~Blain Levy, John Power, and Hayo Thielecke.
\newblock Modelling environments in call-by-value programming languages.
\newblock {\em Inf. Comput.}, 185(2):182--210, 2003.
\newblock \href {https://doi.org/10.1016/S0890-5401(03)00088-9} {\path{doi:10.1016/S0890-5401(03)00088-9}}.

\bibitem[MGJZ25]{10.1145/3763177}
Cong Ma, Zhaoyi Ge, Max Jung, and Yizhou Zhang.
\newblock Zero-overhead lexical effect handlers.
\newblock {\em Proc. ACM Program. Lang.}, 9(OOPSLA2), October 2025.
\newblock \href {https://doi.org/10.1145/3763177} {\path{doi:10.1145/3763177}}.

\bibitem[MGLZ24]{DBLP:journals/pacmpl/MaGLZ24}
Cong Ma, Zhaoyi Ge, Edward Lee, and Yizhou Zhang.
\newblock Lexical effect handlers, directly.
\newblock {\em Proc. {ACM} Program. Lang.}, 8({OOPSLA2}):1670--1698, 2024.
\newblock \href {https://doi.org/10.1145/3689770} {\path{doi:10.1145/3689770}}.

\bibitem[MSS{\etalchar{+}}23]{DBLP:journals/pacmpl/MullerSSOB23}
Marius M{\"{u}}ller, Philipp Schuster, Jonathan~Lindegaard Starup, Klaus Ostermann, and Jonathan~Immanuel Brachth{\"{a}}user.
\newblock From capabilities to regions: Enabling efficient compilation of lexical effect handlers.
\newblock {\em Proc. {ACM} Program. Lang.}, 7({OOPSLA2}):941--970, 2023.
\newblock \href {https://doi.org/10.1145/3622831} {\path{doi:10.1145/3622831}}.

\bibitem[MSSB25]{10.1145/3747529}
Serkan Muhcu, Philipp Schuster, Michel Steuwer, and Jonathan~Immanuel Brachth\"{a}user.
\newblock Multiple resumptions and local mutable state, directly.
\newblock {\em Proc. ACM Program. Lang.}, 9(ICFP), August 2025.
\newblock \href {https://doi.org/10.1145/3747529} {\path{doi:10.1145/3747529}}.

\bibitem[Pot96]{DBLP:conf/icfp/Pottier96}
Fran{\c{c}}ois Pottier.
\newblock Simplifying subtyping constraints.
\newblock In Robert Harper and Richard~L. Wexelblat, editors, {\em Proceedings of the 1996 {ACM} {SIGPLAN} International Conference on Functional Programming, {ICFP} 1996, Philadelphia, Pennsylvania, USA, May 24-26, 1996}, pages 122--133. {ACM}, 1996.
\newblock \href {https://doi.org/10.1145/232627.232642} {\path{doi:10.1145/232627.232642}}.

\bibitem[Pot01]{DBLP:journals/iandc/Pottier01}
Fran{\c{c}}ois Pottier.
\newblock Simplifying subtyping constraints: {A} theory.
\newblock {\em Inf. Comput.}, 170(2):153--183, 2001.
\newblock URL: \url{https://doi.org/10.1006/inco.2001.2963}, \href {https://doi.org/10.1006/INCO.2001.2963} {\path{doi:10.1006/INCO.2001.2963}}.

\bibitem[PP03]{DBLP:journals/acs/PlotkinP03}
Gordon~D. Plotkin and John Power.
\newblock Algebraic operations and generic effects.
\newblock {\em Appl. Categorical Struct.}, 11(1):69--94, 2003.
\newblock \href {https://doi.org/10.1023/A:1023064908962} {\path{doi:10.1023/A:1023064908962}}.

\bibitem[PP13]{DBLP:journals/corr/PlotkinP13}
Gordon~D. Plotkin and Matija Pretnar.
\newblock Handling algebraic effects.
\newblock {\em Log. Methods Comput. Sci.}, 9(4), 2013.
\newblock \href {https://doi.org/10.2168/LMCS-9(4:23)2013} {\path{doi:10.2168/LMCS-9(4:23)2013}}.

\bibitem[Pre14]{DBLP:journals/corr/Pretnar13}
Matija Pretnar.
\newblock Inferring algebraic effects.
\newblock {\em Log. Methods Comput. Sci.}, 10(3), 2014.
\newblock \href {https://doi.org/10.2168/LMCS-10(3:21)2014} {\path{doi:10.2168/LMCS-10(3:21)2014}}.

\bibitem[Pre15]{DBLP:journals/entcs/Pretnar15}
Matija Pretnar.
\newblock An introduction to algebraic effects and handlers. invited tutorial paper.
\newblock In Dan~R. Ghica, editor, {\em The 31st Conference on the Mathematical Foundations of Programming Semantics, {MFPS} 2015, Nijmegen, The Netherlands, June 22-25, 2015}, volume 319 of {\em Electronic Notes in Theoretical Computer Science}, pages 19--35. Elsevier, 2015.
\newblock URL: \url{https://doi.org/10.1016/j.entcs.2015.12.003}, \href {https://doi.org/10.1016/J.ENTCS.2015.12.003} {\path{doi:10.1016/J.ENTCS.2015.12.003}}.

\bibitem[SBMO22]{DBLP:conf/pldi/SchusterB0O22}
Philipp Schuster, Jonathan~Immanuel Brachth{\"{a}}user, Marius M{\"{u}}ller, and Klaus Ostermann.
\newblock A typed continuation-passing translation for lexical effect handlers.
\newblock In Ranjit Jhala and Isil Dillig, editors, {\em {PLDI} '22: 43rd {ACM} {SIGPLAN} International Conference on Programming Language Design and Implementation, San Diego, CA, USA, June 13 - 17, 2022}, pages 566--579. {ACM}, 2022.
\newblock \href {https://doi.org/10.1145/3519939.3523710} {\path{doi:10.1145/3519939.3523710}}.

\bibitem[SBO20]{DBLP:journals/pacmpl/SchusterBO20}
Philipp Schuster, Jonathan~Immanuel Brachth{\"{a}}user, and Klaus Ostermann.
\newblock Compiling effect handlers in capability-passing style.
\newblock {\em Proc. {ACM} Program. Lang.}, 4({ICFP}):93:1--93:28, 2020.
\newblock \href {https://doi.org/10.1145/3408975} {\path{doi:10.1145/3408975}}.

\bibitem[SDW{\etalchar{+}}21]{DBLP:conf/pldi/Sivaramakrishnan21}
K.~C. Sivaramakrishnan, Stephen Dolan, Leo White, Tom Kelly, Sadiq Jaffer, and Anil Madhavapeddy.
\newblock Retrofitting effect handlers onto ocaml.
\newblock In Stephen~N. Freund and Eran Yahav, editors, {\em {PLDI} '21: 42nd {ACM} {SIGPLAN} International Conference on Programming Language Design and Implementation, Virtual Event, Canada, June 20-25, 2021}, pages 206--221. {ACM}, 2021.
\newblock \href {https://doi.org/10.1145/3453483.3454039} {\path{doi:10.1145/3453483.3454039}}.

\bibitem[Sim03]{DBLP:conf/aplas/Simonet03}
Vincent Simonet.
\newblock Type inference with structural subtyping: {A} faithful formalization of an efficient constraint solver.
\newblock In Atsushi Ohori, editor, {\em Programming Languages and Systems, First Asian Symposium, {APLAS} 2003, Beijing, China, November 27-29, 2003, Proceedings}, volume 2895 of {\em Lecture Notes in Computer Science}, pages 283--302. Springer, 2003.
\newblock \href {https://doi.org/10.1007/978-3-540-40018-9_19} {\path{doi:10.1007/978-3-540-40018-9_19}}.

\bibitem[SKPS18]{DBLP:conf/esop/SalehKPS18}
Amr~Hany Saleh, Georgios Karachalias, Matija Pretnar, and Tom Schrijvers.
\newblock Explicit effect subtyping.
\newblock In Amal Ahmed, editor, {\em Programming Languages and Systems - 27th European Symposium on Programming, {ESOP} 2018, Held as Part of the European Joint Conferences on Theory and Practice of Software, {ETAPS} 2018, Thessaloniki, Greece, April 14-20, 2018, Proceedings}, volume 10801 of {\em Lecture Notes in Computer Science}, pages 327--354. Springer, 2018.
\newblock \href {https://doi.org/10.1007/978-3-319-89884-1_12} {\path{doi:10.1007/978-3-319-89884-1_12}}.

\bibitem[TS96]{DBLP:conf/sas/TrifonovS96}
Valery Trifonov and Scott~F. Smith.
\newblock Subtyping constrained types.
\newblock In Radhia Cousot and David~A. Schmidt, editors, {\em Static Analysis, Third International Symposium, SAS'96, Aachen, Germany, September 24-26, 1996, Proceedings}, volume 1145 of {\em Lecture Notes in Computer Science}, pages 349--365. Springer, 1996.
\newblock \href {https://doi.org/10.1007/3-540-61739-6_52} {\path{doi:10.1007/3-540-61739-6_52}}.

\bibitem[VJS10]{DBLP:conf/tldi/VytiniotisJS10}
Dimitrios Vytiniotis, Simon L.~Peyton Jones, and Tom Schrijvers.
\newblock Let should not be generalized.
\newblock In Andrew Kennedy and Nick Benton, editors, {\em Proceedings of {TLDI} 2010: 2010 {ACM} {SIGPLAN} International Workshop on Types in Languages Design and Implementation, Madrid, Spain, January 23, 2010}, pages 39--50. {ACM}, 2010.
\newblock \href {https://doi.org/10.1145/1708016.1708023} {\path{doi:10.1145/1708016.1708023}}.

\bibitem[WJ99]{DBLP:conf/popl/WansbroughJ99}
Keith Wansbrough and Simon L.~Peyton Jones.
\newblock Once upon a polymorphic type.
\newblock In Andrew~W. Appel and Alex Aiken, editors, {\em {POPL} '99, Proceedings of the 26th {ACM} {SIGPLAN-SIGACT} Symposium on Principles of Programming Languages, San Antonio, TX, USA, January 20-22, 1999}, pages 15--28. {ACM}, 1999.
\newblock \href {https://doi.org/10.1145/292540.292545} {\path{doi:10.1145/292540.292545}}.

\bibitem[XCIL22]{DBLP:journals/pacmpl/XieCIL22}
Ningning Xie, Youyou Cong, Kazuki Ikemori, and Daan Leijen.
\newblock First-class names for effect handlers.
\newblock {\em Proc. {ACM} Program. Lang.}, 6({OOPSLA2}):30--59, 2022.
\newblock \href {https://doi.org/10.1145/3563289} {\path{doi:10.1145/3563289}}.

\bibitem[XL21]{DBLP:journals/pacmpl/XieL21}
Ningning Xie and Daan Leijen.
\newblock Generalized evidence passing for effect handlers: efficient compilation of effect handlers to {C}.
\newblock {\em Proc. {ACM} Program. Lang.}, 5({ICFP}):1--30, 2021.
\newblock \href {https://doi.org/10.1145/3473576} {\path{doi:10.1145/3473576}}.

\bibitem[YSI24]{DBLP:journals/pacmpl/YoshiokaSI24}
Takuma Yoshioka, Taro Sekiyama, and Atsushi Igarashi.
\newblock Abstracting effect systems for algebraic effect handlers.
\newblock {\em Proc. {ACM} Program. Lang.}, 8({ICFP}):455--484, 2024.
\newblock \href {https://doi.org/10.1145/3674641} {\path{doi:10.1145/3674641}}.

\end{thebibliography}

\clearpage
\appendix
\section{Well-formedness rules}
\label{sec:wellFormedness}
\begin{figure}[h]
  \begin{mathpar}
    \hilite{\tCtxn{s}{\ctx_s}}\hfill \\ 
    \infer{{}}{\tCtxn{s}{\emptyctx}}
    \and
    \infer{
      \tCtxn{s}{\ctx_s}
    }{
      \tCtxn{s}{\ctx_s, \skelvar}
    }
    \\ 
    \hilite{\tSty{\ctx_s}{\skel}} \ \textrm{assuming} \ \tCtxn{s}{\ctx_s} \hfill \\
    \infer{\skelvar \in \ctx_s}{
      \tSty{\ctx_s}{\skelvar}
    }

    \infer{
    }{\tSty{\ctx_s}{\tyUnit}}

    \infer{\tSty{\ctx_s}{\skel_1} \\ \tSty{\ctx_s}{\skel_2}}{\tSty{\ctx_s}{\skel_1 \to \skel_2}}
    \\
    \hilite{\tCtxn{d}{\ctx_d}}\hfill \\ 
    \infer{{}}{\tCtxn{d}{\emptyctx}}
    \and
    \infer{
      \tCtxn{d}{\ctx_d}
    }{
      \tCtxn{d}{\ctx_d, \dirtvar}
    }
    \\ 
    \hilite{\tDty{\ctx_d}{\dirt}}\ \textrm{assuming} \ \tCtxn{d}{\ctx_d} \hfill \\
    \infer{\dirtvar \in \ctx_d}{
      \tDty{\ctx_d}{\dirtvar}
    }

    \infer{
    }{\tDty{\ctx_d}{\emptyset}}

    \infer{
      (\op : \vty_1 \to \vty_2) \in \Sigma \\ \tDty{\ctx_d}{\dirt}
    }{\tDty{\ctx_d}{\set{\op} \cup \dirt}}
    \\ 
    \hilite{\tCtxn[\ctx_s]{t}{\ctx_t}}\hfill \\
    \infer{}{\tCtxn[\ctx_s]{t}{\emptyctx}}
    \and
    \infer{
      \tCtxn[\ctx_s]{t}{\ctx_t} \\ \tSty{\ctx_s}{\skel}
    }{
      \tCtxn[\ctx_s]{t}{\ctx_t, (\tyvar : \skel)}
    }

  \end{mathpar}
  \caption{Well-formedness rules for contexts, skeletons, and dirts}
  \label{fig:wellformedness-boring}
\end{figure}

\begin{figure}
  \begin{mathpar}
    \hilite{\tCtxn[\ctx_d]{dc}{\ctx_{dc}}}\ \textrm{assuming} \ \tCtxn{d}{\ctx_d}\hfill \\
    \infer{
      \tCtxn[\ctx_d]{dc}{\ctx_{dc}} \\
      \tDty{\ctx_d}{\dirt} \\
      \tDty{\ctx_d}{\dirt'}
    }{
      \tCtxn[\ctx_d]{dc}{\ctx_{dc}, (\dirtcoervar : \dirt \le \dirt')}
    }
    \\ 
    \hilite{\tCtxn[\ctx_s;\ctx_d;\ctx_t]{tc}{\ctx_{tc}}}\ \textrm{assuming} \ \tCtxn{t}{\ctx_t}, \ \tCtxn{d}{\ctx_d} \ \textrm{and} \ \tCtxn[\ctx_s]{d}{\ctx_d}\hfill \\
    \infer{
      \tCtxn[\ctx_s;\ctx_d;\ctx_t]{tc}{\ctx_{tc}} \\
      \tVty{\ctx_s;\ctx_d;\ctx_t}{\vty}{\skel} \\
      \tVty{\ctx_s;\ctx_d;\ctx_t}{\vty'}{\skel}
    }{
      \tCtxn[\ctx_s;\ctx_d;\ctx_t]{tc}{\ctx_{tc}, (\tycoervar : \vty \le \vty')}
    }
  \end{mathpar}
  \caption{Well-formedness rules for coercion contexts}
  \label{fig:wellformedness-coercion-contexts}
\end{figure}

\begin{figure}
  \begin{mathpar}
    \hilite{\tTyCtx{\ctx}{\tyCtx}}\ \textrm{assuming} \ \tCtx{\ctx}\hfill \\

    \infer{{}}{
      \tTyCtx{\ctx}{\emptyctx}
    }

    \infer{\tTyCtx{\ctx}{\tyCtx} \\ \tVty{\ctx}{\vty}{\skel} }{
      \tTyCtx{\ctx}{\tyCtx, (x : \vty)}
    }

  \end{mathpar}
  \caption{Typing context well-formedness}
  \label{tab:contextWellFormedness}
\end{figure}

\end{document}